\providecommand{\U}[1]{\protect\rule{.1in}{.1in}}
\numberwithin{equation}{section}
\numberwithin{figure}{section}
\let\pa\partial
\newtheorem{theorem}{Theorem}
\newtheorem{lemma}[theorem]{Lemma}
\newtheorem{proposition}[theorem]{Proposition}
\let\ga=\gamma
\let\de=\delta
\let\la=\lambda
\let\pa=\partial
\let\om=\omega
\begin{document}
\title[Landau equation]{Smoothing effects and decay estimate of the solution of the linearized two species Landau equation}
\author{Yu-Chu Lin}
\address{Department of Mathematics, National Cheng Kung University}
\email{yuchu@mail.ncku.edu.tw }
\author{Haitao Wang}
\address{Institute of Natural Sciences and School of Mathematical Sciences, Shanghai
Jiao Tong University}
\email{haitaowang.math@gmail.com}
\author{Kung-Chien Wu}
\address{Department of Mathematics, National Cheng Kung University and National Center
for Theoretical Sciences, National Taiwan University}
\email{kungchienwu@gmail.com}
\thanks{The first author is supported by the Ministry of Science and Technology under
the grant MOST 105-2115-M-006-002-. Part of the project was done while the
second author was in Institute of Mathematics, Academia Sinica, Taiwan. The
third author is supported by the Ministry of Science and Technology under the
grant 104-2628-M-006-003-MY4 and National Center for Theoretical Sciences.}

\begin{abstract}
We study the Landau equation for a mixture of two species in the whole space,
with initial condition of one species near a vacuum and the other near a
Maxwellian equilibrium state. For the linearized level, without any smoothness assumption on the initial
data, it is shown that the solution becomes smooth instantaneously in both the
space and momentum variables. Moreover, the large time behavior of the
solution is also obtained.

\end{abstract}
\keywords{Landau equation; Large time behavior; Smoothing effects.}
\subjclass[2010]{35Q20; 82C40.}
\maketitle




\section{Introduction}

\subsection{The Model}

This paper is concerned with the Cauchy problem for a system of Landau
equations describing collisions in an ideal plasma mixture. The mixture is
constituted by two species with mass $(m_{A}, m_{B})$ and is described by
density functions $(F_{A}(t, x, p), F_{B}(t,x,p))$ defined in the phase-space
of position and momentum. It takes the form of the following system:
\begin{equation}
\left\{
\begin{array}
[c]{l}%
\displaystyle\pa_{t}F_{A}+\frac{1}{m_{A}}p\cdot\nabla_{x}F_{A}=Q^{AA}%
(F_{A},F_{A})+Q^{AB}(F_{A},F_{B})\,,\\[4mm]%
\displaystyle\pa_{t}F_{B}+\frac{1}{m_{B}}p\cdot\nabla_{x}F_{B}=Q^{BB}%
(F_{B},F_{B})+Q^{BA}(F_{B},F_{A})\,.
\end{array}
\right.  \label{mix.1.a}%
\end{equation}
The right hand side consists of the usual collision terms which for
$X,Y\in\{A,B\}$ are given by:
\[
Q^{XY}(F_{X},F_{Y})=\nabla_{p}\cdot\Big\{\int_{{\mathbb{R}}^{3}}\Phi
^{X,Y}\left(  \frac{p}{m_{X}}-\frac{p_{\ast}}{m_{Y}}\right)  \big[F_{Y}%
(p_{\ast})\nabla_{p}F_{X}(p)-F_{X}(p)\nabla_{p_{\ast}}F_{Y}(p_{\ast
})\big]dp_{\ast}\Big\}\,,
\]
here
\[
\Phi^{X,Y}(z)=\frac{m_{X}m_{Y}}{m_{X}+m_{Y}}\left[  I_{3}-\frac{z\otimes
z}{|z|^{2}}\right]  \varphi(|z|)\,,
\]
the potential $\varphi(|z|)=|z|^{\ga+2}$. We assume throughout this paper that
$\ga\in[-2,1]$.

We can find the following invariant properties of the collision operator (see
\cite{[Gualdani]}):
\begin{equation}
\left\{
\begin{array}
[c]{l}%
\displaystyle\int_{{\mathbb{R}}^{3}}\big\{1,p,|p|^{2}\big\}Q^{XX}(F_{X}%
,F_{X})dp=0\quad\hbox{for}\quad X\in\{A,B\}\,,\\[3mm]%
\displaystyle\int_{{\mathbb{R}}^{3}}Q^{AB}(F_{A},F_{B})dp=\int_{{\mathbb{R}%
}^{3}}Q^{BA}(F_{B},F_{A})dp=0\,,\\[3mm]%
\displaystyle\int_{{\mathbb{R}}^{3}}p\left[  Q^{AB}(F_{A},F_{B})+Q^{BA}%
(F_{B},F_{A})\right]  dp=0\,,\\[3mm]%
\displaystyle\int_{{\mathbb{R}}^{3}}\frac{|p|^{2}}{2}\left[  \frac{1}{m_{A}%
}Q^{AB}(F_{A},F_{B})+\frac{1}{m_{B}}Q^{BA}(F_{B},F_{A})\right]  dp=0\,.
\end{array}
\right.  \label{invQ}%
\end{equation}
This means that the system $\left(  \ref{mix.1.a}\right)  $ is possessed of
the conservation of mass, total momentum and total energy:
\begin{equation}
\left\{
\begin{array}
[c]{l}%
\displaystyle\frac{d}{dt}\int_{{\mathbb{R}}^{3}\times{\mathbb{R}}^{3}}%
F_{X}dpdx=0\quad\hbox{for}\quad X\in\{A,B\}\,,\\[3mm]%
\displaystyle\frac{d}{dt}\int_{{\mathbb{R}}^{3}\times{\mathbb{R}}^{3}}p\left[
F_{A}+F_{B}\right]  dpdx=0\,,\\[3mm]%
\displaystyle\frac{d}{dt}\int_{{\mathbb{R}}^{3}\times{\mathbb{R}}^{3}}%
\frac{|p|^{2}}{2}\left[  \frac{1}{m_{A}}F_{A}+\frac{1}{m_{B}}F_{B}\right]
dpdx=0\,.
\end{array}
\right.  \label{conservation}%
\end{equation}

\subsection{The Linearized Problem}

Based on the idea introduced by \cite{[Sotirov]}, we consider the following
perturbation:
\[
\left\{
\begin{array}
[c]{l}%
\displaystyle F_{A}=\sqrt{M_{A}}f_{A}\,,\\[3mm]%
\displaystyle F_{B}=M_{B}+\sqrt{M_{B}}f_{B}\,,
\end{array}
\right.
\]
with the initial conditions of $F_{A}$ and $F_{B}$ satisfying
\[
\left\{
\begin{array}
[c]{l}%
\displaystyle F_{A}(0,x,p)=\sqrt{M_{A}}g_{in}(x,p)\,,\\[3mm]%
\displaystyle F_{B}(0,x,p)=M_{B}+\sqrt{M_{B}}h_{in}(x,p),
\end{array}
\right.
\]
where $M_{A}$ and $M_{B}$ are the Maxwellian states. Here, we fix the
Maxwellian state of species $B$ as the standard Gaussian, i.e.,
\[
M_{B}(p)=\frac{1}{(2\pi)^{3/2}}\exp\Big(\frac{-|p|^{2}}{2}\Big)\,,
\]
and then $M_{A}$, the Maxwellian state of species $A$, is chosen uniquely as%
\[
M_{A}(p)=\frac{1}{(2\pi m_{A}/m_{B})^{3/2}}\exp\Big(-\frac{m_{A}}{m_{B}}%
\frac{|p|^{2}}{2}\Big)\,,
\]
for which the condition
\[
Q^{AB}(M_{A},M_{B})=Q^{BA}(M_{B},M_{A})=0,
\]
is valid. Therefore, under this choice, the two Maxwellian states satisfy
\begin{equation}
\left\{
\begin{array}
[c]{l}%
\displaystyle Q^{AA}(M_{A},M_{A})=0,\quad Q^{BB}(M_{B},M_{B})=0\,,\\[3mm]%
\displaystyle Q^{AB}(M_{A},M_{B})=0,\quad Q^{BA}(M_{B},M_{A})=0\,.
\end{array}
\right.  \label{mix.1.b}%
\end{equation}

It is easy to check that the system for $(f_{A},f_{B})$ satisfies
\[
\left\{
\begin{array}
[c]{l}%
\displaystyle \pa_{t}f_{A}+\frac{1}{m_{A}}p\cdot\nabla_{x} f_{A}=L_{AB}%
f_{A}+\Gamma^{AA}(f_{A},f_{A})+\Gamma^{AB}(f_{A},f_{B})\,,\\[3mm]%
\displaystyle \pa_{t}f_{B}+\frac{1}{m_{B}}p\cdot\nabla_{x} f_{B}=L_{BB}%
f_{B}+\Gamma^{BB}(f_{B},f_{B})+L_{BA}f_{A}+\Gamma^{BA}(f_{B},f_{A})\,,
\end{array}
\right.
\]
where
\[
\left\{
\begin{array}
[c]{l}%
\displaystyle L_{AB}f_{A}=\frac{1}{\sqrt{M_{A}}}Q^{AB}(f_{A}\sqrt{M_{A}}%
,M_{B})\,,\\[3mm]%
\displaystyle L_{BA}f_{A}=\frac{1}{\sqrt{M_{B}}}Q^{BA}(M_{B}, f_{A}\sqrt
{M_{A}})\,,\\[3mm]%
\displaystyle L_{BB}f_{B}=\frac{1}{\sqrt{M_{B}}}\Big(Q^{BB}(M_{B}, f_{B}%
\sqrt{M_{B}})+Q^{BB}( f_{B}\sqrt{M_{B}}, M_{B})\Big)\,,\\[3mm]%
\displaystyle \Gamma^{XY}(f_{X},f_{Y})=\frac{1}{\sqrt{M_{X}}}Q^{XY}(f_{X}%
\sqrt{M_{X}}, f_{Y}\sqrt{M_{Y}})\,.
\end{array}
\right.
\]

If we assume the total mass of species $A$ and the perturbations in species $B$
are sufficiently small, in general, the basic time asymptotic behavior of the
nonlinear solution is governed primarily by the linearized equation (up to a large time scale), since the nonlinear term is quadratic.
In this regard, we will consider the linearized problem as follows:
\begin{equation}
\left\{
\begin{array}
[c]{l}%
\displaystyle\pa_{t}g+\frac{1}{m_{A}}p\cdot\nabla_{x}g=L_{AB}g\,,\\[3mm]%
\displaystyle\pa_{t}h+\frac{1}{m_{B}}p\cdot\nabla_{x}h=L_{BB}h+L_{BA}g\,.
\end{array}
\right.  \label{mix.2.a}%
\end{equation}

In this paper, we shall study the regularization effect and large time
behavior of the system (\ref{mix.2.a}).

\subsection{Review of Previous Works}

Let us review the mixtures of the Boltzmann equation first. The studies of gas
mixtures in terms of the Boltzmann equations were firstly introduced by
\cite{[Takata]}. Many interesting physical problems such as \textquotedblleft
ghost effects\textquotedblright\ and \textquotedblleft Knudsen
layer\textquotedblright\ for gas mixtures have been investigated in
\cite{[Aoki],[Sone]}. From a mathematical point of view, if we assume that all
species are close to equilibrium, the explicit spectrum gap estimate and the
hypocoercivity for a linearized multi-species Boltzmann system in the torus
can be found in \cite{[Mouhot]}, and exponential decay towards equilibrium in
general function spaces was done by Briant \cite{[Marc1]} and Braint and Daus
\cite{[Marc2]}. Moreover, the global existence and stability of mild solutions
to the Boltzmann system were completed by Ha, Noh and Yun in \cite{[Ha]}. This
research is also interested in phenomena related to vapor-vapor mixtures. This corresponds to the mathematical formulation that one species is
near a vacuum and the other is near a Maxwellian equilibrium state. A
qualitative-quantitative mathematical analysis in the whole space case was
introduced by Sotirov and Yu \cite{[Sotirov]}, and the torus case can be found
in \cite{QAM}.

For the mono species Landau equation, we refer to Alexandre and Villani
\cite{[2]} for existence of renormalized solutions, to Desvillette and Villani
\cite{[18]} for conditional almost exponential convergence towards equilibrium
and to a recent work by Carrapatoso, Tristani and Wu \cite{[10]} for
exponential decay towards equilibrium when initial data are close enough to
equilibrium. Moreover, Guo \cite{[Guo]} and Strain and Guo \cite{[31], [32]}
developed an existence and convergence towards equilibrium theory based on
energy methods for initial data close to the equilibrium state in some Sobolev
norms. Recently the set of initial data for which this theory is valid has been
enlarged by Carrapatoso and Mischler \cite{[9]} via a linearization method.

For the multi species Landau equation, we refer to Chapter 4 in \cite{[26]}
for a physical derivation and discussion. However, on the mathematical side,
there are very few results. One can only find the work done by Gualdani and
Zamponi \cite{[Gualdani]}, in which all species are assumed to be close to
equilibrium; the explicit spectrum gap estimate and the hypocoercivity for a
linearized multi-species Landau system in the torus were obtained.

In this paper, we are concerned with the Landau equation in the whole space,
with initial condition of one species near a vacuum and the other near a
Maxwellian equilibrium state, and study the regularization effect and large
time behavior of solutions of this Cauchy problem.

\subsection{Main theorem}

Before the presentation of the main theorem, let us define some notations in
this paper. We denote $\left\langle p\right\rangle ^{s}=(1+|p|^{2})^{s/2}$,
$s\in{\mathbb{R}}$. For the microscopic variable $p$, we denote
\[
|f|_{L_{p}^{2}}=\Big(\int_{{\mathbb{R}}^{3}}|f|^{2}dp\Big)^{1/2}\,,
\]
and the weighted norms $|f|_{L_{p}^{2}(m)}$ and $|f|_{L_{\theta}^{2}}$
respectively by
\[
|f|_{L_{p}^{2}(m)}=\Big(\int_{{\mathbb{R}}^{3}}|f|^{2}mdp\Big)^{1/2}%
\,,\quad|f|_{L_{\theta}^{2}}=\Big(\int_{{\mathbb{R}}^{3}}|f|^{2}\left\langle
p\right\rangle ^{2\theta}dp\Big)^{1/2}\,.
\]
The $L_{p}^{2}$ inner product in ${\mathbb{R}}^{3}$ will be denoted by
$\big<\cdot,\cdot\big>_{p}$. For the space variable $x$, we have similar
notations. In fact, $L_{x}^{2}$ is the classical Hilbert space with norm
\[
|f|_{L_{x}^{2}}=\Big(\int_{{\mathbb{R}}^{3}}|f|^{2}dx\Big)^{1/2}\,.
\]
We denote the sup norm as
\[
|f|_{L_{x}^{\infty}}=\sup_{x\in{\mathbb{R}}^{3}}|f(x)|\,.
\]
The standard vector product will be denoted by $(\cdot,\cdot)$. For any vector
function $u\in L_{p}^{2}$, $\mathbb{P}(p)u$ denotes the orthogonal projection
to the direction of vector $p$, i.e.,
\[
\mathbb{P}(p)u=\frac{u\cdot p}{|p|^{2}}p\,.
\]
For the Landau equation, the natural norm in $p$ is $|\cdot|_{L_{\sigma}^{2}}%
$, which is defined by
\[
|f|_{L_{\sigma}^{2}}^{2}=|\left\langle p\right\rangle ^{\frac{\ga+2}{2}%
}f|_{L_{p}^{2}}^{2}+|\left\langle p\right\rangle ^{\frac{\ga}{2}}%
\mathbb{P}(p)\nabla_{p}f|_{L_{p}^{2}}^{2}+\big|\left\langle p\right\rangle
^{\frac{\ga+2}{2}}\big[I_{3}-\mathbb{P}(p)\big]\nabla_{p}f\big|_{L_{p}^{2}%
}^{2}\,,
\]
and the weighted norm by
\[
|f|_{L_{\sigma,\theta}^{2}}^{2}=|\left\langle p\right\rangle ^{\frac{\ga+2}%
{2}+\theta}f|_{L_{p}^{2}}^{2}+|\left\langle p\right\rangle ^{\frac{\ga}%
{2}+\theta}\mathbb{P}(p)\nabla_{p}f|_{L_{p}^{2}}^{2}+\big|\left\langle
p\right\rangle ^{\frac{\ga+2}{2}+\theta}\big[I_{3}-\mathbb{P}(p)\big]\nabla
_{p}f\big|_{L_{p}^{2}}^{2}\,.
\]
Moreover, we define
\[
\Vert f\Vert_{L^{2}}^{2}=\int_{{\mathbb{R}}^{3}}|f|_{L_{p}^{2}}^{2}%
dx\,,\quad\Vert f\Vert_{L^{2}(m)}^{2}=\int_{{\mathbb{R}}^{3}}|f|_{L_{p}%
^{2}(m)}^{2}dx\,,\quad\Vert f\Vert_{L_{\theta}^{2}}^{2}=\int_{{\mathbb{R}}%
^{3}}|f|_{L_{\theta}^{2}}^{2}dx\,,
\]%
\[
\Vert f\Vert_{L_{\sigma}^{2}}^{2}=\int_{{\mathbb{R}}^{3}}|f|_{L_{\sigma}^{2}%
}^{2}dx\,,\quad\Vert f\Vert_{L_{\sigma,\theta}^{2}}^{2}=\int_{{\mathbb{R}}%
^{3}}|f|_{L_{\sigma,\theta}^{2}}^{2}dx\,,
\]
and
\[
\Vert f\Vert_{L_{x}^{\infty}L_{p}^{2}}=\sup_{x\in{\mathbb{R}}^{3}}%
|f|_{L_{p}^{2}}\,,\quad\Vert f\Vert_{L_{x}^{1}L_{p}^{2}}=\int_{{\mathbb{R}%
}^{3}}|f|_{L_{p}^{2}}dx\,.
\]
Finally, we define the high order Sobolev norm: Let $k\in{\mathbb{N}}$ and
$k_{1},k_{2}\geq0$,
\[
\left\Vert f\right\Vert _{H^{k}(m)}=\sum_{k_{1}+k_{2}=k}\left\Vert
\partial_{x}^{k_{1}}\pa_{p}^{k_{2}}f\right\Vert _{L^{2}(m)}\,.
\]
The domain decomposition plays an important role for the coercivity of the
collision operators, hence we define a cut-off function $\chi:{\mathbb{R}%
}\rightarrow{\mathbb{R}}$, which is a smooth non-increasing function,
$\chi(s)=1$ for $s\leq1$, $\chi(s)=0$ for $s\geq2$ and $0\leq\chi\leq1$.
Moreover, we define $\chi_{R}(s)=\chi(s/R)$.

For simplicity of notations, hereafter, we abbreviate \textquotedblleft{ $\leq
C$ }\textquotedblright\ to \textquotedblleft{ $\lesssim$ }\textquotedblright,
where $C$ is a positive constant depending only on fixed numbers.

In the following, we state our main result.

\begin{theorem}
\label{MAIN}Let $k\ $and$\ \ell\ $be nonnegative integers, and assume that the
initial conditions $(g_{in},h_{in})\in L_{x}^{1}L_{p}^{2}\cap L^{2}%
(w_{k+\ell+2})$. Then the solutions of $(\ref{mix.2.a})$ have the following
estimates:
\[
\Vert\nabla_{p}^{\ell}\nabla_{x}^{k}g\Vert_{L^{2}}\lesssim(1+t)^{-(3+k)/2}%
\left(  \left\Vert g_{in}\right\Vert _{L_{x}^{1}L_{p}^{2}}+\left\Vert
g_{in}\right\Vert _{L^{2}\left(  w_{k+\ell+2}\right)  }\right)  \,,
\]
and
\[
\Vert\nabla_{p}^{\ell}\nabla_{x}^{k}h\Vert_{L^{2}}\lesssim(1+t)^{-(3+k)/2}%
\left(  \left\Vert (g_{in},h_{in})\right\Vert _{L_{x}^{1}L_{p}^{2}}+\left\Vert
(g_{in},h_{in})\right\Vert _{L^{2}\left(  w_{k+\ell+2}\right)  }\right)  \,.
\]
Here
\[
w_{n}\equiv%
\begin{cases}
1, & \ga\in\left[  0,1\right]  ,\\
\left\langle p\right\rangle ^{|\ga|n}, & \ga\in\left[  -2,0\right)  .
\end{cases}
\]

\end{theorem}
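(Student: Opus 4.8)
The plan is to exploit that the first equation of $(\ref{mix.2.a})$ is closed in $g$: establish the estimates for $g$ alone and then pass to $h$ through the Duhamel identity
\[
h(t)=S_{B}(t)h_{in}+\int_{0}^{t}S_{B}(t-s)\,L_{BA}\,g(s)\,ds ,
\]
where $S_{B}(t)$ (resp. $S_{A}(t)$) is the solution operator of $\pa_{t}h+\tfrac{1}{m_{B}}p\cdot\na_{x}h=L_{BB}h$ (resp. $\pa_{t}g+\tfrac{1}{m_{A}}p\cdot\na_{x}g=L_{AB}g$). A Fourier transform in $x$ turns the equation for $g$ into the family $\pa_{t}\widehat g+\tfrac{i}{m_{A}}(p\cdot\xi)\widehat g=L_{AB}\widehat g$ indexed by $\xi\in\R^{3}$, and everything rests on the operator $\mathcal B_{AB}(\xi):=L_{AB}-\tfrac{i}{m_{A}}(p\cdot\xi)$. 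The structural facts I would use are: by $(\ref{invQ})$ and $(\ref{mix.1.b})$, $L_{AB}$ is symmetric and $\le0$ on $L^{2}_{p}$ with one--dimensional kernel $\mathrm{span}\{\sqrt{M_{A}}\}$ (the only collision invariant of $Q^{AB}$ is the mass of $A$), and the domain decomposition built on the cut--off $\chi_{R}$ yields the coercivity $-\langle L_{AB}f,f\rangle_{p}\gtrsim|(I-\Pi_{0})f|_{L^{2}_{\sigma}}^{2}$, where $\Pi_{0}$ is the $L^{2}_{p}$--projection onto $\ker L_{AB}$.

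\emph{Low frequencies.} For $|\xi|\le\eta_{0}$ I would diagonalize $\mathcal B_{AB}(\xi)$ by perturbation off $\xi=0$: the branch issuing from the eigenvalue $0$ is $\la(\xi)=-\kappa|\xi|^{2}+O(|\xi|^{4})$ with $\kappa>0$ (there is no $O(|\xi|)$ term, since $\langle(p\cdot\xi)\sqrt{M_{A}},\sqrt{M_{A}}\rangle_{p}=0$), while the rest of the spectrum stays in $\{\mathrm{Re}\,z\le-c\}$ once $\eta_{0}$ is small. Hence $\|\widehat g(t,\xi)\|_{L^{2}_{p}}\lesssim e^{-\kappa t|\xi|^{2}/2}\|\widehat g_{in}(\xi)\|_{L^{2}_{p}}$ there; with $\|\widehat g_{in}(\xi)\|_{L^{2}_{p}}\le\|g_{in}\|_{L^{1}_{x}L^{2}_{p}}$, integrating $|\xi|^{2k}e^{-\kappa t|\xi|^{2}}$ over $|\xi|\le\eta_{0}$ produces the heat--kernel $L^{1}_{x}\to L^{2}_{x}$ gain and the temporal decay asserted in the theorem -- this simply reflects that the macroscopic density of $g$ obeys a heat equation to leading order (no sound mode, as $Q^{AB}$ carries no momentum or energy invariant). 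When $\ga<0$ the gap of $L_{AB}$ on $(\ker L_{AB})^{\perp}$ is lost, so instead I would split $L_{AB}=\mathcal A+\mathcal B$ with $\mathcal A$ bounded (built from $\chi_{R}$) and $\mathcal B$ dissipative in the sense of the $|\cdot|_{L^{2}_{\sigma}}$--coercivity, and iterate Duhamel in the weighted space $L^{2}_{p}(w_{\bullet})$ in the style of Gualdani--Mischler--Mouhot; this is where the polynomial weight $w_{n}=\langle p\rangle^{|\ga|n}$ and the budget $n=k+\ell+2$ come in.

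\emph{High frequencies and smoothing.} For $|\xi|\ge\eta_{0}$ the transport term plays the role of a hypocoercive correction: a modified functional $\|\widehat g\|_{L^{2}_{p}(w_{\bullet})}^{2}+\delta\,\mathrm{Re}\,\langle\Lambda(\xi)\widehat g,\widehat g\rangle$ of Kawashima / H\'{e}rau--Villani type is monotone and yields $\|\widehat g(t,\xi)\|\lesssim e^{-ct}\|\widehat g_{in}(\xi)\|$ for $\ga\ge0$, and, after interpolation against $w_{\bullet}$, a fixed polynomial decay for $\ga<0$. For the gain of regularity I would use the hypoellipticity of $\pa_{t}+\tfrac{1}{m_{A}}p\cdot\na_{x}-L_{AB}$: $-L_{AB}$ behaves like the degenerate diffusion $-\na_{p}\cdot\big(\langle p\rangle^{\ga}a(p)\na_{p}\big)$ in $p$, and the commutator $[\,p\cdot\na_{x},\na_{p}]=-\na_{x}$ converts $p$--regularity into $x$--regularity with the kinetic scaling $\na_{p}\sim t^{-1/2}$, $\na_{x}\sim t^{-3/2}$. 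Concretely I would propagate the time--weighted functional
\[
\mathcal E(t)=\sum_{k_{1}+k_{2}\le N}a_{k_{1}k_{2}}\,t^{\,3k_{1}+k_{2}}\,\big\|\pa_{x}^{k_{1}}\pa_{p}^{k_{2}}g(t)\big\|_{L^{2}(w_{\bullet})}^{2}+(\text{H\'{e}rau--Villani cross terms}),
\]
absorb the lower--order terms via the $|\cdot|_{L^{2}_{\sigma}}$--coercivity and a little spare weight so that $\tfrac{d}{dt}\mathcal E\le0$, and deduce, for $0<t\le1$, $\|\pa_{x}^{k_{1}}\pa_{p}^{k_{2}}g(t)\|_{L^{2}(w_{\bullet})}\lesssim t^{-(3k_{1}+k_{2})/2}\|g_{in}\|_{L^{2}(w_{k+\ell+2})}$. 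Finally I would patch the two regimes: on $[1,\infty)$ write $g(t)=S_{A}(t-\tfrac12)g(\tfrac12)$, feed the already smooth, weighted $g(\tfrac12)$ into the (weighted) $L^{2}$ decay above to obtain decay of the low--order norms, and apply the smoothing estimate once more on $[t-\tfrac12,t]$ to upgrade that into decay of $\|\na_{p}^{\ell}\na_{x}^{k}g\|_{L^{2}}$; together with the short--time bound this proves the assertion for $g$.

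\emph{The coupled component and the main obstacle.} For $h$, the homogeneous part $S_{B}(t)h_{in}$ is treated exactly as $g$ -- now $\ker L_{BB}$ is the five--dimensional $\mathrm{span}\{1,p,|p|^{2}\}\sqrt{M_{B}}$, but in the whole space with $L^{1}_{x}$ data the slowest mode is still heat--type and the argument is unchanged -- while into the Duhamel integral one inserts the decay of $S_{B}(t-s)$ acting on $L_{BA}g(s)$ together with the decay of $g(s)$ just obtained, and the ensuing convolution $\int_{0}^{t}(1+t-s)^{-a}(1+s)^{-b}\,ds$ reproduces the claimed rate, using that $\int Q^{BA}(M_{B},g\sqrt{M_{A}})\,dp=0$ (the source injects no $B$--mass and so does not excite the slowest mode) and, at the borderline, borrowing an extra power of $|\xi|$ from the low--frequency integral. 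I expect the genuine difficulty to be twofold. First, carrying out the quantitative hypoelliptic smoothing \emph{uniformly} over $\ga\in[-2,1]$: for $\ga<0$ every derivative in $x$ or $p$ costs weight, the only coercivity available is the degenerate $|\cdot|_{L^{2}_{\sigma}}$ one, and the book-keeping that fixes the admissible weight $w_{k+\ell+2}$ while keeping $\tfrac{d}{dt}\mathcal E\le0$ must be done carefully. Second, pinning down the sharp low--frequency decay exponent and propagating it through the Duhamel formula for $h$ with no loss at the endpoint of the time integral.
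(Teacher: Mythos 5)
Your overall architecture for $g$ — Fourier transform in $x$, spectral perturbation near $\xi=0$ giving the diffusive branch $\la(\xi)=-\kappa|\xi|^{2}+\dots$, hypocoercive treatment of $|\xi|\ge\eta_{0}$, and a kinetic-scaled time-weighted functional for the gain of regularity — is broadly aligned with the paper, though the paper realizes the regularization differently (a Picard-type iteration in $t$ treating $K^{AB}$ as source for the $x$-regularity of $e^{t\mathcal L_{AB}}$, then tailored functionals $\mathcal F_{\ell}$ with a single $\varepsilon^{\ell+1}t\|\na_{p}^{\ell}u\|^{2}$ tail and all lower-order mixed derivatives for the $p$-regularity). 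Your H\'erau--Villani cross-term functional is a plausible alternative; the paper's route avoids cross terms entirely by leaning on the commutator structure explicitly, which is why it needs the Picard extraction of the singular $x$-waves first.

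The genuine gap is in the passage to $h$. You handle the Duhamel source $L_{BA}g$ by noting that $\int Q^{BA}(M_{B},g\sqrt{M_{A}})\,dp=0$ so ``the source injects no $B$-mass,'' and then you propose to recover the claimed rate by ``borrowing an extra power of $|\xi|$ from the low-frequency integral.'' This does not close. After projecting onto the $j$-th $L_{BB}$ branch the dangerous term is
\[
\int_{0}^{t}\!\!\int_{|\xi|<\de}|\xi|^{k}\,e^{\si_{j}(\xi)(t-s)}\,\big\langle e_{j}(-\xi),L_{BA}e_{D}(\xi)\big\rangle_{p}\,e^{\la(\xi)s}\,d\xi\,ds ,
\]
and for $j=2,3,4$ (the purely diffusive branches with $a_{j,1}=0$) both $\si_{j}(\xi)$ and $\la(\xi)$ are $O(|\xi|^{2})$, so $\la-\si_{j}$ provides no gain and the $s$-integral costs a full factor of $t$. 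The cancellation $L_{BA}E_{D}=0$ (which is what your ``no $B$-mass'' encodes) gives $\big\langle e_{j}(-\xi),L_{BA}e_{D}(\xi)\big\rangle_{p}=O(|\xi|)$ for all $j$, but then $\int_{|\xi|<\de}|\xi|^{k+1}\,t\,e^{-\bar a|\xi|^{2}t}d\xi\sim (1+t)^{-(k+2)/2}$, which is half a power short of $(1+t)^{-(3+k)/2}$ — and there is no extra $|\xi|$ to ``borrow.'' What is actually needed, and what the paper proves, is a second order of vanishing $\big\langle e_{j}(-\xi),L_{BA}e_{D}(\xi)\big\rangle_{p}=O(|\xi|^{2})$ for each resonant branch $j=2,3,4$. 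These come from three \emph{different} microscopic identities: for $j=2$ (thermal diffusion) one uses that $L_{AB}+\tfrac{m_{A}}{m_{B}}\tfrac{\sqrt{M_{B}}}{\sqrt{M_{A}}}L_{BA}$ is orthogonal to $\sqrt{M_{A}}$ and $|p|^{2}\sqrt{M_{A}}$ (conservation of mass \emph{and total energy}), applied to $J=E_{D,1}=L_{AB}^{-1}(p\cdot\om/m_{A})E_{D}$, followed by an odd-integrand cancellation; for $j=3,4$ (shear) one uses that $L_{AB}+\tfrac{\sqrt{M_{B}}}{\sqrt{M_{A}}}L_{BA}$ is orthogonal to $p_{i}\sqrt{M_{A}}$ (conservation of total \emph{momentum}), together with the geometric orthogonality $\om_{1}^{\perp},\om_{2}^{\perp}\perp\om$ which kills $\int p_{i}p_{j}M_{A}\,dp\cdot(\om_{\ast}^{\perp})_{i}\om_{j}$. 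In 1D there is a single such resonance (this is the Sotirov--Yu situation you implicitly have in mind), but in 3D there are three, and mass alone handles none of them. Your sketch is therefore missing the central idea that makes $h$ decay at the same rate as $g$; as written it would yield $(1+t)^{-(k+2)/2}$, not $(1+t)^{-(3+k)/2}$.

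Two smaller points. First, the paper gets the $p$-derivative decay for $h^{L}_{\perp 0}$ by interpolating in $p$ only (not in $x$) and iterating the interpolation $q\ge\lceil\log_{2}(5+k)\rceil$ times so that no time-loss survives at the endpoint; your sketch doesn't address how the algebraic (non-exponential) decay of $h^{L}_{\perp 0}$ is upgraded to $p$-derivatives without inflating the weight $w_{\bullet}$, which is a real bookkeeping issue once $\gamma<0$. Second, your expansion $\la(\xi)=-\kappa|\xi|^{2}+O(|\xi|^{4})$ overstates what is used; $O(|\xi|^{3})$ is what the perturbation theory gives and what is needed.
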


Let us remark that this theorem is a simplified version, and it is covered in
more details in Theorems \ref{theorem-g} and \ref{theorem-h}.

\subsection{Method of proof and plan of the paper}

The basic principles used to develop the theory are the \textquotedblleft
separation of scales\textquotedblright\ and \textquotedblleft regularization
estimate\textquotedblright. The separation of scales is a concept created in
\cite{[LiuYu]} for a 1D Boltzmann equation with hard sphere in the whole space.
The regularization estimate for the Landau equation was constructed in \cite{[10]} initially.

In this paper, a long wave-short wave decomposition is introduced. Under this
decomposition, one can follow the spectrum analysis of the Landau collision
operator in \cite{YangYu} to decompose the solution to the linearized Landau
equation as the long wave-fluid part, the lone wave-nonfluid part and the
short wave part, which separate the scales, and hence get different decay rates
for each part.

On the other hand, the regularization estimate plays an important role in this
paper, which leads us to remove the regularity assumption on the initial
condition. Based on the basic regularization estimate established in
\cite{[10]}, we are able to construct new functionals for high order
derivatives, and then by using bootstrap process to get the smoothing
estimates in both the space variable $x$ and momentum variable $p$.

The separation of scales helps us get
different decay rate in each part, but this idea can not improve regularity.
Contrarily, the regularization estimate helps us improve regularity, but this can not get
time decay estimates. Fortunately, if putting these two ideas together, by
interpolation argument, one can get a smoothing effect and time decay
simultaneously. To the best of our knowledge, this thinking is itself new for
Landau kinetic equations.

In the following we point out some significant points of our method:

\begin{itemize}
\item The problem we consider here is the whole space case, and hence we
expect the algebraic decay estimate of the solutions rather than the
exponential decay in the torus case \cite{[Marc1], [Marc2], [10], [Mouhot],
[Gualdani], [31], [32]}. Under our linearization (one species near a vacuum
and the other near equilibrium), the system is decoupled, so that one can
estimate each equation $(g,h)$ separately. For solution $g$, as we mentioned
above, one can get smoothing effect and time decay
simultaneously by interpolation argument.

For solution $h$, we solve it by using the Duhamel principle and
treating $g$ as a source term. This method usually results in that the decay
rate of $h$ is slower than that of $g$. Indeed, the characteristics for the
macroscopic equations for $h$ coincide with the direction of mass diffusion
for $g$. Thus, there is a possible wave resonance, which leads to the slower
decay of solution $h$. On the microscopic level, the resonance is from the
small divisor caused by the closeness between different wave frequencies
(i.e., $\lambda(\eta)$ in Lemma \ref{SpecAB} and $\sigma_{j}%
(\eta),~j=2,3,4$ in Lemma \ref{SpecBB}). It was shown in \cite{[Sotirov]} that
for a 1D Boltzmann gas mixture there is only one characteristic for $h$
coinciding with $g$, which is the diffusion sound wave carrying mass and
energy. The resonance problem was resolved by considering a microscopic
cancellation representing the conservation of mass and total energy. For our
3D case, there are other two characteristics for $h$ having the same direction
as mass diffusion for $g$ in addition to the diffusion sound wave. Their
resonances are resolved by other microscopic cancellations from the
conservation of total momentum and orthogonality between propagating
directions (see the proof of Proposition \ref{long-wave-h} for details). These
crucial cancellation properties lead $h$ to have the same decay rate as $g$.
It would also be interesting to understand the detailed wave propagation of
$(g,h)$, and they are postponed in our future studies.

\item The method to prove high order regularization estimates in this paper is
interesting and itself new. In the paper \cite{[10]}, they get a first order
regularization estimate for the semigroup $e^{t\mathcal{L}_{AB}}$ in both the
$x$ and $p$ variables (see Lemma \ref{regularization}), where $\mathcal{L}%
_{AB}$ contains the transport part $p\cdot\nabla_{x}$ and the ellipticity part
$\Lambda_{AB}$, and the regularization mechanism is their combined effect. In
order to improve the high order $x$ and $p$ regularities based on this result,
we employ different arguments respectively. For the space variable $x$, we
design a Picard type iteration, which treats $\mathcal{L}_{AB}$ as an anstaz
and the operator $K_{AB}$ as a source term. The first several terms in the
iteration contain the most singular part of the solution.
Thereupon, after finite
iterations (depending on the differentiation order), we can extract
the singular waves and then the remainder part will become smooth in
$x$ variable for all time (see Lemma \ref{x-regularityST}).
For the
momentum $p$, we construct a new functional including high order $p$
derivatives and all lower order $x$, $p$ and mixed derivatives, where the
lower order derivatives are used to take care of the nonzero commutator
$[p\cdot\nabla_{x},\nabla_{p}]$. By making use of $x$ regularity obtained
previously, and choosing suitable combinations of derivatives, one can get
high order regularization estimates in $p$ variable for the semigroup
$e^{t\mathcal{L}_{AB}}$. Then the smoothing effect of the solution in the $p$
variable will be obtained inductively (see subsection \ref{sub-p-regularity}).
\end{itemize}

The paper is organized as follows: we list some properties of the linearized
collision operator in section 2; then we prove the estimates of $g$ and $h$ in
sections 3 and 4 respectively.


\section{Preliminaries}

The basic structures of the linearized Landau collision operator are well
known \cite{[Degond]}. Here we take a closer look at these structures with
more modifications (see Lemma \ref{null}-Lemma \ref{LBA}). After that, Lemma
\ref{energy-Guo} is done by Guo \cite{[Guo]}, which would be of great help to
us regarding the regularization estimate. On the other hand, the separation of
scales relies on the spectrum analysis of the operator $L_{XY}^{\eta}%
=-\frac{i\eta\cdot p}{m_{X}}+L_{XY}$ in the classical Hilbert space $L_{p}%
^{2}$, where $(X,Y)=(A,B)$ or $(B,B)$; this is why our argument can not
include the case of very soft potentials and Coulomb potential. Following the systematic procedure
established by Yang and Yu \cite{YangYu}, we can easily derive the spectrum of
the operator $L_{XY}^{\eta}.$ Moreover, we demonstrate that when $\left\vert
\eta\right\vert $ is small, the corresponding eigenfunctions are not only
smooth but also decays faster than any polynomial in $p.$

\begin{lemma}
\label{null} \label{Null}\textrm{(The null spaces of $L_{BB}$ and $L_{AB}$ )}
The null space of $L_{BB}$ is a five-dimensional vector space with an
orthonormal basis $\{\chi_{i}\}_{i=0}^{4}$, where
\[
\{\chi_{0},\chi_{j},\chi_{4}\}=\big\{\sqrt{M_{B}},p_{j}\sqrt{M_{B}},\frac
{1}{\sqrt{6}}(|p|^{2}-3)\sqrt{M_{B}}\big\}\,,\quad j=1,2,3\,.
\]
On the other hand, the null space of $L_{AB}$ is a one-dimensional vector
space $\{E_{D}\}$, where $E_{D}=M_{A}^{1/2}$.
\end{lemma}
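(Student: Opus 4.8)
The plan is to obtain each null space from the nonnegative quadratic form naturally attached to the operator. Consider $L_{BB}$ first: for $\phi\in L_{p}^{2}$ put $g=\phi/\sqrt{M_{B}}$, so $\phi=\sqrt{M_{B}}\,g$. A standard computation for the linearized Landau operator (see \cite{[Degond]}) shows that $L_{BB}$ is self-adjoint on $L_{p}^{2}$ and satisfies the Dirichlet-type identity
\[
\langle L_{BB}\phi,\phi\rangle_{p}=-\frac12\int_{\R^3}\int_{\R^3}M_{B}(p)M_{B}(p_{*})\,\big(\na g(p)-\na g(p_{*})\big)\cdot\Phi^{B,B}(p-p_{*})\big(\na g(p)-\na g(p_{*})\big)\,dp\,dp_{*}\leq 0 .
\]
Since $\Phi^{B,B}(z)=\big[I_{3}-\tfrac{z\otimes z}{|z|^{2}}\big]\varphi(|z|)$ with $\varphi\geq0$ is positive semidefinite with kernel exactly $\R z$, this form vanishes on $\phi$ if and only if $\na g(p)-\na g(p_{*})$ is parallel to $p-p_{*}$ for a.e.\ $(p,p_{*})$; by the classical characterization of the Landau collision invariants this forces $g(p)=a+b\cdot p+c|p|^{2}$ for some $a,c\in\R$, $b\in\R^{3}$. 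Conversely the same identity makes the form vanish on each of $\sqrt{M_{B}}$, $p_{1}\sqrt{M_{B}}$, $p_{2}\sqrt{M_{B}}$, $p_{3}\sqrt{M_{B}}$, $|p|^{2}\sqrt{M_{B}}$ (for those functions $\na g(p)-\na g(p_{*})$ is $0$, $0$ and $2(p-p_{*})$ respectively, and $\Phi^{B,B}(p-p_{*})(p-p_{*})=0$). As $L_{BB}$ is self-adjoint and $\leq 0$, vanishing of the form is equivalent to $L_{BB}\phi=0$, so $\mathrm{Ker}\,L_{BB}$ is exactly the five-dimensional span of those five functions. Finally Gram--Schmidt, using $\int M_{B}\,dp=1$, $\int p_{i}p_{j}M_{B}\,dp=\delta_{ij}$, $\int |p|^{2}M_{B}\,dp=3$, $\int|p|^{4}M_{B}\,dp=15$ and the vanishing of odd moments, gives the stated basis: $\sqrt{M_{B}}$ and the $p_{j}\sqrt{M_{B}}$ are already orthonormal, while subtracting from $|p|^{2}\sqrt{M_{B}}$ its projection onto $\sqrt{M_{B}}$ leaves $(|p|^{2}-3)\sqrt{M_{B}}$, orthogonal to the $p_{j}\sqrt{M_{B}}$ by parity and of squared norm $15-18+9=6$, hence $\chi_{4}=\tfrac{1}{\sqrt6}(|p|^{2}-3)\sqrt{M_{B}}$.

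For $L_{AB}$ the point is that $M_{B}$ is frozen in the second argument, so $Q^{AB}(\cdot,M_{B})$ is an honest second-order linear differential operator in divergence form,
\[
Q^{AB}(F,M_{B})=\na_{p}\cdot\big(\bar\sigma(p)\na_{p}F+\bar\beta(p)F\big),\qquad \bar\sigma(p)=\int_{\R^3}\Phi^{A,B}\Big(\tfrac{p}{m_{A}}-\tfrac{p_{*}}{m_{B}}\Big)M_{B}(p_{*})\,dp_{*},
\]
with $\bar\beta(p)=\int_{\R^3}\Phi^{A,B}\big(\tfrac{p}{m_{A}}-\tfrac{p_{*}}{m_{B}}\big)p_{*}M_{B}(p_{*})\,dp_{*}$, both finite because $\ga+2\geq0$. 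Writing $\phi=\sqrt{M_{A}}\,g$ and $\mathcal{J}=\bar\sigma\na_{p}M_{A}+\bar\beta M_{A}$ (the flux of $Q^{AB}(M_{A},M_{B})$, so $\na_{p}\cdot\mathcal{J}=Q^{AB}(M_{A},M_{B})$), integration by parts gives
\[
\langle L_{AB}\phi,\phi\rangle_{p}=-\int_{\R^3}\big(M_{A}\bar\sigma\na_{p}g+g\,\mathcal{J}\big)\cdot\na_{p}g\,dp ,
\]
and the cross term $-\int_{\R^3}g\,\mathcal{J}\cdot\na_{p}g\,dp=\tfrac12\int_{\R^3}(\na_{p}\cdot\mathcal{J})\,g^{2}\,dp=\tfrac12\int_{\R^3}Q^{AB}(M_{A},M_{B})\,g^{2}\,dp=0$ by $(\ref{mix.1.b})$. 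Hence $\langle L_{AB}\phi,\phi\rangle_{p}=-\int_{\R^3}M_{A}(p)\,\na_{p}g\cdot\bar\sigma(p)\na_{p}g\,dp\leq0$. Moreover $\bar\sigma(p)$ is strictly positive definite for every $p$: it is an average of the positive semidefinite matrices $\Phi^{A,B}\big(\tfrac{p}{m_{A}}-\tfrac{p_{*}}{m_{B}}\big)$, and $u\cdot\bar\sigma(p)u=0$ would force $u$ parallel to $\tfrac{p}{m_{A}}-\tfrac{p_{*}}{m_{B}}$ for a.e.\ $p_{*}$, which is impossible for $u\neq0$ since that direction ranges over all of $\R^{3}$. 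Therefore $\langle L_{AB}\phi,\phi\rangle_{p}=0$ if and only if $\na_{p}g\equiv0$, i.e.\ $\phi\in\mathrm{span}\{\sqrt{M_{A}}\}$. In particular $L_{AB}\phi=0$ implies $\phi\in\mathrm{span}\{\sqrt{M_{A}}\}$, and conversely $L_{AB}\big(M_{A}^{1/2}\big)=M_{A}^{-1/2}Q^{AB}(M_{A},M_{B})=0$ by $(\ref{mix.1.b})$; thus $\mathrm{Ker}\,L_{AB}=\mathrm{span}\{E_{D}\}$ with $E_{D}=M_{A}^{1/2}$, one-dimensional.

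The only genuinely non-routine ingredient is the classical step in the $L_{BB}$ part: from $\na g(p)-\na g(p_{*})\parallel p-p_{*}$ for a.e.\ $(p,p_{*})$ one must deduce that $g$ is a polynomial of degree at most two; I would invoke the standard treatment of the Landau collision invariants rather than reprove it. Everything else is bookkeeping --- the Dirichlet-form identity and self-adjointness of $L_{BB}$, the divergence-form rewriting of $Q^{AB}(\cdot,M_{B})$, the strict positivity of $\bar\sigma$, and the Gram--Schmidt computation. All the integrations by parts above are justified by the Gaussian decay of $M_{A},M_{B}$ against the at most polynomial growth and local integrability of $\Phi^{X,Y}$ for $\ga\in[-2,1]$, so no boundary terms appear.
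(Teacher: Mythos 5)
Your proposal is correct, and it is worth noting that the paper itself gives no proof of Lemma~\ref{null}: the lemma is stated as part of the well-known structure of the linearized Landau operator, with a pointer to Degond--Lemou in the surrounding text. Your Dirichlet-form argument is the standard one and is sound. For $L_{BB}$ you invoke self-adjointness and the symmetrized entropy-dissipation identity, then use that $\Phi^{B,B}$ is positive semidefinite with kernel exactly $\R z$; the passage from $\na g(p)-\na g(p_*)\parallel p-p_*$ a.e.\ to $g$ being an affine function plus a multiple of $|p|^2$ is indeed the only genuinely nontrivial step, and you correctly flag it as the classical collision-invariant lemma. The Gram--Schmidt bookkeeping (with $\int|p|^4M_B\,dp=15$, giving squared norm $15-18+9=6$) is right. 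For $L_{AB}$ your computation is also correct: the cross term vanishes by $\na_p\cdot\mathcal{J}=Q^{AB}(M_A,M_B)=0$, one does not need self-adjointness of $L_{AB}$ because you only run the implication $L_{AB}\phi=0\Rightarrow\langle L_{AB}\phi,\phi\rangle_p=0$, and the strict definiteness of $\bar\sigma(p)$ follows because the directions $p/m_A-p_*/m_B$ fill out $\R^3$ as $p_*$ ranges over the support of $M_B$. Two small cosmetic points: the argument of $\Phi^{B,B}$ should be $(p-p_*)/m_B$ rather than $p-p_*$, which changes nothing in the positivity/kernel analysis; and it would be cleaner to state explicitly that vanishing of $\nabla g-\nabla g_*$ for $g=1,p_j$ (and collinearity with $p-p_*$ for $g=|p|^2$) gives the reverse inclusion, which you do. Overall a complete and correct argument that fills in what the paper leaves as a citation.
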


\begin{lemma}
\label{decom} \label{Grad} \textrm{(Decomposition)} Let $(X,Y)=(B,B)$ or
$(A,B)$. Then \newline\noindent\textrm{(i)}\quad the collision operator
$L_{XY}$ consists of an elliptic type operator $\widetilde{\Lambda}^{XY}$ and
an integral operator $\widetilde{K}^{XY}$:
\[
L_{XY}f=-\widetilde{\Lambda}^{XY}f+\widetilde{K}^{XY}f\,,
\]
where
\[
\widetilde{\Lambda}^{XY}f=-\nabla_{p}\cdot\left[  \sigma^{XY}\nabla
_{p}f\right]  +\left(  \frac{1}{4}\frac{m_{Y}^{2}}{m_{X}^{2}}\left(
\sigma^{XY}p,p\right)  -\frac{1}{2}\frac{m_{Y}}{m_{X}}\nabla_{p}\cdot\left[
\sigma^{XY}p\right]  \right)  f,
\]%
\[
\widetilde{K}^{XY}f=\left\{
\begin{array}
[c]{ll}%
\int_{{\mathbb{R}}^{3}}M_{B}^{-1/2}\left(  p\right)  M_{B}^{-1/2}\left(
p_{\ast}\right)  \nabla_{p}\cdot\nabla_{p_{\ast}}\cdot Z^{XY}\left(
p,p_{\ast}\right)  f\left(  p_{\ast}\right)  dp_{\ast},%
\vspace{3mm}%
& (X,Y)=(B,B)\\
0, & (X,Y)=(A,B)
\end{array}
\right.
\]
with
\[
\sigma^{XY}\left(  p\right)  =\int_{{\mathbb{R}}^{3}}\Phi^{X,Y}\left(
\frac{p}{m_{X}}-\frac{p_{\ast}}{m_{Y}}\right)  M_{Y}\left(  p_{\ast}\right)
dp_{\ast},
\]%
\[
Z^{XY}\left(  p,p_{\ast}\right)  =M_{X}\left(  p\right)  M_{Y}\left(  p_{\ast
}\right)  \Phi^{X,Y}\left(  \frac{p}{m_{X}}-\frac{p_{\ast}}{m_{Y}}\right)  .
\]
\newline\noindent\textrm{(ii)}\quad The spectrum of $\sigma^{XY}\left(
p\right)  $ consists of a simple eigenvalue $\lambda_{1}\left(  p\right)  $
associated with the eigenvector $p$, and a double eigenvalue $\lambda
_{2}\left(  p\right)  $ associated with the eigenspace $p^{\bot}.$ Furthermore%
\[
\lambda_{1}^{XY}\left(  p\right)  =\frac{m_{X}m_{Y}}{m_{X}+m_{Y}}\frac
{1}{m_{Y}^{\gamma+2}}\int_{\mathbb{R}^{3}}\left(  1-\left(  \frac
{p}{\left\vert p\right\vert },\frac{p_{\ast}}{\left\vert p_{\ast}\right\vert
}\right)  ^{2}\right)  M_{Y}\left(  \frac{m_{Y}}{m_{X}}p-p_{\ast}\right)
\left\vert p_{\ast}\right\vert ^{\gamma+2}dp_{\ast},
\]%
\[
\lambda_{2}^{XY}\left(  p\right)  =\frac{m_{X}m_{Y}}{m_{X}+m_{Y}}\frac
{1}{m_{Y}^{\gamma+2}}\int_{\mathbb{R}^{3}}\left(  1+\left(  \frac
{p}{\left\vert p\right\vert },\frac{p_{\ast}}{\left\vert p_{\ast}\right\vert
}\right)  ^{2}\right)  M_{Y}\left(  \frac{m_{Y}}{m_{X}}p-p_{\ast}\right)
\left\vert p_{\ast}\right\vert ^{\gamma+2}dp_{\ast},
\]
with%
\[
\underset{\left\vert p\right\vert \rightarrow0}{\lim\inf}\lambda_{1}%
^{XY}\left(  p\right)  >0,\ \ \ \underset{\left\vert p\right\vert
\rightarrow0}{\lim\inf}\lambda_{2}^{XY}\left(  p\right)  >0,
\]
and as$\ \left\vert p\right\vert \rightarrow\infty,$
\[
\lambda_{1}^{XY}\left(  p\right)  \sim\frac{m_{X}m_{Y}}{m_{X}+m_{Y}}\frac
{2}{m_{Y}^{\gamma+2}}\left(  \frac{m_{Y}}{m_{X}}\left\vert p\right\vert
\right)  ^{\gamma},\ \ \ \ \ \lambda_{2}^{XY}\left(  p\right)  \sim\frac
{m_{X}m_{Y}}{m_{X}+m_{Y}}\frac{1}{m_{Y}^{\gamma+2}}\left(  \frac{m_{Y}}{m_{X}%
}\left\vert p\right\vert \right)  ^{\gamma+2}.
\]
Immediately, we have
\begin{align*}
\nabla_{p}\cdot\sigma^{XY}(p)  &  =-\frac{m_{Y}^{2}}{m_{X}^{2}}\sigma
^{XY}(p)p=-\frac{m_{Y}^{2}}{m_{X}^{2}}\la_{1}^{XY}(p)p\,,\quad(p,\sigma
^{XY}(p)p)=\la_{1}^{XY}(p)|p|^{2}\,,\\
(u,\sigma^{XY}(p)u)  &  =\la_{1}^{XY}(p)|\mathbb{P}(p)u|^{2}+\la_{2}%
^{XY}(p)\big|\big[I_{3}-\mathbb{P}(p)\big]u\big|^{2}\\
&  \geq c_{0}\left\{  \left\langle p\right\rangle ^{\ga}\left\vert
\mathbb{P}(p)u\right\vert ^{2}+\left\langle p\right\rangle ^{\ga+2}\left\vert
\left[  I_{3}-\mathbb{P}(p)\right]  u\right\vert ^{2}\right\}  \,.
\end{align*}
\textrm{(iii)} For any $k\in{\mathbb{N}}$, we have
\[
|\nabla_{p}^{k}\sigma^{XY}(p)|\lesssim\left\langle p\right\rangle
^{\ga+2-k}\,,\quad|\nabla_{p}^{k}(\sigma^{XY}(p)p)|\lesssim\left\langle
p\right\rangle ^{\ga+1-k}\,,
\]
and as $|p|\rightarrow\infty$,
\[
|\nabla_{p}^{k}\la_{1}^{XY}(p)|\lesssim\left\langle p\right\rangle
^{\ga-k}\,,\quad|\nabla_{p}^{k}\la_{2}^{XY}(p)|\lesssim\left\langle
p\right\rangle ^{\ga+2-k}\,.
\]
\textrm{(iv)}\quad(Coercivity) Rewrite
\begin{align*}
\widetilde{\Lambda}^{XY}f  &  =-\nabla_{p}\cdot\left[  \sigma^{XY}\nabla
_{p}f\right]  +\left(  \frac{1}{4}\frac{m_{Y}^{2}}{m_{X}^{2}}\left(
\sigma^{XY}p,p\right)  -\frac{1}{2}\frac{m_{Y}}{m_{X}}\left(  \left(
\nabla_{p}\cdot\sigma^{XY},p\right)  +Tr\sigma^{XY}\right)  \right)  f\\
&  =-\nabla_{p}\cdot\left[  \sigma^{XY}\nabla_{p}f\right] \\
&  +\left(  \frac{1}{4}\lambda_{1}^{XY}\left(  p\right)  \left(  \frac{m_{Y}%
}{m_{X}}\left\vert p\right\vert \right)  ^{2}+\frac{1}{2}\lambda_{1}%
^{XY}\left(  p\right)  \frac{m_{Y}}{m_{X}}\left(  \frac{m_{Y}}{m_{X}%
}\left\vert p\right\vert \right)  ^{2}-\frac{m_{Y}}{m_{X}}\lambda_{2}%
^{XY}\left(  p\right)  -\frac{1}{2}\frac{m_{Y}}{m_{X}}\lambda_{1}^{XY}\left(
p\right)  \right)  f.
\end{align*}
Let
\[
-\Lambda^{XY}f=-\widetilde{\Lambda}^{XY}f-\varpi\chi_{R}f,\text{ }%
K^{XY}f=\widetilde{K}^{XY}f+\varpi\chi_{R}f,
\]
where $\varpi,R$ are positive constants large enough. Then
\[
\left\langle \Lambda^{XY}f,f\right\rangle _{p}\geq c_{0}\left\vert
f\right\vert _{L_{\sigma}^{2}}^{2},
\]
for some $c_{0}>0,\ $and
\[
\left\langle K^{XY}f,f\right\rangle _{p}\leq\left\vert f\right\vert
_{L_{p}^{2}}^{2}.
\]

\end{lemma}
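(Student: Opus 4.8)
The plan is to establish the four parts in order; each reduces to a computation resting on the single algebraic identity $\Phi^{X,Y}(z)z=0$ together with the explicit Gaussian forms of $M_A$ and $M_B$. For \emph{Part (i)} I would expand $L_{AB}$ and $L_{BB}$ directly. In $Q^{XY}(F_X,F_Y)$ with $F_Y$ a fixed Maxwellian, $\nabla_{p_*}F_Y$ is a scalar multiple of $p_*F_Y$, and the identity $\Phi^{X,Y}(p/m_X-p_*/m_Y)\,p_*=\tfrac{m_Y}{m_X}\Phi^{X,Y}(p/m_X-p_*/m_Y)\,p$ (from $\Phi^{X,Y}(z)z=0$) collapses the corresponding integral to the local term $\tfrac{m_Y}{m_X}\sigma^{XY}(p)p\,F_X(p)$. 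Writing $F_X=f\sqrt{M_X}$, using that $\nabla_p\sqrt{M_X}$ is a multiple of $p\sqrt{M_X}$, and integrating by parts, the terms linear in $\nabla_pf$ cancel — for $L_{AB}$ this cancellation is exactly what the defining property $Q^{AB}(M_A,M_B)=0$ of $M_A$ provides (for $L_{BB}$ it is automatic since $m_X=m_Y$) — leaving $-\widetilde\Lambda^{XY}f$. For $(X,Y)=(A,B)$, $L_{AB}$ is the single term $\tfrac1{\sqrt{M_A}}Q^{AB}(f\sqrt{M_A},M_B)$, with the perturbation in the first slot, so no integral operator is generated and $\widetilde K^{AB}=0$; for $(X,Y)=(B,B)$ the extra term $Q^{BB}(M_B,f\sqrt{M_B})$ carries the perturbation in the second slot, and redistributing the $\nabla_p,\nabla_{p_*}$ derivatives by integration by parts produces the stated kernel form of $\widetilde K^{BB}$ (the classical linearized Landau computation, \cite{[Degond],[Guo]}).

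For \emph{Part (ii)}, the translation $p_*\mapsto\tfrac{m_Y}{m_X}p-p_*$ puts $\sigma^{XY}(p)$ in the form
\[
\sigma^{XY}(p)=\frac{m_Xm_Y}{m_X+m_Y}\,\frac{1}{m_Y^{\gamma+2}}\int_{\mathbb{R}^3}\Big[I_3-\frac{p_*\otimes p_*}{|p_*|^2}\Big]|p_*|^{\gamma+2}\,M_Y\Big(\tfrac{m_Y}{m_X}p-p_*\Big)\,dp_*,
\]
which is invariant under rotations fixing the axis $p$; hence $p$ is an eigenvector (eigenvalue $\lambda_1^{XY}$) and $p^{\bot}$ a two-dimensional eigenspace (eigenvalue $\lambda_2^{XY}$), and the explicit formulas follow from $\lambda_1^{XY}=(p,\sigma^{XY}p)/|p|^2$ and $2\lambda_2^{XY}=\mathrm{Tr}\,\sigma^{XY}-\lambda_1^{XY}$. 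Positivity at $p=0$ is clear, the integrand being nonnegative and not a.e.\ zero. The large-$|p|$ asymptotics are obtained by Laplace's method: the Gaussian concentrates $p_*$ near $\tfrac{m_Y}{m_X}p$, so $|p_*|^{\gamma+2}\sim(\tfrac{m_Y}{m_X}|p|)^{\gamma+2}$; the factor $1+(\hat p\cdot\hat p_*)^2$ tends to $2$ in $\lambda_2^{XY}$, whereas in $\lambda_1^{XY}$ the factor $1-(\hat p\cdot\hat p_*)^2$ is of order $|p|^{-2}$ times the squared transverse displacement of $p_*$, supplying the extra $|p|^{-2}$ and a Gaussian second-moment constant. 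The divergence identities come from $\nabla_p[\Phi^{X,Y}(p/m_X-p_*/m_Y)]=-\tfrac{m_Y}{m_X}\nabla_{p_*}[\Phi^{X,Y}(p/m_X-p_*/m_Y)]$, one integration by parts in $p_*$, the relation $\nabla_{p_*}M_Y=-p_*M_Y$ (valid since $Y=B$), and again $\Phi^{X,Y}(\cdot)p_*=\tfrac{m_Y}{m_X}\Phi^{X,Y}(\cdot)p$; the quadratic-form identity is the eigendecomposition, and the lower bound $(u,\sigma^{XY}u)\gtrsim\langle p\rangle^{\gamma}|\mathbb{P}(p)u|^2+\langle p\rangle^{\gamma+2}|[I_3-\mathbb{P}(p)]u|^2$ follows from positivity, continuity, the asymptotics, and compactness of bounded $p$-regions.

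For \emph{Part (iii)} I would differentiate $\sigma^{XY}$ under the integral after the substitution $w=\tfrac{m_Y}{m_X}p-p_*$: every $\nabla_p$ lands on $M_Y(\tfrac{m_Y}{m_X}p-w)$, and on the support of the Gaussian $w$ stays $O(1)$ while $\tfrac{m_Y}{m_X}p-w$ has size $\sim|p|$ (the region $|w-\tfrac{m_Y}{m_X}p|\lesssim1$ contributes weight $\lesssim e^{-c|p|^2}$), giving $|\nabla_p^k\sigma^{XY}(p)|\lesssim\langle p\rangle^{\gamma+2-k}$; the extra decay of $\sigma^{XY}(p)p=-\tfrac{m_X^2}{m_Y^2}\nabla_p\cdot\sigma^{XY}(p)$ and the bounds on $\lambda_j^{XY}$ (claimed only as $|p|\to\infty$, so the $p=0$ singularity of $\lambda_1^{XY}=(p,\sigma^{XY}p)/|p|^2$ is harmless) follow the same way. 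For \emph{Part (iv)}, rewrite the zeroth-order coefficient of $\widetilde\Lambda^{XY}$ via $\nabla_p\cdot(\sigma^{XY}p)=-\tfrac{m_Y^2}{m_X^2}\lambda_1^{XY}|p|^2+\lambda_1^{XY}+2\lambda_2^{XY}$; integration by parts then gives $\langle\widetilde\Lambda^{XY}f,f\rangle_p=\langle\sigma^{XY}\nabla_pf,\nabla_pf\rangle_p+\int(\text{coeff})\,|f|^2\,dp$, the first term being $\gtrsim\int\langle p\rangle^{\gamma}|\mathbb{P}(p)\nabla_pf|^2+\langle p\rangle^{\gamma+2}|[I_3-\mathbb{P}(p)]\nabla_pf|^2\,dp$ by Part (ii). The decisive point is that the coefficient — to leading order $\big(\tfrac14+\tfrac12\tfrac{m_Y}{m_X}\big)\tfrac{m_Y^2}{m_X^2}\lambda_1^{XY}|p|^2-\tfrac{m_Y}{m_X}\lambda_2^{XY}+O(\langle p\rangle^\gamma)$ — is asymptotically a \emph{positive} multiple of $\langle p\rangle^{\gamma+2}$ (in fact $\sim\tfrac12\lambda_2^{XY}(p)$): on inserting the sharp constants of Part (ii), the largest (in the mass ratio) contributions to the $\langle p\rangle^{\gamma+2}$ term cancel exactly, leaving a strictly positive remainder. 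Choosing $R$ and then $\varpi$ large enough, $\varpi\chi_R$ dominates the (bounded) coefficient on $|p|\le2R$, so the full coefficient is $\gtrsim\langle p\rangle^{\gamma+2}$ everywhere, yielding $\langle\Lambda^{XY}f,f\rangle_p\gtrsim|f|_{L_\sigma^2}^2$; the bound on $K^{XY}$ then combines the $L_p^2$-boundedness of $\widetilde K^{XY}$ (kernel estimates from the rapid decay of $Z^{XY}$, \cite{[Degond]}) with the boundedness of the multiplier $\varpi\chi_R$.

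I expect the main obstacle to be precisely this coercivity computation in Part (iv): one must verify that the zeroth-order coefficient of $\widetilde\Lambda^{XY}$ is asymptotically a strictly positive multiple of $\langle p\rangle^{\gamma+2}$, which requires the \emph{sharp} leading constants of $\lambda_1^{XY}$ and $\lambda_2^{XY}$ as $|p|\to\infty$ and not just their orders, and which works only because a nontrivial cancellation removes the potentially dominant terms; the Laplace-method asymptotics of Part (ii) feeding into it must therefore be carried out carefully. By comparison, Part (i) is a direct though lengthy integration by parts, and Part (iii) is a routine differentiation-under-the-integral estimate.
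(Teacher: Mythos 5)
The paper does not prove Lemma \ref{decom}; it only states it and points to \cite{[Degond]}, so there is no in-paper argument to compare against. On its own terms, your proposal is correct for Parts (i), (ii), and (iv). In particular, your handling of Part (iv) is right, and it is indeed the delicate point: the zeroth-order coefficient of $\widetilde\Lambda^{XY}$, which to leading order is
$\left(\tfrac14+\tfrac12\tfrac{m_Y}{m_X}\right)\tfrac{m_Y^2}{m_X^2}\lambda_1^{XY}|p|^2-\tfrac{m_Y}{m_X}\lambda_2^{XY}+O(\langle p\rangle^{\gamma})$,
reduces asymptotically to $\tfrac12\lambda_2^{XY}(p)>0$ because the sharp asymptotics give $\lambda_1^{XY}\left(\tfrac{m_Y}{m_X}|p|\right)^2\to 2\lambda_2^{XY}$ and the two $\tfrac{m_Y}{m_X}\lambda_2^{XY}$ contributions cancel; the cutoff $\varpi\chi_R$ then supplies positivity on the bounded region.

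Part (iii), however, does not work as written, and does not deliver the claimed gain. After the substitution $w=\tfrac{m_Y}{m_X}p-p_*$, the Gaussian $M_Y\left(\tfrac{m_Y}{m_X}p-w\right)$ is peaked where $w=\tfrac{m_Y}{m_X}p$, so on its effective support $|w|\sim|p|$ while $\left|\tfrac{m_Y}{m_X}p-w\right|=O(1)$; you assert the reverse, and the parenthetical that $|w-\tfrac{m_Y}{m_X}p|\lesssim 1$ carries weight $\lesssim e^{-c|p|^2}$ is backwards (that region holds essentially all the weight). More importantly, each $\nabla_p$ acting on $M_Y\left(\tfrac{m_Y}{m_X}p-w\right)$ produces a factor of size $\left|\tfrac{m_Y}{m_X}p-w\right|=O(1)$, not $O(\langle p\rangle^{-1})$, so the route "all derivatives on $M_Y$" gives only $|\nabla_p^k\sigma^{XY}|\lesssim\langle p\rangle^{\gamma+2}$, with no decrease in $k$. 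The decay must come from the opposite route: leave $\sigma^{XY}(p)=\int\Phi^{X,Y}(z)M_Y(p_*)\,dp_*$ with $z=p/m_X-p_*/m_Y$ unsubstituted, so $\nabla_p$ lands on $\Phi^{X,Y}$, and use that $\nabla_z^k\Phi^{X,Y}$ is homogeneous of degree $\gamma+2-k$ while the Gaussian confines $|p_*|=O(1)$, hence $|z|\sim|p|$, up to an $O(e^{-c|p|^2})$ tail where $|p_*|\sim|p|$; when $\gamma+2-k\le -3$ one must additionally split near $z=0$ and throw excess derivatives onto $M_Y$ by parts to keep the integral convergent. Your subsequent reduction of $\nabla_p^k(\sigma^{XY}p)$ via $\sigma^{XY}p=-\tfrac{m_X^2}{m_Y^2}\nabla_p\cdot\sigma^{XY}$ is then fine.
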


The behavior of the operator $L_{BA}$ is different from $L_{BB}$ and $L_{AB}$,
but similar to $K^{BB}.$

\begin{lemma}
\label{LBA}The operator $L_{BA}$ is a bounded integral operator:
\[
L_{BA}f=\int_{{\mathbb{R}}^{3}}M_{B}^{-1/2}\left(  p\right)  M_{A}%
^{-1/2}\left(  p_{\ast}\right)  \nabla_{p}\cdot\nabla_{p_{\ast}}\cdot
Z^{BA}\left(  p,p_{\ast}\right)  f\left(  p_{\ast}\right)  dp_{\ast},
\]
where%
\[
Z^{BA}\left(  p,p_{\ast}\right)  =M_{B}\left(  p\right)  M_{A}\left(  p_{\ast
}\right)  \Phi^{B,A}\left(  \frac{p}{m_{B}}-\frac{p_{\ast}}{m_{A}}\right)  .
\]

\end{lemma}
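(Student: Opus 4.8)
The plan is to compute $L_{BA}f=M_B^{-1/2}(p)\,Q^{BA}(M_B,f\sqrt{M_A})$ directly from the definition of the collision operator and recognise the result as the claimed integral operator. Writing $z=\frac{p}{m_B}-\frac{p_*}{m_A}$,
\[
L_{BA}f=\frac{1}{\sqrt{M_B(p)}}\,\nabla_p\cdot\int_{\mathbb{R}^3}\Phi^{B,A}(z)\Big[(f\sqrt{M_A})(p_*)\,\nabla_p M_B(p)-M_B(p)\,\nabla_{p_*}(f\sqrt{M_A})(p_*)\Big]\,dp_*.
\]
The first, purely structural observation is that since the fixed Maxwellian $M_B$ sits in the first slot of $Q^{BA}$, the outer gradient $\nabla_p$ falls only on $M_B$ and never produces a derivative of $f$; hence $L_{BA}$, unlike $L_{BB}$, carries \emph{no} elliptic (second order) part and is a pure integral operator — which is exactly the sense in which $L_{BA}$ behaves like $\widetilde K^{BB}$ rather than like $L_{BB}$.

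The heart of the argument is to simplify the integrand. Using $\nabla_p M_B(p)=-p\,M_B(p)$, the product rule $\nabla_{p_*}(f\sqrt{M_A})=\sqrt{M_A}(\nabla_{p_*}f+f\nabla_{p_*}\log\sqrt{M_A})$, the algebraic fact $\Phi^{B,A}(z)z=0$ — which gives $\Phi^{B,A}(z)p=\frac{m_B}{m_A}\Phi^{B,A}(z)p_*$ because $p=m_B z+\frac{m_B}{m_A}p_*$ — and the relation between the temperatures of $M_A$ and $M_B$ built into $Q^{AB}(M_A,M_B)=0$, the $p_*f$-terms combine so that one is left with the pointwise identity
\[
\Phi^{B,A}(z)\Big[(f\sqrt{M_A})(p_*)\,\nabla_p M_B(p)-M_B(p)\,\nabla_{p_*}(f\sqrt{M_A})(p_*)\Big]=-\,M_B(p)M_A(p_*)\,\Phi^{B,A}(z)\,\nabla_{p_*}\!\Big(\frac{f(p_*)}{\sqrt{M_A(p_*)}}\Big).
\]
Hence $L_{BA}f=-M_B^{-1/2}(p)\,\nabla_p\cdot\int_{\mathbb{R}^3}Z^{BA}(p,p_*)\,\nabla_{p_*}\big(fM_A^{-1/2}\big)(p_*)\,dp_*$ with $Z^{BA}(p,p_*)=M_B(p)M_A(p_*)\Phi^{B,A}(z)$. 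Integrating by parts in $p_*$ (boundary terms vanish by the Gaussian decay of $M_A$; the computation is done for, say, Schwartz $f$ and then extended by density once boundedness is in hand) transfers $\nabla_{p_*}$ onto $Z^{BA}$, and since $M_A^{-1/2}(p_*)$ does not depend on $p$ it passes through $\nabla_p\cdot$; the two sign changes cancel and one arrives precisely at $L_{BA}f=\int_{\mathbb{R}^3}M_B^{-1/2}(p)M_A^{-1/2}(p_*)\,\nabla_p\cdot\nabla_{p_*}\cdot Z^{BA}(p,p_*)\,f(p_*)\,dp_*$, the claimed formula.

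It remains to see that this integral operator is bounded on $L_p^2$. Carrying out the two differentiations, the kernel $M_B^{-1/2}(p)M_A^{-1/2}(p_*)\nabla_p\cdot\nabla_{p_*}\cdot Z^{BA}(p,p_*)$ is a finite sum of terms of the form (polynomial in $p,p_*$) $\times$ ($|z|^{\ga}$ or $|z|^{\ga+2}$) $\times M_B^{1/2}(p)M_A^{1/2}(p_*)$ — exactly one power of each Maxwellian survives after cancellation with the prefactor, which is also the source of the regularizing/decay features used later. Since $\ga\in[-2,1]$, the factor $|z|^{\ga}$ is at worst a locally integrable singularity in $\mathbb{R}^3$ (as $\ga>-3$), while at infinity the two half-Gaussians dominate any polynomial growth; a Schur-type estimate — equivalently the classical Grad-type kernel bounds used for $\widetilde K^{BB}$ in Lemma \ref{decom} — then yields $|L_{BA}f|_{L_p^2}\lesssim|f|_{L_p^2}$ (and, just as well, boundedness into weighted spaces, so $L_{BA}$ really acts like a smoothing remainder). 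The main obstacle is the pointwise identity of the second paragraph: the cancellation of the $p_*f$-contributions is where the precise choice of $M_A$ (i.e.\ the compatibility $Q^{AB}(M_A,M_B)=0$) is genuinely needed — with a mismatched temperature a spurious multiplication term would survive — and keeping straight the factors $1/m_B$ versus $1/m_A$ produced when $\Phi^{B,A}(z)$ is differentiated in $p$ versus in $p_*$ demands some care.
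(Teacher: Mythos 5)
The paper states Lemma \ref{LBA} without proof (as it does for the other structural facts in Section 2), so there is no paper proof to compare against. Your derivation is the natural one and, read with the intended $M_A$, is correct: expand $Q^{BA}(M_B,f\sqrt{M_A})$, observe that $\nabla_p$ only ever hits $M_B$ so no $\nabla_pf$ appears, combine the $p_* f$ terms using $\Phi^{B,A}(z)z=0$ together with the compatibility $Q^{AB}(M_A,M_B)=0$, integrate by parts in $p_*$, and finally check $L^2_p$-boundedness via a Schur/Grad-type kernel estimate.

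One caveat worth recording. Your pointwise kernel identity hinges, as you correctly say, on the temperature compatibility of $M_A$ and $M_B$: one needs $\nabla_p\log M_A=-\frac{m_B}{m_A}p$, so that $\nabla_p\log M_A-\nabla_{p_*}\log M_B$ is parallel to $z'=p/m_A-p_*/m_B$. The $M_A$ as printed in the paper, $M_A\propto\exp\big(-\frac{m_A}{m_B}\frac{|p|^2}{2}\big)$, does \emph{not} satisfy $Q^{AB}(M_A,M_B)=0$; the exponent should read $-\frac{m_B}{m_A}\frac{|p|^2}{2}$ (the printed normalization constant $(2\pi\, m_A/m_B)^{-3/2}$ is in fact the one that goes with this corrected exponent, which confirms it is a typo). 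If you substitute the printed $M_A$ literally into your identity, the coefficient of $p_*f$ on the left comes out to $\frac{m_B}{m_A}-\frac{m_A}{2m_B}$ while on the right it is $\frac{m_A}{2m_B}$, and these agree only when $m_A=m_B$. With the corrected $M_A$ both coefficients equal $\frac{m_B}{2m_A}$ and your identity (hence the lemma) holds. Since you invoked the compatibility $Q^{AB}(M_A,M_B)=0$ rather than a literal exponent, your argument is fine in substance, but it is worth noting that a direct plug-in of the printed formula would appear to fail. A very minor further point: after carrying out $\nabla_p\cdot\nabla_{p_*}\cdot$, the powers of $|z|$ that appear are $|z|^{\ga}$, $|z|^{\ga+1}$ and $|z|^{\ga+2}$ (the middle one occurs when exactly one of the two derivatives falls on $\Phi^{B,A}$), not only $|z|^{\ga}$ and $|z|^{\ga+2}$; this does not affect your boundedness conclusion since the worst case $|z|^{\ga}$ with $\ga\geq-2$ remains locally integrable in $\mathbb{R}^3$ and the Maxwellian halves dominate at infinity.
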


\begin{lemma}
[{Lemma 6, \cite{[Guo]}}]\label{energy-Guo} Let $(X,Y)=(B,B)$ or $(A,B)$,
assume $\left\vert \alpha\right\vert \geq0$, for small $\de>0$, we have
\[
\langle\left\langle p\right\rangle ^{2\theta}\pa_{p}^{\alpha}\widetilde
{\Lambda}^{XY}f,\pa_{p}^{\alpha}\overline{f}\rangle_{p}\geq|\pa_{p}^{\alpha
}f|_{L_{\sigma,\theta}^{2}}^{2}-\de\sum_{|\overline{\alpha}|\leq|\alpha
|}|\pa_{p}^{\overline{\alpha}}f|_{L_{\sigma,\theta}^{2}}^{2}-C_{\de}|\chi
_{R}f|_{L_{p}^{2}}^{2}%
\]
and
\[
\left|  \langle\left\langle p\right\rangle ^{2\theta}\pa_{p}^{\alpha
}\widetilde{K}^{XY}f_{1},\pa_{p}^{\alpha}\overline{f}_{2}\rangle_{p}\right|
\leq|\pa_{p}^{\alpha}f_{2}|_{L_{\sigma,\theta}^{2}}\left\{  \de\sum
_{|\overline{\alpha}|\leq|\alpha|}|\pa_{p}^{\overline{\alpha}}f_{1}%
|_{L_{\sigma,\theta}^{2}}+C_{\de}|\chi_{R}f_{1}|_{L_{p}^{2}}\right\}  \,.
\]
In particular,
\begin{equation}
\langle\left\langle p\right\rangle ^{2\theta}\pa_{p}^{\alpha}L_{XY}%
f,\pa_{p}^{\alpha}\overline{f}\rangle_{p}\leq-|\pa_{p}^{\alpha}f|_{L_{\sigma
,\theta}^{2}}^{2}+\de\sum_{|\overline{\alpha}|\leq|\alpha|}|\pa_{p}%
^{\overline{\alpha}}f|_{L_{\sigma,\theta}^{2}}^{2}+C_{\de}|\chi_{R}%
f|_{L_{p}^{2}}^{2}\,. \label{Guo-1}%
\end{equation}

\end{lemma}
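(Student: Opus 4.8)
The plan is to follow the strategy of Lemma~6 in \cite{[Guo]}, adapting it to the present weighted two-species setting; the only genuinely new features are the mass ratios $m_{Y}/m_{X}$ that enter $\widetilde{\Lambda}^{XY}$ and the distinction between the cases $(X,Y)=(B,B)$ and $(A,B)$, the latter having $\widetilde{K}^{AB}=0$ so that there only the elliptic bound is needed. Throughout I use the decomposition and the pointwise information of Lemma~\ref{decom}: the matrix coercivity $(u,\sigma^{XY}(p)u)\ge c_{0}\{\langle p\rangle^{\ga}|\mathbb{P}(p)u|^{2}+\langle p\rangle^{\ga+2}|[I_{3}-\mathbb{P}(p)]u|^{2}\}$, the decay of derivatives $|\na_{p}^{k}\sigma^{XY}|\lesssim\langle p\rangle^{\ga+2-k}$, $|\na_{p}^{k}\la_{1}^{XY}|\lesssim\langle p\rangle^{\ga-k}$, $|\na_{p}^{k}\la_{2}^{XY}|\lesssim\langle p\rangle^{\ga+2-k}$, and the large-$|p|$ asymptotics of $\la_{1}^{XY},\la_{2}^{XY}$.

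For the elliptic operator, apply $\pa_{p}^{\al}$ to $\widetilde{\Lambda}^{XY}f$ by the Leibniz rule, pair with $\langle p\rangle^{2\theta}\pa_{p}^{\al}\overline{f}$, and integrate the second-order part by parts in $p$. The leading term is $\int\langle p\rangle^{2\theta}(\sigma^{XY}\na_{p}\pa_{p}^{\al}f,\na_{p}\pa_{p}^{\al}\overline{f})\,dp$, which by the matrix coercivity bounds below the $\mathbb{P}(p)\na_{p}$- and $[I_{3}-\mathbb{P}(p)]\na_{p}$-parts of $|\pa_{p}^{\al}f|_{L_{\sigma,\theta}^{2}}^{2}$; the leading zeroth-order coefficient of $\widetilde{\Lambda}^{XY}$ written out in Lemma~\ref{decom}(iv) is, after the exact cancellation of the two $O(|p|^{\ga+2})$ contributions of $\la_{1}^{XY}(\tfrac{m_{Y}}{m_{X}}|p|)^{2}$ and $\tfrac{m_{Y}}{m_{X}}\la_{2}^{XY}$, a strictly positive multiple of $|p|^{\ga+2}$ for large $|p|$, so it controls the $\langle p\rangle^{(\ga+2)/2+\theta}$-weighted $L_{p}^{2}$ part of $|\pa_{p}^{\al}f|_{L_{\sigma,\theta}^{2}}^{2}$. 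Every other term is an error: (i) terms in which at least one derivative lands on $\sigma^{XY}$ or on $\la_{i}^{XY}$, which by the derivative bounds gain a power of $\langle p\rangle$ and, using the splitting of $\na_{p}(\cdot)$ into $\mathbb{P}(p)$- and $[I_{3}-\mathbb{P}(p)]$-directions together with Cauchy--Schwarz with a small parameter, are absorbed into $\de\sum_{|\overline{\al}|\le|\al|}|\pa_{p}^{\overline{\al}}f|_{L_{\sigma,\theta}^{2}}^{2}$; (ii) the term in which $\na_{p}$ hits the weight $\langle p\rangle^{2\theta}$, which is one power less singular and is absorbed in the same way after using $\sigma^{XY}(p)p=\la_{1}^{XY}(p)p$; and (iii) the contribution of the bounded region $|p|\lesssim R$, on which the zeroth-order coefficient need not be positive -- this is exactly the role of $-C_{\de}|\chi_{R}f|_{L_{p}^{2}}^{2}$, equivalently of the cut-off $\varpi\chi_{R}$ in the definition of $\Lambda^{XY}$ in Lemma~\ref{decom}(iv).

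For the kernel term, the case $(X,Y)=(A,B)$ is void since $\widetilde{K}^{AB}=0$. For $(X,Y)=(B,B)$ the key observation is that $Z^{BB}(p,p_{\ast})=M_{B}(p)M_{B}(p_{\ast})\Phi^{B,B}(\tfrac{p}{m_{B}}-\tfrac{p_{\ast}}{m_{B}})$ and all of its $p$-derivatives decay faster than any polynomial in both $\langle p\rangle$ and $\langle p_{\ast}\rangle$, because the product of Maxwellians dominates the polynomial growth of $\Phi^{B,B}$ and of its derivatives; hence $\pa_{p}^{\al}\widetilde{K}^{BB}f_{1}$ is again an integral operator whose kernel $k(p,p_{\ast})$ -- after transferring the $\na_{p_{\ast}}$ onto $f_{1}$ by integration by parts, which is the source of the sum over $|\overline{\al}|\le|\al|$ -- still has such fast decay. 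Pairing with $\langle p\rangle^{2\theta}\pa_{p}^{\al}\overline{f_{2}}$ and using Cauchy--Schwarz in $p$, the $f_{2}$-factor produces $|\pa_{p}^{\al}f_{2}|_{L_{\sigma,\theta}^{2}}$ because the Gaussian decay of $k$ more than compensates the weight $\langle p\rangle^{\ga+2+2\theta}$; for the remaining $p_{\ast}$-integral split $\R^{3}_{p_{\ast}}=\{|p_{\ast}|\ge R\}\cup\{|p_{\ast}|<R\}$. On $\{|p_{\ast}|\ge R\}$ the Gaussian tail of $k$ together with a factor $\langle p_{\ast}\rangle^{-(\ga+2)/2-\theta}$ yields a constant $\de=\de(R)\to0$ as $R\to\infty$ in front of $\sum_{|\overline{\al}|\le|\al|}|\pa_{p}^{\overline{\al}}f_{1}|_{L_{\sigma,\theta}^{2}}$, while on $\{|p_{\ast}|<R\}$ the kernel is bounded and the contribution is $\le C_{\de}|\chi_{R}f_{1}|_{L_{p}^{2}}$; this gives the stated bound on $\widetilde{K}^{XY}$. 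Finally, since $L_{XY}=-\widetilde{\Lambda}^{XY}+\widetilde{K}^{XY}$, combining the two estimates with $f_{1}=f_{2}=f$ and applying Young's inequality to the product $|\pa_{p}^{\al}f|_{L_{\sigma,\theta}^{2}}\{\de\sum_{|\overline{\al}|\le|\al|}|\pa_{p}^{\overline{\al}}f|_{L_{\sigma,\theta}^{2}}+C_{\de}|\chi_{R}f|_{L_{p}^{2}}\}$ (noting $\pa_{p}^{\al}f$ is itself one term of the sum, and relabelling $\de$) produces $(\ref{Guo-1})$. I expect the main obstacle to be the kernel estimate: one must verify quantitatively that every $p$-derivative of $Z^{BB}$ keeps Gaussian decay in both variables and then organize the $p_{\ast}$-integration so that the large-$|p_{\ast}|$ piece is genuinely small (producing the free small constant $\de$) while the bounded piece is controlled purely by $|\chi_{R}f_{1}|_{L_{p}^{2}}$; in the elliptic estimate the only real work is the projection bookkeeping needed to route each commutator term into the $\de$-sum rather than into a term carrying the unavailable weight $\langle p\rangle^{(\ga+2)/2+\theta}$ on $\mathbb{P}(p)\na_{p}f$.
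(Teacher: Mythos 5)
The paper does not prove this lemma; it is quoted directly as Lemma~6 of Guo \cite{[Guo]}, adapted to the two-species operators $\widetilde{\Lambda}^{XY}$, $\widetilde{K}^{XY}$, $\sigma^{XY}$ of Lemma~\ref{decom}, so there is no in-paper argument to compare against. Your proposal nevertheless reconstructs the architecture of Guo's proof correctly: integrate by parts in $p$ to expose the coercive form $(\sigma^{XY}\nabla_{p}\pa_{p}^{\al}f,\nabla_{p}\pa_{p}^{\al}\overline{f})$ and the zeroth-order multiplier, route every commutator and weight-derivative term into the $\de$-sum plus $C_{\de}|\chi_{R}f|_{L_{p}^{2}}^{2}$, and treat $\widetilde{K}^{BB}$ as an integral operator whose kernel is split between a far region giving a free small constant and a ball giving $C_{\de}|\chi_{R}f_{1}|_{L_{p}^{2}}$. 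Your accounting of the zeroth-order coefficient is also right: for large $|p|$ the term $\tfrac12\la_{1}^{XY}\tfrac{m_{Y}}{m_{X}}(\tfrac{m_{Y}}{m_{X}}|p|)^{2}$ cancels $\tfrac{m_{Y}}{m_{X}}\la_{2}^{XY}$ at leading order, leaving the strictly positive $\tfrac14\la_{1}^{XY}(\tfrac{m_{Y}}{m_{X}}|p|)^{2}\sim c\langle p\rangle^{\ga+2}$.

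The one place your sketch truly glosses over the hard step is the derivative bound for the kernel of $\widetilde{K}^{BB}$. You assert that the Gaussian factors dominate ``the polynomial growth of $\Phi^{B,B}$ and of its derivatives,'' but the difficulty is not at infinity. Since $\Phi^{B,B}(z)\sim|z|^{\ga+2}$, its $n$-th derivative blows up like $|z|^{\ga+2-n}$ near $z=0$, i.e.\ near the diagonal $p=p_{\ast}$; for $\ga$ close to $-2$ this fails to be locally integrable once $n\ge 1$ or $2$, and the Maxwellian factors are irrelevant there. The resolution (and the genuine content of Guo's Lemma~6) is that $\Phi^{B,B}$ depends only on $p-p_{\ast}$, so $\nabla_{p}\Phi=-\nabla_{p_{\ast}}\Phi$, and the integrations by parts in $p_{\ast}$ must be organized so that the number of derivatives ever landing on $\Phi^{B,B}$ stays bounded, the excess being diverted to the Maxwellians and to $f_{1}$ --- which is exactly what produces the sum over $|\overline{\al}|\le|\al|$. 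You flag this as ``the main obstacle'' at the end but do not carry it out; without it, the claim that the differentiated kernel ``still has such fast decay'' is unjustified, and that step remains a gap in the sketch. Everything else is in line with the cited argument.
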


Taking the Fourier transform in the space variable $x,$ we consider the
operator $L_{XY}^{\eta}=-\frac{i\eta\cdot p}{m_{X}}+L_{XY}$ , where
$(X,Y)=(A,B)$ or $(B,B)$. Then the corresponding spectrum $\mathrm{Spec}%
_{XY}(\eta)$ in the classical Hilbert space $L_{p}^{2}$ are derived as follows.

\begin{lemma}
\textrm{(Spectrum of $L_{AB}^{\eta}$ \cite{YangYu})}\label{SpecAB} Given
$\de>0$ \newline\noindent\textrm{(i)}\quad there exists $\tau_{2}=\tau
_{2}(\de)>0$ such that if $|\eta|>\de$,
\begin{equation}
\hbox{\rm Spec}(\eta)\subset\{z\in\mathbb{C}:\mathrm{Re}(z)<-\tau_{2}\}\,.
\label{pre.ab.e}%
\end{equation}
\newline\noindent\textrm{(ii)}\quad If $\eta=|\eta|\om$ and $|\eta|<\de$, the
spectrum within the region $\{z\in\mathbb{C}:Re(z)>-\tau_{2}\}$ is consisting
of exactly one eigenvalue $\{\la(\eta)\}$,
\begin{equation}
\hbox{\rm Spec}(\eta)\cap\{z\in\mathbb{C}:\mathrm{Re}(z)>-\tau_{2}%
\}=\{\la(\eta)\}\,, \label{pre.ab.f}%
\end{equation}
associated with its eigenfunction $\{e_{D}(\eta)\}$. They have the expansions
\begin{equation}%
\begin{array}
[c]{l}%
\label{pre.ab.h}\displaystyle\la(\eta)=-a_{2}|\eta|^{2}+O(|\eta|^{3})\,,\\
\\
\displaystyle e_{D}(\eta)=E_{D}+iE_{D,1}|\eta|+O(|\eta|^{2})\,,
\end{array}
\end{equation}
with $a_{2}=-\left<  L_{AB}^{-1}(p\cdot\om /m_{A})E_{D},(p\cdot\om /m_{A}%
)E_{D}\right>  _{p}>0$ and $E_{D,1}=L_{AB}^{-1}(p\cdot\om /m_{A})E_{D}.$ Here
$\{e_{D}(\eta)\}$ can be normalized by $\big<e_{D}(-\eta),e_{D}(\eta
)\big>_{p}=1$.

More precisely, the semigroup $e^{(-i\eta\cdot p +L_{AB})t}$ can be decomposed
as
\begin{equation}
\displaystyle e^{(-i\eta\cdot p +L_{AB})t}f=e^{(-i\eta\cdot p +L_{AB})t}%
\Pi_{\eta}^{D\perp}f+\chi_{\{|\eta|<\de\}}e^{\la(\eta)t}\big<e_{D}%
(-\eta),f\big>_{p}e_{D}(\eta)\,, \label{pre.ab.g}%
\end{equation}
and there exist $a(\tau_{2})>0$, $\overline{a}_{2}>0$ such that $|e^{(-i\eta
\cdot p +L_{AB})t}\Pi_{\eta}^{D\perp}|_{L_{p}^{2}}\lesssim e^{-a(\tau_{2})t}$
and $|e^{\la(\eta)t}|\leq e^{-\overline{a}_{2}|\eta|^{2}t}$.

Moreover, the eigenfunction $e_{D}(p,\eta)$ is smooth and decays faster than
any polynomial in $p$, i.e., for any $\theta\geq0$, $|\alpha|\geq0$,
$|\left\langle p\right\rangle ^{\theta}\pa_{p}^{\alpha}e_{D}(p,\eta
)|_{L_{p}^{2}}\leq C_{\theta,\alpha}|e_{D}(p,\eta)|_{L_{p}^{2}}$.
\end{lemma}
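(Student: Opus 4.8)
\emph{Proof plan.} The plan is to follow the spectral scheme of Yang and Yu \cite{YangYu}, treating $L_{AB}^{\eta}=-\frac{i\eta\cdot p}{m_{A}}+L_{AB}$ as a relatively compact perturbation of a dissipative operator. Since $\widetilde{K}^{AB}=0$ (Lemma \ref{decom}), one has $L_{AB}=-\widetilde{\Lambda}^{AB}$, and, in the notation of Lemma \ref{decom}(iv), $L_{AB}^{\eta}=B^{\eta}+\varpi\chi_{R}$ with $B^{\eta}:=-\Lambda^{AB}-\frac{i\eta\cdot p}{m_{A}}$. First I would record that, by Lemma \ref{decom}(iv) together with $\ga+2\ge0$, $\mathrm{Re}\langle B^{\eta}f,f\rangle_{p}=-\langle\Lambda^{AB}f,f\rangle_{p}\le-c_{0}|f|_{L_{\sigma}^{2}}^{2}\le-c_{0}'|f|_{L_{p}^{2}}^{2}$, so $B^{\eta}$ generates a semigroup with $|e^{tB^{\eta}}|_{L_{p}^{2}}\le e^{-c_{0}'t}$ and resolvent set containing $\{\mathrm{Re}\,z>-c_{0}'\}$. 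Since $\widetilde{\Lambda}^{AB}$ is elliptic with a potential growing like $\langle p\rangle^{\ga+2}$, it has compact resolvent, hence $\varpi\chi_{R}(z-B^{\eta})^{-1}$ is compact on $L_{p}^{2}$; the analytic Fredholm theorem then shows $\mathrm{Spec}(\eta)\cap\{\mathrm{Re}\,z>-c_{0}'\}$ consists of isolated eigenvalues of finite algebraic multiplicity, located exactly where $1\in\mathrm{Spec}(\varpi\chi_{R}(z-B^{\eta})^{-1})$. This reduces the lemma to locating those eigenvalues.

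For part (i) I would first note that for $\eta\ne0$ there is no eigenvalue on $i{\mathbb R}$: if $L_{AB}^{\eta}f=zf$ with $\mathrm{Re}\,z=0$, the real part of $\langle L_{AB}^{\eta}f,f\rangle_{p}$ gives $\langle L_{AB}f,f\rangle_{p}=0$, so the spectral gap forces $f\in\mathrm{null}(L_{AB})={\mathbb R}E_{D}$, and then $L_{AB}^{\eta}E_{D}=-\frac{i\eta\cdot p}{m_{A}}E_{D}$ is not proportional to $E_{D}$ --- a contradiction. For $\de<|\eta|\le R_{0}$ (a compact range), since the finitely many eigenvalues in $\{\mathrm{Re}\,z\ge-c_{0}'/2\}$ depend continuously on $\eta$ and avoid $i{\mathbb R}$, a compactness argument produces a uniform gap. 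For $|\eta|\ge R_{0}$, I would show $\|\chi_{R}(z-B^{\eta})^{-1}\chi_{R}\|_{L_{p}^{2}\to L_{p}^{2}}\to0$ as $|\eta|\to\infty$, uniformly for $\mathrm{Re}\,z\ge-c_{0}'/2$ --- a Riemann--Lebesgue / non-stationary-phase effect from the oscillation $e^{-is\eta\cdot p/m_{A}}$, using the compact resolvent of $\widetilde{\Lambda}^{AB}$ --- so that $1\notin\mathrm{Spec}(\varpi\chi_{R}(z-B^{\eta})^{-1})$ there. Taking $\tau_{2}$ to be the resulting uniform gap gives \eqref{pre.ab.e}.

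For part (ii) the plan is Kato analytic perturbation theory: for fixed $\om\in{\mathbb S}^{2}$, $|\eta|\mapsto L_{AB}^{|\eta|\om}=L_{AB}-i|\eta|\frac{\om\cdot p}{m_{A}}$ is a holomorphic family of type (A), and by Lemma \ref{null} the self-adjoint $L_{AB}$ has $0$ as a simple isolated eigenvalue (eigenfunction $E_{D}$) separated from $\mathrm{Spec}(L_{AB})\setminus\{0\}\subset(-\infty,-c_{0}]$. This yields, for $|\eta|<\de$, a unique eigenvalue $\la(\eta)$ and eigenfunction $e_{D}(\eta)$ analytic in $|\eta|$; shrinking $\de$ and combining with part (i), it is the only point of $\mathrm{Spec}(\eta)$ with $\mathrm{Re}\,z>-\tau_{2}$, i.e.\ \eqref{pre.ab.f}. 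The Rayleigh--Schr\"odinger expansion gives first-order coefficient $\langle-i\tfrac{\om\cdot p}{m_{A}}E_{D},E_{D}\rangle_{p}=0$ (oddness), hence no linear term; second-order coefficient $-a_{2}$ with $a_{2}=-\langle L_{AB}^{-1}(\tfrac{\om\cdot p}{m_{A}})E_{D},(\tfrac{\om\cdot p}{m_{A}})E_{D}\rangle_{p}>0$, the positivity because $(\tfrac{\om\cdot p}{m_{A}})E_{D}\in E_{D}^{\perp}$ where $L_{AB}$ is negative definite; and first-order eigenfunction correction $E_{D,1}=L_{AB}^{-1}(\tfrac{\om\cdot p}{m_{A}})E_{D}$ --- these are \eqref{pre.ab.h}. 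Since $L_{AB}$ is self-adjoint, $(L_{AB}^{\eta})^{*}=L_{AB}^{-\eta}$, so the left eigenfunction is $e_{D}(-\eta)$; the pairing $\langle e_{D}(-\eta),e_{D}(\eta)\rangle_{p}=1+O(|\eta|^{2})\ne0$, so the normalization $\langle e_{D}(-\eta),e_{D}(\eta)\rangle_{p}=1$ is admissible and $\Pi_{\eta}^{D}f=\langle e_{D}(-\eta),f\rangle_{p}\,e_{D}(\eta)$ is the spectral projection, giving the splitting \eqref{pre.ab.g}. For the decay of $e^{tL_{AB}^{\eta}}\Pi_{\eta}^{D\perp}$ I would bound the resolvent of $L_{AB}^{\eta}$ uniformly on a vertical line $\mathrm{Re}\,z=-a(\tau_{2})$ with $0<a(\tau_{2})<\tau_{2}/2$ avoiding $\la(\eta)$, via the splitting $B^{\eta}+\varpi\chi_{R}$, the estimate $|e^{tB^{\eta}}|\le e^{-c_{0}'t}$, boundedness of $\varpi\chi_{R}$, and a Duhamel/Neumann-series iteration; and $|e^{\la(\eta)t}|=e^{t\,\mathrm{Re}\,\la(\eta)}\le e^{-\overline{a}_{2}|\eta|^{2}t}$ follows from $\mathrm{Re}\,\la(\eta)=-a_{2}|\eta|^{2}+O(|\eta|^{3})$ for $|\eta|<\de$ small.

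Finally, for the regularity and decay of $e_{D}(p,\eta)$, the eigenrelation together with $\widetilde{K}^{AB}=0$ gives $\widetilde{\Lambda}^{AB}e_{D}=(\varpi\chi_{R}-i\tfrac{\eta\cdot p}{m_{A}}-\la(\eta))e_{D}$. Pairing with $\langle p\rangle^{2\theta}\overline{e_{D}}$ and taking real parts, the transport term drops out (it contributes a purely imaginary quantity), so Lemma \ref{energy-Guo} yields $|e_{D}|_{L_{\sigma,\theta}^{2}}^{2}\lesssim(\de'+|\mathrm{Re}\,\la(\eta)|)|e_{D}|_{L_{\sigma,\theta}^{2}}^{2}+C_{\de'}|\chi_{R}e_{D}|_{L_{p}^{2}}^{2}+(\text{lower order in }\theta)$; choosing $\de'$ small and using $|\mathrm{Re}\,\la(\eta)|\le a_{2}\de^{2}$ absorbs the first term, and one bootstraps in $\theta$ from $\theta=0$. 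Applying the same scheme to $\pa_{p}^{\al}e_{D}$ --- with the commutators $[\pa_{p}^{\al},\widetilde{\Lambda}^{AB}]$ and the lower-order terms from $\pa_{p}^{\al}(\tfrac{\eta\cdot p}{m_{A}}e_{D})$ absorbed using the already-proved lower-order bounds --- and inducting on $|\al|$ gives $|\langle p\rangle^{\theta}\pa_{p}^{\al}e_{D}|_{L_{p}^{2}}\le C_{\theta,\al}|e_{D}|_{L_{p}^{2}}$. \textbf{The hard part} will be part (i): a gap $\tau_{2}$ uniform over all $|\eta|>\de$. The bounded range is routine once embedded eigenvalues are excluded, but the regime $|\eta|\to\infty$ requires the quantitative decay $\|\chi_{R}(z-B^{\eta})^{-1}\chi_{R}\|\to0$ uniformly on a vertical strip --- the $p$-oscillation must destroy the otherwise-compact localized part of the resolvent --- which is the technical heart. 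A secondary delicate point is converting the spectral gap of $\Pi_{\eta}^{D\perp}L_{AB}^{\eta}$ into the semigroup estimate in \eqref{pre.ab.g}, since $L_{AB}^{\eta}$ is not sectorial, which is precisely why the splitting $B^{\eta}+\varpi\chi_{R}$ with $B^{\eta}$ exponentially decaying is needed.
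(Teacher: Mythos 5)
Your proposal is correct in spirit, but it reproves a large portion that the paper simply cites. The paper's proof is two lines long on the spectral-theoretic content: it invokes Yang--Yu \cite{YangYu} for parts (i)--(ii), the expansions \eqref{pre.ab.h}, and the semigroup decomposition/decay in \eqref{pre.ab.g}, and then only proves the final regularity statement about $e_{D}(p,\eta)$. You instead rederive (i)--(ii) from scratch via a Kato-type analytic-perturbation argument (holomorphic family of type (A), Rayleigh--Schr\"odinger expansion around the simple isolated eigenvalue $0$ of $L_{AB}$) together with a relative-compactness / analytic-Fredholm scheme and a uniform gap for $|\eta|\to\infty$. That route is coherent and is essentially the method of \cite{YangYu}, so it buys self-containedness at the cost of having to supply the two technical estimates you correctly flag as the ``hard parts'' --- the Riemann--Lebesgue-type decay $\|\chi_{R}(z-B^{\eta})^{-1}\chi_{R}\|\to0$ as $|\eta|\to\infty$ and the conversion of the spectral gap of $\Pi_{\eta}^{D\perp}L_{AB}^{\eta}$ into the semigroup bound (the operator not being sectorial), neither of which the paper needs to address because it offloads them to the cited reference. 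For the part of the lemma the paper actually proves --- that $e_{D}(p,\eta)$ is Schwartz in $p$ uniformly for $|\eta|<\de$ --- your argument (pair the eigenrelation with $\langle p\rangle^{2\theta}\overline{e_{D}}$, take real parts so the skew transport term drops, invoke Lemma \ref{energy-Guo}, absorb the small $|\mathrm{Re}\,\la(\eta)|$-term into $|\cdot|_{L_{\sigma,\theta}^{2}}$, and induct on $|\alpha|$ and bootstrap in $\theta$) is mathematically identical to the paper's; the paper merely phrases ``take real parts'' as adding the pairings for $\eta$ and $-\eta$, which is the same thing since $e_{D}(-\eta)=\overline{e_{D}(\eta)}$.
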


\begin{proof}
The spectrum analysis can be found in \cite{YangYu}. Here it is suffices to
show that $e_{D}(p,\eta)$ as a function of $p$ is in the Schwartz space. Note
that
\[
\left(  L_{AB}-\frac{i\eta\cdot p}{m_{A}}\right)  e_{D}(p,\eta)=\lambda
(\eta)e_{D}(p,\eta)\,,
\]
for any $\theta\geq0$ and $|\alpha|>0$. Taking the inner product, we have
\[
\left\langle \left\langle p\right\rangle ^{2\theta}\left(  L_{AB}-\frac
{i\eta\cdot p}{m_{A}}\right)  e_{D}(p,\eta),e_{D}(p,-\eta)\right\rangle
_{p}=\lambda(\eta)|\left\langle p\right\rangle ^{\theta}e_{D}(p,\eta
)|_{L_{p}^{2}}^{2}\,,
\]
and
\[
\left\langle \left\langle p\right\rangle ^{2\theta}\left(  L_{AB}+\frac
{i\eta\cdot p}{m_{A}}\right)  e_{D}(p,-\eta),e_{D}(p,\eta)\right\rangle
_{p}=\lambda(-\eta)|\left\langle p\right\rangle ^{\theta}e_{D}(p,\eta
)|_{L_{p}^{2}}^{2}\,.
\]
In view of (\ref{Guo-1}),
\[
|\partial_{p}^{\alpha}e_{D}(p,\eta)|_{L_{\sigma,\theta}^{2}}^{2}\lesssim
\frac{1}{2}(|\lambda(\eta)|+|\lambda(-\eta)|)|\left\langle p\right\rangle
^{\theta}\partial_{p}^{\alpha}e_{D}(p,\eta)|_{L_{p}^{2}}^{2}+\delta
\sum_{|\overline{\alpha}|\leq|\alpha|}|\partial_{p}^{\overline{\alpha}}%
e_{D}(p,\eta)|_{L_{\sigma,\theta}^{2}}^{2}+C_{\delta}|\chi_{R}e_{D}%
(p,\eta)|_{L_{p}^{2}}^{2}\,.
\]

\end{proof}

\begin{lemma}
\textrm{(Spectrum of $L_{BB}^{\eta}$ \cite{YangYu} )}\label{SpecBB} Given
$\de>0$, \newline\noindent\textrm{(i)}\quad there exists $\tau_{1}=\tau
_{1}(\de)>0$ such that if $|\eta|>\de$,
\begin{equation}
\hbox{\rm Spec}_{BB}(\eta)\subset\{z\in\mathbb{C}:\mathrm{Re}(z)<-\tau
_{1}\}\,. \label{pre.bb.e}%
\end{equation}
\newline\noindent\textrm{(ii)}\quad If $\eta=|\eta|\om$ and $|\eta|<\de$, the
spectrum within the region $\{z\in\mathbb{C}:Re(z)>-\tau_{1}\}$ consists of
exactly five eigenvalues $\{\sigma_{j}(\eta)\}_{j=0}^{4}$,
\begin{equation}
\hbox{\rm Spec}_{BB}(\eta)\cap\{z\in\mathbb{C}:\mathrm{Re}(z)>-\tau
_{1}\}=\{\sigma_{j}(\eta)\}_{j=0}^{4}\,, \label{pre.bb.f}%
\end{equation}
associated with the corresponding eigenfunctions $\{e_{j}(\eta)\}_{j=0}^{4}$.
They have the expansions
\begin{equation}%
\begin{array}
[c]{l}%
\label{pre.bb.h}\displaystyle\sigma_{j}(\eta)=ia_{j,1}|\eta|-a_{j,2}|\eta|
^{2}+O(|\eta|^{3})\,,\\
\\
\displaystyle e_{j}(\eta)=E_{j}+O(|\eta|)\,,
\end{array}
\end{equation}
with $a_{j,2}=\left<  L_{BB}^{-1}\mathbb{P}_{1}(p\cdot\om/m_{B})E_{j}%
,\mathbb{P}_{1}(p\cdot\om/m_{B})E_{j}\right>  _{p}>0$ and
\begin{equation}
\left\{
\begin{array}
[c]{l}%
a_{01}=\frac{1}{m_{B}}\sqrt{\frac{5}{3}}\,,\quad a_{11}= -\frac{1}{m_{B}}%
\sqrt{\frac{5}{3}}\,,\quad a_{21}=a_{31}=a_{41}=0\,,\\[2mm]%
E_{0}=\sqrt{\frac{3}{10}}\chi_{0}+\sqrt{\frac{1}{2}}\om\cdot\Psi+\sqrt
{\frac{1}{5}}\chi_{4}\,,\\[2mm]%
E_{1}=\sqrt{\frac{3}{10}}\chi_{0}-\sqrt{\frac{1}{2}}\om\cdot\Psi+\sqrt
{\frac{1}{5}}\chi_{4}\,,\\[2mm]%
E_{2}=-\sqrt{\frac{2}{5}}\chi_{0}+\sqrt{\frac{3}{5}}\chi_{4}\,,\\[2mm]%
E_{3}=\om_{1}^{\perp}\cdot\Psi\,,\\[2mm]%
E_{4}=\om_{2}^{\perp}\cdot\Psi\,,
\end{array}
\right.
\end{equation}
where $\Psi=(\chi_{1},\chi_{2},\chi_{3})$, $\{ \om, \om_{1}^{\perp},
\om_{2}^{\perp}\}$ is an orthonormal basis of ${\mathbb{R}}^{3}$,
$\{e_{j}(\eta)\}_{j=0}^{4}$ can be normalized by $\big<e_{j}(-\eta),e_{l}%
(\eta)\big>_{p}=\de_{jl}$, $0\leq j,l\leq4$.

More precisely, the semigroup $e^{(-i\eta\cdot p +L_{BB})t}$ can be decomposed
as
\begin{align}
\displaystyle e^{(-i\eta\cdot p +L_{BB})t}f  &  =e^{(-i\eta\cdot p +L_{BB}%
)t}\Pi_{\eta}^{\perp}f\nonumber\label{pre.bb.g}\\
&  \quad+\chi_{\{|\eta|<\de\}}\sum_{j=1}^{3}e^{\sigma_{j}(\eta)t}%
\big<e_{j}(-\eta),f\big>_{p}e_{j}(\eta)\,,
\end{align}
and there exist $a(\tau_{1})>0$, $\overline{a}_{1}>0$ such that $|e^{(-i\eta
\cdot p +L_{BB})t}\Pi_{\eta}^{\perp}|_{L_{p}^{2}}\lesssim e^{-a(\tau_{1})t}$
and $|e^{\sigma_{j}(\eta)t}|\leq e^{-\overline{a}_{1}|\eta|^{2}t}$ for all
$0\leq j\leq4$.

Moreover, the eigenfunction $e_{j}(p,\eta)$, $0\leq j\leq4$ are smooth and
decay faster than any polynomial in $p$, i.e., for any $\theta\geq0$,
$|\alpha|\geq0$, $|\left\langle p\right\rangle ^{\theta}\pa_{p}^{\alpha}%
e_{j}(p,\eta)|_{L_{p}^{2}}\leq C_{\theta,\alpha}|e_{j}(p,\eta)|_{L_{p}^{2}}$.
\end{lemma}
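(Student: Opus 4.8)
The plan is to follow the proof of Lemma~\ref{SpecAB} essentially line by line. All the spectral assertions — the gap $(\ref{pre.bb.e})$, the isolation of the five small eigenvalues $(\ref{pre.bb.f})$ with the expansions $(\ref{pre.bb.h})$ and the explicit $a_{j,1},E_j$, the semigroup splitting $(\ref{pre.bb.g})$, and the decay bounds $|e^{(-i\eta\cdot p+L_{BB})t}\Pi_\eta^{\perp}|_{L_p^2}\lesssim e^{-a(\tau_1)t}$ and $|e^{\sigma_j(\eta)t}|\le e^{-\overline{a}_1|\eta|^2 t}$ — come from the systematic procedure of Yang--Yu \cite{YangYu} applied to $L_{BB}^{\eta}=-\frac{i\eta\cdot p}{m_B}+L_{BB}$, exactly as for $L_{AB}^{\eta}$: for $|\eta|>\delta$ one combines the spectral gap of $-L_{BB}$ on $(\ker L_{BB})^{\perp}$ (Lemma~\ref{decom}(iv), Lemma~\ref{energy-Guo}), the skew-adjointness of $-\frac{i\eta\cdot p}{m_B}$, and the resolvent/compactness argument of \cite{YangYu}; for $|\eta|<\delta$ one runs analytic perturbation theory about the five-dimensional kernel $\ker L_{BB}=\mathrm{span}\{\chi_i\}_{i=0}^{4}$ (Lemma~\ref{null}), and the coefficients are read off the reduced problem — the eigenvalues of $\big(\langle\tfrac{\omega\cdot p}{m_B}\chi_i,\chi_k\rangle_p\big)_{0\le i,k\le4}$ give the two acoustic speeds $\pm\tfrac{1}{m_B}\sqrt{5/3}$ and the three non-propagating modes, hence the stated $a_{j,1}$ and eigenvectors $E_j$, while $a_{j,2}>0$ is the standard second-order correction, positive by coercivity of $-L_{BB}$ on $(\ker L_{BB})^{\perp}$. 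I would not reproduce these computations, which are those of \cite{YangYu}.

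The assertion that genuinely needs an argument is the last one, the Schwartz-type bound $|\langle p\rangle^{\theta}\partial_p^{\alpha}e_j(p,\eta)|_{L_p^2}\le C_{\theta,\alpha}|e_j(p,\eta)|_{L_p^2}$ for $0\le j\le4$, uniformly in $|\eta|<\delta$. I would argue as in Lemma~\ref{SpecAB}: rewrite the eigenrelation as $L_{BB}e_j=\sigma_j(\eta)e_j+\tfrac{i\eta\cdot p}{m_B}e_j$, apply $\partial_p^{\alpha}$, pair in $L_p^2$ against $\langle p\rangle^{2\theta}\partial_p^{\alpha}e_j(p,-\eta)$, add the analogous identity with $\eta$ replaced by $-\eta$, and take real parts. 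On the left-hand side, Lemma~\ref{decom}(i) and Lemma~\ref{energy-Guo} produce the coercive term $\gtrsim|\partial_p^{\alpha}e_j|_{L_{\sigma,\theta}^2}^2$ from $\widetilde{\Lambda}^{BB}$, up to the absorbable remainders $\delta\sum_{|\overline{\alpha}|\le|\alpha|}|\partial_p^{\overline{\alpha}}e_j|_{L_{\sigma,\theta}^2}^2$ and $C_{\delta}|\chi_R e_j|_{L_p^2}^2$ — the $\widetilde{K}^{BB}$ piece, absent in the $L_{AB}$ case, being bounded by the second inequality of Lemma~\ref{energy-Guo} and folded into these. On the right-hand side the $\sigma_j$-term is $\le|\sigma_j(\eta)|\,|\langle p\rangle^{\theta}\partial_p^{\alpha}e_j|_{L_p^2}^2\le C|\eta|\,|\langle p\rangle^{\theta}\partial_p^{\alpha}e_j|_{L_p^2}^2$, a small fraction of the coercive term once $\delta$ is small (using $\gamma+2\ge0$, so $\langle p\rangle^{2\theta}\le\langle p\rangle^{\gamma+2+2\theta}$), while the transport term, after symmetrization in $\pm\eta$ and one integration by parts in $p$, leaves only the commutators $[\partial_p^{\alpha},\eta\cdot p]$, differential operators of order $|\alpha|-1$ whose contribution is $\le C_{|\eta|}\sum_{|\overline{\alpha}|<|\alpha|}|\langle p\rangle^{\theta}\partial_p^{\overline{\alpha}}e_j|_{L_p^2}\,|\partial_p^{\alpha}e_j|_{L_{\sigma,\theta}^2}$. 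Collecting and absorbing gives $|\partial_p^{\alpha}e_j|_{L_{\sigma,\theta}^2}^2\lesssim\sum_{|\overline{\alpha}|<|\alpha|}|\langle p\rangle^{\theta}\partial_p^{\overline{\alpha}}e_j|_{L_p^2}^2+|e_j|_{L_p^2}^2$, which closes an induction on $|\alpha|$ for fixed $\theta$, and then on $\theta$, the base case $\alpha=0$ being handled the same way.

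The step I expect to be the main obstacle is precisely the control of the transport term $\tfrac{i\eta\cdot p}{m_B}e_j$ inside the $\langle p\rangle^{2\theta}$-weighted estimate. Since $\gamma$ runs down to $-2$, the dissipation $|\cdot|_{L_{\sigma,\theta}^2}$ only carries the weight $\langle p\rangle^{\frac{\gamma+2}{2}+\theta}$, so the extra factor $|p|$ generated by this term is not dominated by a naive Cauchy--Schwarz once $\gamma<-1$; one must exploit the skew-adjointness of $-\frac{i\eta\cdot p}{m_B}$ together with the symmetrization in $\pm\eta$ (so that the order-$|\alpha|$ part is purely imaginary and drops on taking real parts, as in Lemma~\ref{SpecAB}), and, for $\gamma<-1$, the analyticity $e_j(\eta)=E_j+O(|\eta|)$ with $E_j$ real, which reduces the surviving contribution to $O(|\eta|^2)$ times weighted norms of the explicit, Schwartz lower-order Taylor coefficients of $e_j(\eta)$. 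A minor extra point, not present in Lemma~\ref{SpecAB}, is that $\sigma_2,\sigma_3,\sigma_4$ agree at leading order, so $e_2,e_3,e_4$ are not individually singled out; but the estimate above is linear in $e_j$ and uses only the eigenrelation, hence applies verbatim to any basis of the (at most) three-dimensional eigenspace, which is all that is needed.
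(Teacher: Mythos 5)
Your proposal follows the paper's own (implicit) route: the spectral structure is imported from \cite{YangYu}, and the Schwartz bound on $e_j(p,\eta)$ is deduced from the eigenrelation by the same symmetrized weighted inner-product argument used for $e_D(p,\eta)$ in Lemma~\ref{SpecAB}. The paper gives no separate proof of Lemma~\ref{SpecBB} and evidently intends precisely this transfer, so your argument is the intended one, and your observation at the end — that the estimate is linear in $e_j$ and only uses the eigenrelation, hence applies to any basis of the degenerate eigenspace spanned by $e_2,e_3,e_4$ — is a correct and useful clarification of what the paper leaves tacit.

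One remark: the paragraph you flag as the ``main obstacle'' overstates the difficulty. After differentiating the eigenrelation, pairing against $\langle p\rangle^{2\theta}\partial_p^{\alpha}e_j(p,-\eta)$, and adding the $\eta\mapsto-\eta$ version, the top-order transport term $\frac{i\eta\cdot p}{m_B}\partial_p^{\alpha}e_j$ cancels identically — it enters the two pairings with opposite sign, regardless of $\gamma$. What survives from the transport part are only the Leibniz commutators $[\partial_p^{\alpha},\eta\cdot p]$, which are $p$-derivatives of order $|\alpha|-1$ carrying a factor of $\eta$ and \emph{no} extra power of $|p|$; together with the eigenvalue term $\sigma_j(\eta)=O(|\eta|)$, they are absorbed by Cauchy--Schwarz, the bound $\langle p\rangle^{2\theta}\le\langle p\rangle^{\gamma+2+2\theta}$ (valid since $\gamma\ge-2$), Lemma~\ref{energy-Guo}, and the induction on $|\alpha|$, uniformly in $\gamma\in[-2,1]$. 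No appeal to the analyticity of $\eta\mapsto e_j(\eta)$ or to the reality of the leading coefficient $E_j$ is needed, and the case $\gamma<-1$ presents no additional difficulty.
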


\section{The result for $g$}

In accordance with the theory of the semi-groups, the solution of the
linearized Landau equation
\[
\partial_{t}g+\frac{1}{m_{A}}p\cdot\nabla_{x}g=L_{AB}g,\ \ g\left(
0,x,p\right)  =g_{in}\left(  x,p\right)  ,
\]
can be represented by
\[
g\left(  t,x,p\right)  =\mathbb{G_{AB}}^{t}g_{in}=\int_{\mathbb{R}^{3}%
}e^{i\eta\cdot x+\left(  -ip\cdot\eta/m_{A}+L_{AB}\right)  t}\hat{g_{in}%
}\left(  \eta,p\right)  d\eta,
\]
where $\mathbb{G_{AB}}^{t}$ is the solution operator and $\hat{g}$ means the
Fourier transform in the space variable $x$. Furthermore, we introduce a long
wave-short wave decomposition for the solution $g$:%
\[
g=g_{L}+g_{S},
\]
where%
\[
g_{L}=\int_{\left\vert \eta\right\vert <\de}e^{i\eta\cdot x+\left(
-ip\cdot\eta/m_{A}+L_{AB}\right)  t}\hat{g_{in}}\left(  \eta,p\right)  d\eta,
\]%
\[
g_{S}=\int_{\left\vert \eta\right\vert \geq\de}e^{i\eta\cdot x+\left(
-ip\cdot\eta/m_{A}+L_{AB}\right)  t}\hat{g_{in}}\left(  \eta,p\right)  d\eta.
\]
In order to study the long wave part $g_{L},$ we need further to decompose it
into the fluid part and non-fluid part, i.e., $g_{L}=g_{L,0}+g_{L,\bot},$ where%

\[
g_{L,0}=\int_{|\eta|<\de}e^{\lambda(\eta)t}e^{i\eta\cdot x}\big<e_{D}%
(-\eta),\hat{g_{in}}\big>_{p}e_{D}(\eta)d\eta\,,
\]%
\[
g_{L,\bot}=\int_{|\eta|<\de}e^{i\eta\cdot x+\left(  -ip\cdot\eta/m_{A}%
+L_{AB}\right)  t}\Pi_{\eta}^{D\perp}\hat{g_{in}}\left(  \eta,p\right)
d\eta.
\]

The main result for the solution $g$ is stated as below.

\begin{theorem}
\label{theorem-g} Let $k\ $and$\ \ell\ $be nonnegative integers. Then for
$t\geq1,$
\[
\left\Vert \nabla_{p}^{\ell}\nabla_{x}^{k}g_{L,0}\right\Vert _{L_{x}^{\infty
}L_{p}^{2}}\lesssim(1+t)^{-(3+k)/2}\left\Vert g_{in}\right\Vert _{L_{x}%
^{1}L_{p}^{2}}\,,
\]%
\[
\left\Vert \nabla_{p}^{\ell}\nabla_{x}^{k}g_{L,\perp}\right\Vert
_{L_{x}^{\infty}L_{p}^{2}}\lesssim e^{-Ct}\left(  \left\Vert g_{in}\right\Vert
_{L^{2}\left(  w_{\ell+1}\right)  }+\left\Vert g_{in}\right\Vert _{L_{x}%
^{1}L_{p}^{2}}\right)  ,
\]
and
\[
\left\Vert \nabla_{p}^{\ell}\nabla_{x}^{k}g_{S}\right\Vert _{L_{x}^{\infty
}L_{p}^{2}}\lesssim e^{-Ct}\left\Vert g_{in}\right\Vert _{L^{2}\left(
w_{k+\ell+2}\right)  },
\]
the constant $C>0$ depending only upon $k$ and $\ell$. Here
\[
w_{n}\equiv%
\begin{cases}
1, & \ga\in\left[  0,1\right]  ,\\
\left\langle p\right\rangle ^{|\ga|n}, & \ga\in\left[  -2,0\right)  .
\end{cases}
\]

\end{theorem}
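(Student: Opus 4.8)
The plan is to prove the three estimates separately, exploiting the long wave–short wave decomposition $g=g_{L,0}+g_{L,\perp}+g_S$ and treating the $x$-derivatives by bringing down factors of $\eta$ under the Fourier integral, while the $p$-derivatives are handled by the smoothness/decay properties of eigenfunctions (long wave parts) and by regularization estimates (short wave part). Throughout I would use $\|F\|_{L_x^\infty} \lesssim \|\widehat F\|_{L_\eta^1}$ together with Plancherel $\|F\|_{L_x^2}=\|\widehat F\|_{L_\eta^2}$, and the Hausdorff–Young-type bound $\|\widehat{g_{in}}\|_{L_\eta^\infty}\lesssim\|g_{in}\|_{L_x^1}$.

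\emph{The fluid long-wave part $g_{L,0}$.} Here $\nabla_p^\ell\nabla_x^k g_{L,0}=\int_{|\eta|<\delta} (i\eta)^k e^{\lambda(\eta)t} e^{i\eta\cdot x}\langle e_D(-\eta),\widehat{g_{in}}\rangle_p \,\nabla_p^\ell e_D(\eta)\,d\eta$. Taking the $L_p^2$ norm inside and then the $L_x^\infty$ norm, I bound it by $\int_{|\eta|<\delta}|\eta|^k e^{-\overline a_2|\eta|^2 t}\,|\langle e_D(-\eta),\widehat{g_{in}}\rangle_p|\,|\nabla_p^\ell e_D(\eta)|_{L_p^2}\,d\eta$. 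By Lemma \ref{SpecAB}, $|e^{\lambda(\eta)t}|\le e^{-\overline a_2|\eta|^2t}$, $|\nabla_p^\ell e_D(\eta)|_{L_p^2}\lesssim |e_D(\eta)|_{L_p^2}\lesssim 1$, and $|\langle e_D(-\eta),\widehat{g_{in}}\rangle_p|\lesssim |\widehat{g_{in}}(\eta,\cdot)|_{L_p^2}\lesssim \|g_{in}\|_{L_x^1 L_p^2}$. The remaining scalar integral $\int_{\mathbb R^3}|\eta|^k e^{-\overline a_2|\eta|^2t}\,d\eta$ scales like $t^{-(3+k)/2}$, and since $t\ge1$ this is $\lesssim(1+t)^{-(3+k)/2}$; this gives the first estimate.

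\emph{The non-fluid long-wave part $g_{L,\perp}$ and the short-wave part $g_S$.} For $g_{L,\perp}$, write $\nabla_p^\ell\nabla_x^k g_{L,\perp}=\int_{|\eta|<\delta}(i\eta)^k e^{i\eta\cdot x}\nabla_p^\ell\big(e^{(-ip\cdot\eta/m_A+L_{AB})t}\Pi_\eta^{D\perp}\widehat{g_{in}}\big)\,d\eta$. Since $|\eta|<\delta$ the factor $|\eta|^k$ is bounded, so I only need an $e^{-Ct}$ bound on $|\nabla_p^\ell(\cdots)|_{L_p^2}$ integrated in $\eta$, which follows from the exponential decay $|e^{(-i\eta\cdot p+L_{AB})t}\Pi_\eta^{D\perp}|_{L_p^2}\lesssim e^{-a(\tau_2)t}$ on the complementary subspace — upgraded to $p$-derivatives by the regularization/energy estimate (Lemma \ref{energy-Guo}, and the regularization Lemma referenced as \ref{regularization}), at the expense of the polynomial weight $w_{\ell+1}$ on $g_{in}$ to absorb the extra momentum growth; the $L_\eta^1$ integral over the ball $|\eta|<\delta$ is finite, and combining with Plancherel for the remaining $L^2(w_{\ell+1})$ piece produces the stated right-hand side. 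For $g_S$ (the regime $|\eta|\ge\delta$) there is no spectral gap subtlety: the full semigroup decays, $|e^{(-i\eta\cdot p+L_{AB})t}|_{L_p^2}\lesssim e^{-\tau_2 t}$, but now $|\eta|\ge\delta$ is unbounded, so each $x$-derivative costs one factor of $|\eta|$ and the $L_\eta^1$ integrability over $\{|\eta|\ge\delta\}$ must be recovered by trading the growth in $\eta$ for $p$-regularity and weight — this is where the regularization estimate in \cite{[10]}, iterated $k+\ell$-times as described in the introduction, is essential, and it forces the weight $w_{k+\ell+2}$. Concretely I would apply the time-decay of the semigroup for $t\ge1/2$ and the smoothing estimate on $[0,1/2]$ to gain the needed $\ell$ momentum and $k$ space derivatives, then use $\|g_S\|_{L_x^\infty L_p^2}\lesssim\|\widehat{g_S}\|_{L_\eta^1 L_p^2}\lesssim\|g_S\|_{L^2(w_{k+\ell+2})}$-type control via Cauchy–Schwarz in $\eta$ combined with the energy estimates.

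\emph{Main obstacle.} The routine part is the fluid estimate; the genuine difficulty is the $p$-regularization for $g_{L,\perp}$ and especially $g_S$, i.e. showing that the transport-plus-ellipticity semigroup $e^{t\mathcal L_{AB}}$ gains $\ell$ derivatives in $p$ and $k$ in $x$ instantaneously while the norm inflates only by the polynomial weight $w_{k+\ell+2}$. This requires the bootstrap scheme sketched in the introduction: a Picard-type iteration in $x$ (treating $K_{AB}$ as a source and peeling off the singular waves after finitely many steps) to get $x$-regularity, and then a carefully designed functional incorporating all lower-order mixed derivatives — needed because $[p\cdot\nabla_x,\nabla_p]\ne0$ — to propagate $p$-regularity via Lemma \ref{energy-Guo}. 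I expect the bookkeeping of which weight is needed at each order, and verifying that the commutator terms are absorbed by the coercivity with the small parameter $\delta$, to be the technically heaviest step; everything else is Fourier-analytic packaging.
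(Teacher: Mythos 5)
Your treatment of $g_{L,0}$ is complete and matches the paper's argument. The gap is in how you produce the exponential time decay for $\nabla_p^\ell\nabla_x^k g_{L,\perp}$ and $\nabla_p^\ell\nabla_x^k g_S$. You propose to "upgrade" the spectral-gap decay $|e^{(-i\eta\cdot p+L_{AB})t}\Pi_\eta^{D\perp}|_{L_p^2}\lesssim e^{-a(\tau_2)t}$ to the $p$-derivatives "by the regularization/energy estimate," but neither Lemma~\ref{regularization} nor Lemma~\ref{energy-Guo} (nor the Picard-iteration bootstrap you cite) gives exponential-in-time decay of a high-order derivative norm. Those tools yield instantaneous smoothing for $0<t\le 1$ and \emph{uniform-in-$t$ boundedness} of $\|\nabla_p^\ell\nabla_x^k g\|_{L^2}$ for $t\ge 1$ (Proposition~\ref{energy*}/Proposition~\ref{aa.7} in the paper), not decay. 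The differential inequality in Lemma~\ref{energy-Guo} has lower-order forcing terms on the right that themselves do not decay, so it cannot close to an exponential estimate directly.

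The paper's mechanism — and the step you are missing — is an \emph{interpolation between derivative orders}: one has exponential decay only at zero derivative order ($\|g_S\|_{L^2}\lesssim e^{-Ct}\|g_{in}\|_{L^2}$ from the spectral gap, and similarly $\|\nabla_x^k g_{L,\perp}\|_{L^2}\lesssim e^{-Ct}\|g_{in}\|_{L^2}$ via interpolation in $x$ from $\|g_{L,\perp}\|_{L^2}$), and uniform bounds at all higher orders (Propositions~\ref{Mixed-decay}, \ref{energy*}, \ref{aa.7}); a Gagliardo--Nirenberg chain such as $\|\nabla_x^k g_S\|_{L^2}\lesssim\|g_S\|_{L^2}^{2^{-k}}\|g_{in}\|_{L^2(w_{k+1})}^{1-2^{-k}}$, iterated in both $x$ and $p$, transfers the base-level decay up to $(\ell,k)$-order at the cost of the degraded rate $e^{-2^{-(\ell+1)}Ct}$ (resp.\ $e^{-2^{-(k+\ell+2)}Ct}$) and the higher weight on $g_{in}$. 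This explains why the constant $C$ in the theorem genuinely depends on $k,\ell$; your sketch does not account for that dependence. Finally, the $L_x^\infty$ control in the paper comes from the Sobolev inequality $\|F\|_{L_x^\infty}\lesssim\sqrt{\|\nabla_x F\|_{L_x^2}\|\nabla_x^2 F\|_{L_x^2}}$ applied to the $L^2$ bounds, rather than from $\|F\|_{L_x^\infty}\lesssim\|\widehat F\|_{L_\eta^1}$ combined with Cauchy--Schwarz; the two routes are cosmetically different but the former integrates more cleanly with the derivative-interpolation scheme because everything stays in $L^2$.
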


In what follows, we prove this theorem by several stages. To begin with, we
establish the $x$-regularity of $g$ in subsection \ref{improve-x}. After that,
we take advantage of it to improve the $p$-regularity via certain energy
functionals and Duhamel's principle in the next subsection. On the basis of
these regularization estimates, we finally obtain the decay rate by invoking
the Sobolev inequality and interpolation inequality.

\subsection{Improvement of the $x$-regularity}

\label{improve-x}

It immediately follows from the Fourier transform in $x$ and the
$p$-regularity of the eigenfunction $e_{D}(\eta)$ that
\[
\left\Vert \nabla_{p}^{\ell}\nabla_{x}^{k}g_{L,0}\right\Vert _{L_{x}^{\infty
}L_{p}^{2}}\lesssim(1+t)^{-(3+k)/2}\left\Vert g_{in}\right\Vert _{L_{x}%
^{1}L_{p}^{2}}.
\]
On the other hand, from the spectral gap it follows that
\[
\left\Vert \nabla_{x}^{k}g_{L,\perp}\right\Vert _{L_{x}^{\infty}L_{p}^{2}%
}\lesssim e^{-Ct}\left\Vert g_{in}\right\Vert _{L_{x}^{1}L_{p}^{2}},
\]
and
\[
\left\Vert g_{S}\right\Vert _{L^{2}}\lesssim e^{-Ct}\left\Vert g_{in}%
\right\Vert _{L^{2}}\,.
\]
Noticing that $g_{S}=g-g_{L},$ we need the regularization estimate of $g$ in
$x$ further. To that end, we introduce a new operator $\mathcal{L}_{AB}$:%
\[
\mathcal{L}_{AB}=-\frac{1}{m_{A}}p\cdot\nabla_{x}-\Lambda^{AB}.
\]
Let $n$ be a nonnegative integer and let $\theta\geq0$. Hereafter, we define
$m_{0}=m_{\theta}=\left\langle p\right\rangle ^{\theta}\ $and
\[
m_{n}\equiv%
\begin{cases}
m_{\theta}, & \ga\in\left[  0,1\right]  ,\\
\left\langle p\right\rangle ^{|\ga|n}m_{\theta}, & \ga\in\left[  -2,0\right)
.
\end{cases}
\]
Recall the energy estimate and the regularization property of the operator
$e^{t\mathcal{L}_{AB}}$ in \cite{[10]}:

\begin{lemma}
\label{regularization} If $u$ solves the equation
\begin{equation}
\left\{
\begin{array}
[c]{l}%
\pa_{t}u=\mathcal{L}_{AB}u\,,\\[4mm]%
u(0,x,p)=u_{0}(x,p)\,,
\end{array}
\right.
\end{equation}
then there exists $C>0$ such that
\[
\left\Vert e^{t\mathcal{L}_{AB}}u_{0}\right\Vert _{L^{2}(m_{\theta})}\leq
e^{-Ct}\left\Vert u_{0}\right\Vert _{L^{2}(m_{\theta})}\,.
\]
Moreover, for $0<t\leq1$, we have
\[
\int|\nabla_{p}e^{t\mathcal{L}_{AB}}u_{0}|^{2}m_{\theta}dxdp=O(t^{-1}%
)\int|u_{0}|^{2}m_{1}dxdp,
\]
and
\[
\int|\nabla_{x}e^{t\mathcal{L}_{AB}}u_{0}|^{2}m_{\theta}dxdp=O(t^{-3}%
)\int|u_{0}|^{2}m_{1}dxdp\,.
\]

\end{lemma}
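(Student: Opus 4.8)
This lemma is the regularization estimate of \cite{[10]}; here is how I would prove it. The first (decay) estimate is a direct weighted energy estimate. Pair $\pa_t u=\mathcal{L}_{AB}u$ with $m_\theta u$ and integrate in $(x,p)$; the transport term disappears because $m_\theta=\langle p\rangle^\theta$ is independent of $x$, so $\iint(p\cdot\na_x u)\,u\,m_\theta\,dxdp=\tfrac12\iint p\cdot\na_x(u^2 m_\theta)\,dxdp=0$, leaving $\tfrac12\tfrac{d}{dt}\|u\|_{L^2(m_\theta)}^2=-\iint(\Lambda^{AB}u)\,u\,m_\theta\,dxdp$. By the coercivity of Lemma~\ref{decom}(iv) combined with Lemma~\ref{energy-Guo} used with $|\alpha|=0$ (which, thanks to the $\varpi\chi_R$ shift built into $\Lambda^{AB}$, leaves no $\chi_R$ remainder), this is $\le-c_0\|u\|_{L^2_{\sigma,\theta}}^2$; since $\ga+2\ge0$ gives $\langle p\rangle^{\ga+2}\ge1$ and hence $\|u\|_{L^2_{\sigma,\theta}}^2\ge\|u\|_{L^2(m_\theta)}^2$, Gronwall's inequality yields the exponential decay.

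For the two regularization estimates I would propagate a time-weighted hypoelliptic energy functional of H\'erau--Villani type,
\[
\mathcal{F}(t)=\|u\|_{L^2(m_1)}^2+a\,t\,\|\na_p u\|_{L^2(m_0)}^2+2b\,t^2\iint(\na_p u)\cdot(\na_x u)\,m_0\,dxdp+c\,t^3\,\|\na_x u\|_{L^2(m_0)}^2,
\]
where $0<c\ll b\ll a\ll1$ are chosen at the very end and $b^2<ac$, so that $\mathcal{F}(t)$ is equivalent to $\|u\|_{L^2(m_1)}^2+at\|\na_p u\|_{L^2(m_0)}^2+ct^3\|\na_x u\|_{L^2(m_0)}^2$. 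Placing the heavier weight $m_1$ on the zeroth-order term (it equals $m_0$ when $\ga\ge0$ and carries the extra factor $\langle p\rangle^{|\ga|}$ when $\ga<0$) is essential, because the degenerate dissipation of $\Lambda^{AB}$ only controls $\|\langle p\rangle^{\ga/2}\na_p u\|_{L^2(m_1)}^2=\|\na_p u\|_{L^2(m_0)}^2$, and this surplus weight is precisely what the soft-potential degeneracy consumes.

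I would then differentiate $\mathcal{F}$ along the flow. The only commutators to deal with are $[\na_x,\mathcal{L}_{AB}]=0$ and $[\na_p,\mathcal{L}_{AB}]=-\tfrac{1}{m_A}\na_x-[\na_p,\Lambda^{AB}]$, and all contributions of $\Lambda^{AB}$ acting on $u,\na_p u,\na_x u$ are handled by Lemma~\ref{energy-Guo}, which is stated directly for $\pa_p^\alpha\widetilde\Lambda^{XY}f$ and so already absorbs the commutator $[\na_p,\Lambda^{AB}]$ into its $\de$-small terms. The structural heart is that $2bt^2\langle\na_p(\mathcal{L}_{AB}u),\na_x u\rangle_{L^2(m_0)}$ produces the good term $-\tfrac{2b}{m_A}t^2\|\na_x u\|_{L^2(m_0)}^2$; together with the order-$t^3$ dissipation coming from $\tfrac{d}{dt}(ct^3\|\na_x u\|^2)$ this absorbs the bad explicit-derivative term $3ct^2\|\na_x u\|^2$ and the $\na_x$-parts of the cross terms (including $-\tfrac{2a}{m_A}t\langle\na_x u,\na_p u\rangle$ from the transport commutator) once the mixed terms are split with a $t$-dependent Young inequality, while the residual term $2bt^2\langle\mathcal{L}_{AB}\na_p u,\na_x u\rangle$, carrying one extra derivative, is controlled by the order-$t$ and order-$t^3$ dissipations with $t^2$ as their geometric mean; the zeroth-order dissipation $-c_0\|\na_p u\|_{L^2(m_0)}^2$ absorbs $a\|\na_p u\|^2$ and the $\na_p$-parts of the cross terms. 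Fixing $\de$ first, then $c\ll b\ll a\ll1$, every remaining term is either absorbed or bounded by $C\mathcal{F}$, so Gronwall on $[0,1]$ gives $\mathcal{F}(t)\lesssim\mathcal{F}(0)=\|u_0\|_{L^2(m_1)}^2$; reading off the equivalent summands yields $t\|\na_p u\|_{L^2(m_0)}^2\lesssim\|u_0\|_{L^2(m_1)}^2$ and $t^3\|\na_x u\|_{L^2(m_0)}^2\lesssim\|u_0\|_{L^2(m_1)}^2$, i.e. the claimed $O(t^{-1})$ and $O(t^{-3})$ bounds.

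The step I expect to be the main obstacle is the simultaneous bookkeeping of powers of $t$ and powers of $\langle p\rangle$ when absorbing the cross terms. Since $t\le1$ forces $t\ge t^2\ge t^3$, the bad term $3ct^2\|\na_x u\|^2$ generated by the $t^3$-prefactor cannot be controlled by the order-$t^3$ dissipation alone; the cross term is introduced precisely to supply $\na_x$-dissipation at order $t^2$, and for the same reason the mixed terms $\langle\na_p u,\na_x u\rangle$ must be split with a Young constant of order $t$, so that their $\na_x$-part lands at order $t^2$ and their $\na_p$-part at order $t^0$. At the same time the weight loss $\langle p\rangle^\ga$ incurred by the degenerate ellipticity (for soft potentials) must be paid for by the surplus built into $m_1$ at every occurrence, which means tracking exactly which weight accompanies each derivative through every commutator and every application of Lemma~\ref{energy-Guo}; once the hierarchy is fixed consistently --- in particular so that the large constants $C_\de$ multiplying the $\chi_R$-terms at $|\alpha|=1$ are still beaten by the base dissipation after $a$ is taken small --- the rest is the routine hypoelliptic energy computation.
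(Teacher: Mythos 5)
Your proposal is correct and reproduces, in condensed form, the argument of Carrapatoso--Tristani--Wu~\cite{[10]}, which is exactly the source the paper cites for this lemma (the paper only recalls the statement and does not give its own proof). The weighted $L^2(m_\theta)$ decay via coercivity of $\Lambda^{AB}$ (with the $\varpi\chi_R$ shift absorbing the $C_\de|\chi_R\cdot|^2$ remainder from Lemma~\ref{energy-Guo}) and the H\'erau--Villani time-weighted functional with $t^0,t,t^2,t^3$ prefactors, the positive cross term generating $-\tfrac{2b}{m_A}t^2\|\nabla_x u\|^2$, and the surplus weight $m_1$ on the base term compensating the $\langle p\rangle^{\ga}$-degeneracy of the dissipation is the standard hypoelliptic regularization mechanism used there. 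One small bookkeeping point worth making explicit in a full write-up: the two constraints $c\ll b\ll a$ and $b^2<ac$ are not automatic simultaneously (a geometric progression $a=\eps,b=\eps^2,c=\eps^3$ gives $b^2=ac$), so the hierarchy should be chosen as, e.g., $a=\eps$, $b=\eps^3$, $c=\eps^4$, which both respects $c\ll b\ll a$ and gives $b^2=\eps^6\ll ac=\eps^5$. With that caveat the argument closes as you describe.
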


Utilizing this lemma, we establish the regularity in $x$ as below. Worthy to be mentioned here, the proof for the case $k=1$ is
crucial. For the $\mathbf{x}$-derivatives, from Lemma
\ref{regularization}, $t^{-3/2}$ is not integrable as $t$ is
small. It appears to be harmful whenever we use the Duhamel principle. In
effect, one can see that the $L_{x}^{2}L_{p}^{2}\left(  m_{\theta}\right)
$ norms of the integrands in $\left(  \ref{aa.9}\right)
$ and $\left(  \ref{aa.10}\right)  $ are integrable in $t$
if we appropriately couple the operator $t\nabla_{x}$ with
$e^{t\mathcal{L}_{AB}}$.

\begin{lemma}
\label{x-regularityST} Let $f$ be a solution of the linearized Landau
equation
\begin{equation}
\left\{
\begin{array}
[c]{l}%
\partial_{t}f+\frac{1}{m_{A}}p\cdot\nabla_{x}f=L_{AB}f,%
\vspace{3mm}%
\\
f\left(  0,x,p\right)  =f_{in}\left(  x,p\right)  .
\end{array}
\right.  \label{Equ-AB}%
\end{equation}

Let $k\in\mathbb{N\cup\{}0\mathbb{\}}$. Then

\noindent\textrm{(i)}\quad\ for $0<t\leq1$%
\[
\left\Vert \nabla_{x}^{k}f\right\Vert _{L^{2}\left(  m_{\theta}\right)
}\lesssim t^{-\frac{3}{2}k}\left\Vert f_{in}\right\Vert _{L^{2}\left(
m_{k}\right)  }.
\]
\noindent\textrm{(ii)}\quad For $t\geq1,$%
\[
\left\Vert \nabla_{x}^{k}f\right\Vert _{L^{2}\left(  m_{\theta}\right)
}\lesssim\left\Vert f_{in}\right\Vert _{L^{2}\left(  m_{k}\right)  }.
\]

\end{lemma}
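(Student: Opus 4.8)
The plan is to run an induction on $k$, where the crucial step---as the authors already flag---is the passage from lower orders to the next order, and the heart of the matter is the case $k=1$. The underlying mechanism is that $f$ solves $\partial_t f = \mathcal L_{AB} f + K^{AB} f$, so by Duhamel's principle
\[
f(t) = e^{t\mathcal L_{AB}} f_{in} + \int_0^t e^{(t-s)\mathcal L_{AB}} K^{AB} f(s)\, ds .
\]
For $k=0$ both assertions are immediate from the decay estimate $\Vert e^{t\mathcal L_{AB}} u_0\Vert_{L^2(m_\theta)} \le e^{-Ct}\Vert u_0\Vert_{L^2(m_\theta)}$ of Lemma \ref{regularization} applied in the Duhamel formula, using that $K^{AB}$ is bounded on $L^2(m_\theta)$ (from Lemma \ref{decom}(iv), $\langle K^{AB} f,f\rangle_p \le |f|_{L_p^2}^2$, together with the weighted version in Lemma \ref{energy-Guo}); a Gronwall argument closes it. For $k\ge 1$ I would apply $\nabla_x^k$ to the Duhamel identity. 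The first term $\nabla_x^k e^{t\mathcal L_{AB}} f_{in}$ is handled directly by iterating the smoothing estimate $\int |\nabla_x e^{t\mathcal L_{AB}} u_0|^2 m_\theta \le O(t^{-3})\int |u_0|^2 m_1$: splitting $[0,t]$ into $k$ pieces of equal length and moving one $\nabla_x$ across each factor of the semigroup yields the $t^{-3k/2}$ rate for $t\le 1$, and for $t\ge 1$ one uses the same estimate on $[0,1]$ followed by the exponential decay on $[1,t]$.

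The delicate term is the Duhamel integral with $\nabla_x^k$ landing partly on the semigroup. Naively, one would write $\nabla_x^k \int_0^t e^{(t-s)\mathcal L_{AB}} K^{AB} f(s)\,ds$ and distribute the derivatives, but a factor $e^{(t-s)\mathcal L_{AB}}$ hit by $\nabla_x$ costs $(t-s)^{-3/2}$, which is \emph{not} integrable near $s=t$. The fix, exactly as the paper hints, is to couple the derivative with the semigroup: one does not estimate $\nabla_x e^{(t-s)\mathcal L_{AB}}$ in $L^2(m_\theta)$ but rather $(t-s)^{1/2}\nabla_x e^{(t-s)\mathcal L_{AB}}$, interpolating or more precisely using that the operator $\tau^{1/2}\nabla_x e^{\tau \mathcal L_{AB}}$ \emph{is} bounded $L^2(m_1)\to L^2(m_\theta)$ uniformly for $\tau\in(0,1]$, so that after pulling out the compensating factor $(t-s)^{-1/2}$---which \emph{is} integrable---one is left with $\nabla_x^{k-1}$ acting on $K^{AB} f(s)$, and since $K^{AB}$ commutes with $\nabla_x$ (it acts only in $p$), this reduces to $\nabla_x^{k-1} f(s)$, which is controlled by the inductive hypothesis. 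One does need to be careful that each application of the smoothing estimate raises the weight index by one (from $m_\theta$ to $m_1$, and more generally producing $m_k$ after $k$ steps), which is precisely why $f_{in}$ is required in $L^2(m_k)$; the bookkeeping of weights through the $k$-fold iteration is routine but must be tracked.

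The main obstacle, then, is organizing this coupling cleanly at general order $k$: one must split the $k$ spatial derivatives among the $k+1$ semigroup factors appearing in a $k$-fold iterated Duhamel expansion (or, more economically, set up a single induction where at each stage one peels off one derivative against the fresh semigroup factor via the $(t-s)^{1/2}\nabla_x$ trick, leaving $k-1$ derivatives to be absorbed by the inductive hypothesis on the solution $f(s)$ itself). For $t\le 1$ one tracks the product of singular factors $\prod (t_{i+1}-t_i)^{-1/2}$ over the simplex, which integrates to a constant times $t^{-3k/2}$ after accounting for the $t^{-3/2}$ from the first (pure-semigroup) term; for $t\ge 1$ one runs the same scheme on $[0,1]$ and then uses the exponential decay of $e^{t\mathcal L_{AB}}$ and of $\nabla_x^{k-1} f$ (by induction) on $[1,t]$, with the Duhamel integral over $[1,t]$ converging because of that decay. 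I expect the only genuinely subtle point to be verifying the uniform boundedness of $\tau^{1/2}\nabla_x e^{\tau\mathcal L_{AB}}$ on the weighted spaces, which however follows from Lemma \ref{regularization} by a standard semigroup argument (write $\tau\nabla_x e^{\tau\mathcal L_{AB}} = (\tau\nabla_x e^{(\tau/2)\mathcal L_{AB}})\, e^{(\tau/2)\mathcal L_{AB}}$ and use the stated $O(\tau^{-3})$ bound on the first factor together with the contraction property of the second); everything else is the weight arithmetic and the simplex integral, both mechanical.
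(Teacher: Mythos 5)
Your overall strategy---extract regularity by trading spatial derivatives against semigroup smoothing inside a Duhamel/Picard expansion, with the $t^{-3/2}$ singularity tamed by coupling $\tau^{1/2}\nabla_x$ with $e^{\tau\mathcal L_{AB}}$---is indeed the mechanism the paper exploits. However, as written, the proposal has two concrete gaps.

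First, the $k=0$ step. Feeding the bound $\|e^{t\mathcal L_{AB}}u_0\|_{L^2(m_\theta)}\le e^{-Ct}\|u_0\|_{L^2(m_\theta)}$ and the boundedness of $K^{AB}$ into Duhamel and then Gronwall gives $\|f(t)\|_{L^2(m_\theta)}\lesssim e^{(\kappa-C)t}\|f_{in}\|_{L^2(m_\theta)}$ where $\kappa$ is the operator norm of $K^{AB}=\varpi\chi_R$. Since $\varpi$ must be chosen \emph{large} to make $\Lambda^{AB}$ coercive, there is no reason to expect $\kappa<C$, so this Gronwall does not give a $t$-uniform bound. What actually closes the $k=0$ case is that $K^{AB}f=\varpi\chi_R f$ is supported where $m_\theta$ is bounded, so $\|K^{AB}f\|_{L^2(m_\theta)}\lesssim\|f\|_{L^2}$, together with the unweighted $L^2$ contraction $\|f(t)\|_{L^2}\le\|f_{in}\|_{L^2}$ coming from dissipativity of the \emph{full} $L_{AB}$, not of $\mathcal L_{AB}$. (The paper avoids Duhamel here entirely and simply runs the $m_\theta$-weighted energy estimate, using $\|f\|_{L^2}\le\|f_{in}\|_{L^2}$ on the right.)

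Second, and more seriously, the $k\ge 2$ step. Your "more economical" option---peel one $\nabla_x$ onto the fresh semigroup factor and absorb $\nabla_x^{k-1}$ by the inductive hypothesis on $f(s)$---does not converge: the induction gives $\|\nabla_x^{k-1}f(s)\|\lesssim s^{-3(k-1)/2}$, so for $k\ge2$ the integrand $(t-s)^{-1/2}s^{-3(k-1)/2}$ is not integrable near $s=0$. The "$k$-fold iterated Duhamel" variant has the same difficulty: in the term $f^{(j)}$ with $j<k$ there are only $j+1$ factors, so at least one must carry two derivatives, costing $\tau^{-3}$, and there is no way to split two derivatives across two adjacent factors $e^{(t-s)\mathcal L}$ and $e^{s\mathcal L}$ so that the resulting $\int_0^t(t-s)^{-3a/2}s^{-3b/2}\,ds$ with $a+b=k\ge2$ converges. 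The paper's resolution is both more and different: it runs a $2k$-fold (not $k$-fold) Picard expansion, estimates each $\nabla_x^k f^{(j)}$ as $\lesssim t^{-3k/2+j}$ via the algebraic splitting $\nabla_x=\frac{(t-s)+s}{t}\nabla_x$ inside the time-convolution, and---crucially---handles higher $k$ by a half-time restart trick (writing $\nabla_x f^{(j)}(t)=e^{(t-t_0/2)\mathcal L_{AB}}\nabla_x f^{(j)}(t_0/2)+\cdots$) rather than by naively distributing derivatives over the simplex. Only after $2k$ steps is $\nabla_x^k f^{(2k)}$ regular near $t=0$, at which point the remainder $\mathcal R^{(k)}$ is bounded via the full solution operator $\mathbb G_{AB}$. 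Finally, for $t\ge1$ the paper does not re-run any smoothing scheme; it simply invokes the time-monotonicity $\|\partial_x^\alpha f(s_2)\|_{L^2(m_\theta)}\lesssim\|\partial_x^\alpha f(s_1)\|_{L^2(m_\theta)}$ (the $k=0$ energy estimate applied to $\partial_x^\alpha f$) to propagate the $t=1$ bound forward, which is considerably cleaner than what you propose.
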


\begin{proof}
For $k=0,$ from Lemma \ref{energy-Guo} and the fact that $\left\Vert
f\right\Vert _{L^{2}}\leq\left\Vert f_{in}\right\Vert _{L^{2}}$, it follows
that%
\[
\frac{d}{dt}\left\Vert f\right\Vert _{L^{2}\left(  m_{\theta}\right)  }%
^{2}\lesssim-\left\Vert f\right\Vert _{L_{\sigma}^{2}\left(  m_{\theta
}\right)  }^{2}+c\left\Vert f\right\Vert _{L^{2}}^{2}\lesssim-\left\Vert
f\right\Vert _{L^{2}\left(  m_{\theta}\right)  }^{2}+c\left\Vert
f_{in}\right\Vert _{L^{2}}^{2},
\]
which implies that there exists a universal constant $C>0$ such that
\begin{equation}
\left\Vert f\right\Vert _{L^{2}\left(  m_{\theta}\right)  }\leq C\left\Vert
f_{in}\right\Vert _{L^{2}\left(  m_{\theta}\right)  }.
\label{f-weighted-energy}%
\end{equation}
Moreover, since the operator $\partial_{x}^{\alpha}$ is commutative with the
equation $\left(  \ref{Equ-AB}\right)  $,
\begin{equation}
\left\Vert \partial_{x}^{\alpha}f\left(  s_{2}\right)  \right\Vert
_{L^{2}\left(  m_{\theta}\right)  }\leq C\left\Vert \partial_{x}^{\alpha
}f\left(  s_{1}\right)  \right\Vert _{L^{2}\left(  m_{\theta}\right)
},\ \ \ 0<s_{1}<s_{2}, \label{fx-weighted-energy}%
\end{equation}
as well.

To prove this lemma for $k\geq1$, we design a Picard type iteration, treating
$K^{AB}$ as a source term, as below: The zero order approximation $f^{(0)}$ is
defined by
\begin{equation}
\left\{
\begin{array}
[c]{l}%
\pa_{t}f^{(0)}=\mathcal{L}_{AB}f^{(0)}\,,\\[4mm]%
f^{(0)}(0,x,p)=f_{in}(x,p)\,,
\end{array}
\right.  \label{bot.3.b}%
\end{equation}
and thus the difference $f-f^{(0)}$ satisfies
\[
\left\{
\begin{array}
[c]{l}%
\pa_{t}(f-f^{(0)})=\mathcal{L}_{AB}(f-f^{(0)})+K^{AB}(f-f^{(0)})+K^{AB}%
f^{(0)}\,,\\[4mm]%
(f-f^{(0)})(0,x,p)=0\,.
\end{array}
\right.
\]
Therefore, we define the first order approximation $f^{(1)}$ by
\begin{equation}
\left\{
\begin{array}
[c]{l}%
\pa_{t}f^{(1)}=\mathcal{L}_{AB}f^{(1)}+K^{AB}f^{(0)}\,,\\[4mm]%
f^{(1)}(0,x,p)=0\,.
\end{array}
\right.  \label{bot.3.c}%
\end{equation}
Inductively, we can define the $j^{\mathrm{th}}$ order approximation $f^{(j)}%
$, $j\geq1$, as
\begin{equation}
\left\{
\begin{array}
[c]{l}%
\pa_{t}f^{(j)}=\mathcal{L}_{AB}f^{(j)}+K^{AB}f^{(j-1)}\,,\\[4mm]%
f^{(j)}(0,x,p)=0\,.
\end{array}
\right.  \label{bot.3.d}%
\end{equation}
Now, rewrite $f$ as
\[
f=f^{(0)}+f^{(1)}+f^{(2)}+\cdot\cdot\cdot+f^{(2k)}+\mathcal{R}^{(k)}\,,
\]
here the remainder $\mathcal{R}^{(k)}$ solving the equation
\[
\left\{
\begin{array}
[c]{l}%
\partial_{t}\mathcal{R}^{(k)}+\frac{1}{m_{A}}p\cdot\nabla_{x}\mathcal{R}%
^{(k)}=L_{AB}\mathcal{R}^{(k)}+K^{AB}f^{(2k)},%
\vspace{3mm}%
\\
\mathcal{R}^{(k)}\left(  0,x,p\right)  =0\,.
\end{array}
\right.
\]
Under this decomposition,\textbf{ }we shall tackle the $x$-regularity for each
component in the cases\textbf{ }of\textbf{ }$k=1$ and $k=2.$ The case of
$k\geq3$ follows the same argument as $k=2.$\newline\textbf{Case 1}: $k=1$,
$f=\sum_{j=0}^{2}f^{(j)}+\mathcal{R}^{(1)}$. \newline Step 1: The first
$x$-derivative in small time. We claim that for $0<t\leq1$ and $0\leq j\leq
2$,
\[
\Vert\nabla_{x}f^{(j)}\Vert_{L^{2}(m_{\theta})}\lesssim t^{-\frac{3}{2}%
+j}\Vert f_{in}\Vert_{L^{2}(m_{1})},
\]
and
\[
\Vert\nabla_{x}\mathcal{R}^{(1)}\Vert_{L^{2}(m_{\theta})}\lesssim\Vert
f_{in}\Vert_{L^{2}(m_{1})}\,.
\]
\newline The estimate of $f^{(0)}$ is obviously from Lemma
\ref{regularization}. Notice that
\[
f^{(1)}=\int_{0}^{t}e^{(t-s)\mathcal{L}_{AB}}K^{AB}e^{s\mathcal{L}_{AB}}%
f_{in}ds\,.
\]
Hence,
\begin{align}
\nabla_{x}f^{(1)}  &  =\int_{0}^{t}\frac{(t-s)+s}{t}\nabla_{x}%
e^{(t-s)\mathcal{L}_{AB}}K^{AB}e^{s\mathcal{L}_{AB}}f_{in}ds\,\label{aa.9}\\
&  =\int_{0}^{t}\frac{1}{t}(t-s)\nabla_{x}e^{(t-s)\mathcal{L}_{AB}}%
K^{AB}e^{s\mathcal{L}_{AB}}f_{in}ds+\int_{0}^{t}\frac{1}{t}e^{(t-s)\mathcal{L}%
_{AB}}K^{AB}\left(  s\nabla_{x}e^{s\mathcal{L}_{AB}}f_{in}\right)
ds.\nonumber
\end{align}
By Lemma \ref{regularization},
\begin{align}
\left\Vert \nabla_{x}f^{(1)}\right\Vert _{L^{2}(m_{\theta})}  &  \lesssim
\int_{0}^{t}t^{-1}\left[  (t-s)^{-1/2}+s^{-1/2}\right]  ds\Vert f_{in}%
\Vert_{L^{2}(m_{1})}\label{f1-x}\\
&  \lesssim t^{-1/2}\Vert f_{in}\Vert_{L^{2}(m_{1})}\,.\nonumber
\end{align}
Likewise,
\[
f^{(2)}=\int_{0}^{t}\int_{0}^{s_{1}}e^{(t-s_{1})\mathcal{L}_{AB}}%
K^{AB}e^{(s_{1}-s_{2})\mathcal{L}_{AB}}K^{AB}e^{s_{2}\mathcal{L}_{AB}}%
f_{in}ds_{2}ds_{1}\,,
\]
and
\begin{align}
\nabla_{x}f^{(2)}  &  =\int_{0}^{t}\int_{0}^{s_{1}}\frac{(s_{1}-s_{2})+s_{2}%
}{s_{1}}\nabla_{x}e^{(t-s_{1})\mathcal{L}_{AB}}K^{AB}e^{(s_{1}-s_{2}%
)\mathcal{L}_{AB}}K^{AB}e^{s_{2}\mathcal{L}_{AB}}f_{in}ds_{2}ds_{1}%
\,\label{aa.10}\\
&  =\int_{0}^{t}\int_{0}^{s_{1}}\frac{1}{s_{1}}e^{(t-s_{1})\mathcal{L}_{AB}%
}K^{AB}\left[  (s_{1}-s_{2})\nabla_{x}e^{(s_{1}-s_{2})\mathcal{L}_{AB}}%
K^{AB}e^{s_{2}\mathcal{L}_{AB}}f_{in}\right]  ds_{2}ds_{1}\,\nonumber\\
&  +\int_{0}^{t}\int_{0}^{s_{1}}\frac{1}{s_{1}}e^{(t-s_{1})\mathcal{L}_{AB}%
}K^{AB}e^{(s_{1}-s_{2})\mathcal{L}_{AB}}K^{AB}\left[  s_{2}\nabla_{x}%
e^{s_{2}\mathcal{L}_{AB}}f_{in}\right]  ds_{2}ds_{1}\,.\nonumber
\end{align}
By Lemma \ref{regularization} again,
\begin{align}
\left\Vert \nabla_{x}f^{(2)}\right\Vert _{L^{2}(m_{\theta})}  &  \leq\int
_{0}^{t}\int_{0}^{s_{1}}s_{1}^{-1}\left[  (s_{1}-s_{2})^{-1/2}+s_{2}%
^{-1/2}\right]  ds_{2}ds_{1}\Vert f_{in}\Vert_{L^{2}(m_{1})}\label{aa.1}\\
&  \lesssim t^{1/2}\Vert f_{in}\Vert_{L^{2}(m_{1})}\,.\nonumber
\end{align}
Here we remark that $\nabla_{x}f^{(2)}$ is regular when $t$ is close to $0$.

For the remainder $\mathcal{R}^{(1)}$, since
\[
\nabla_{x}\mathcal{R}^{(1)}=\int_{0}^{t}\mathbb{G_{AB}}^{t-s}K^{AB}\nabla
_{x}f^{(2)}(s)ds\,,
\]
we deduce
\[
\left\Vert \nabla_{x}\mathcal{R}^{(1)}\right\Vert _{L^{2}(m_{\theta})}%
\lesssim\int_{0}^{t}\left\Vert \nabla_{x}f^{(2)}(s)\right\Vert _{L^{2}%
(m_{\theta})}ds\lesssim t^{3/2}\left\Vert f_{in}\right\Vert _{L^{2}(m_{1}%
)}\,,
\]
by $\left(  \ref{f-weighted-energy}\right)  $ and $\left(  \ref{aa.1}\right)
$. In conclusion, for $0<t\leq1$,
\begin{equation}
\left\Vert \nabla_{x}f\right\Vert _{L^{2}(m_{\theta})}\lesssim t^{-3/2}%
\left\Vert f_{in}\right\Vert _{L^{2}(m_{1})}\,. \label{f-x-stime}%
\end{equation}
\newline Step 2: The first $x$-derivative in large time. It readily follows
from $\left(  \ref{fx-weighted-energy}\right)  $ and $\left(  \ref{f-x-stime}%
\right)  $ that for $t>1$,
\[
\Vert\nabla_{x}f\left(  t\right)  \Vert_{L^{2}(m_{\theta})}\lesssim\Vert
\nabla_{x}f\left(  1\right)  \Vert_{L^{2}(m_{\theta})}\lesssim\left\Vert
f_{in}\right\Vert _{L^{2}(m_{1})}\,.
\]
\newline\textbf{Case 2}: $k=2$, $f=\sum_{j=0}^{4}f^{(j)}+\mathcal{R}^{(2)}$.
\newline Step 1: The second $x$-derivative in small time. Let $0<t\leq1$. We
shall show that
\begin{equation}
\Vert\nabla_{x}^{2}f^{(j)}\Vert_{L^{2}(m_{\theta})}\lesssim t^{-3+j}\Vert
f_{in}\Vert_{L^{2}(m_{2})},\ \ 0\leq j\leq4, \label{fj-xx}%
\end{equation}
and
\[
\Vert\nabla_{x}^{2}\mathcal{R}^{(2)}\Vert_{L^{2}(m_{\theta})}\lesssim
t^{2}\Vert f_{in}\Vert_{L^{2}(m_{2})}\,.
\]
For the first term, we only give the estimates of $f^{(0)}$ and $f^{(1)}$ and
the others are similar.

Let $0<t_{0}\leq1$ and $t_{0}/2<t\leq t_{0}$. Since
\[
\nabla_{x}f^{(0)}(t)=e^{(t-t_{0}/2)\mathcal{L}_{AB}}\nabla_{x}f^{(0)}%
(t_{0}/2)\,,
\]
we have
\[
\left\Vert \nabla_{x}^{2}f^{(0)}(t)\right\Vert _{L^{2}(m_{\theta})}%
\lesssim(t-t_{0}/2)^{-3/2}(t_{0}/2)^{-3/2}\Vert f_{in}\Vert_{L^{2}(m_{2})}\,,
\]
by Lemma \ref{regularization}. Taking $t=t_{0}$ gives
\begin{equation}
\left\Vert \nabla_{x}^{2}f^{(0)}(t_{0})\right\Vert _{L^{2}(m_{\theta}%
)}\lesssim t_{0}^{-3}\Vert f_{in}\Vert_{L^{2}(m_{2})}\,. \label{f0-xx}%
\end{equation}
Since $t_{0}\in(0,1]$ is arbitrary, this completes the proof of $f^{(0)}$.
Similarly, we have
\[
\nabla_{x}f^{(1)}(t)=e^{(t-t_{0}/2)\mathcal{L}_{AB}}\nabla_{x}f^{(1)}%
(t_{0}/2)+\int_{t_{0}/2}^{t}e^{(t-s)\mathcal{L}_{AB}}K^{AB}\nabla_{x}%
f^{(0)}(s)ds\,,
\]
\textbf{ }and thus it follows from Lemma \ref{regularization}, $\left(
\ref{f1-x}\right)  $ and $\left(  \ref{f0-xx}\right)  $ that
\[
\left\Vert \nabla_{x}^{2}f^{(1)}(t)\right\Vert _{L^{2}(m_{\theta})}%
\lesssim(t-t_{0}/2)^{-3/2}(t_{0}/2)^{-1/2}\Vert f_{in}\Vert_{L^{2}(m_{2}%
)}+\int_{t_{0}/2}^{t}s^{-3}\Vert f_{in}\Vert_{L^{2}(m_{2})}ds\,.
\]
Plugging $t=t_{0}$ into the above inequality yields
\begin{equation}
\left\Vert \nabla_{x}^{2}f^{(1)}(t_{0})\right\Vert _{L^{2}(m_{\theta}%
)}\lesssim t_{0}^{-2}\Vert f_{in}\Vert_{L^{2}(m_{2})},
\end{equation}
as desired.

For the remainder $\mathcal{R}^{(2)}$, since
\[
\nabla_{x}^{2}\mathcal{R}^{(2)}=\int_{0}^{t}\mathbb{G_{AB}}^{t-s}K^{AB}%
\nabla_{x}^{2}f^{(4)}(s)ds\,,
\]
we deduce
\[
\left\Vert \nabla_{x}^{2}\mathcal{R}^{(2)}\right\Vert _{L^{2}(m_{\theta}%
)}\lesssim\int_{0}^{t}\left\Vert \nabla_{x}^{2}f^{(4)}(s)\right\Vert
_{L^{2}(m_{\theta})}ds\lesssim t^{2}\left\Vert f_{in}\right\Vert _{L^{2}%
(m_{2})}\,,
\]
by $\left(  \ref{f-weighted-energy}\right)  $ and $\left(  \ref{fj-xx}\right)
.$ Consequently, for $0<t\leq1,$
\[
\left\Vert \nabla_{x}^{2}f\right\Vert _{L^{2}(m_{\theta})}\lesssim
t^{-3}\left\Vert f_{in}\right\Vert _{L^{2}(m_{2})}\,.\newline%
\]
Step 2: The second $x$-derivative in large time. For $t>1$, we have
\[
\Vert\nabla_{x}^{2}f\Vert_{L^{2}(m_{\theta})}\lesssim\Vert\nabla_{x}%
^{2}f\left(  1\right)  \Vert_{L^{2}(m_{\theta})}\lesssim\Vert f_{in}%
\Vert_{L^{2}(m_{2})}\,,
\]
due to $\left(  \ref{fx-weighted-energy}\right)  .$
\end{proof}

We now return to the $x$-regularity problem for $g_{S}.$ By Lemma
\ref{x-regularityST} $\left(  ii\right)  $ with initial condition
$f_{in}(x,p)=g_{in}(x,p)$ and taking $m_{\theta}=1,$
\[
\left\Vert \nabla_{x}^{k}g\right\Vert _{L^{2}}\lesssim\left\Vert
g_{in}\right\Vert _{L^{2}\left(  w_{k}\right)  },\text{ \ }k\in\mathbb{N\cup
}\left\{  0\right\}  ,
\]
for $t\geq1.$ Therefore,
\[
\left\Vert \nabla_{x}^{k}g_{S}\right\Vert _{L^{2}}\leq\left\Vert \nabla
_{x}^{k}g\right\Vert _{L^{2}}+\left\Vert \nabla_{x}^{k}g_{L}\right\Vert
_{L^{2}}\lesssim\left\Vert g_{in}\right\Vert _{L^{2}\left(  w_{k}\right)  },
\]
for each $k\in\mathbb{N\cup}\left\{  0\right\}  $. On the other hand, the
interpolation inequality says that%
\[
\left\Vert \nabla_{x}^{k}g_{S}\right\Vert _{L^{2}}\lesssim\left\Vert
\nabla_{x}^{k-1}g_{S}\right\Vert _{L^{2}}^{1/2}\left\Vert \nabla_{x}%
^{k+1}g_{S}\right\Vert _{L^{2}}^{1/2},\ \ \ k\in\mathbb{N}.
\]
Hence,
\begin{align}
\left\Vert \nabla_{x}^{k}g_{S}\right\Vert _{L^{2}}  &  \leq\left\{
\begin{array}
[c]{ll}%
\left\Vert g_{S}\right\Vert _{L^{2}}^{1/2}\left\Vert \nabla_{x}^{2}%
g_{S}\right\Vert _{L^{2}}^{1/2},%
\vspace{3mm}%
& k=1\\
\left\Vert g_{S}\right\Vert _{L^{2}}^{2^{-k}}\left(
{\displaystyle\prod\limits_{j=2}^{k}}
\left\Vert \nabla_{x}^{j}g_{S}\right\Vert _{L^{2}}^{2^{-(k-j+2)}}\right)
\left\Vert \nabla_{x}^{k+1}g_{S}\right\Vert _{L^{2}}^{1/2}, & k\geq2
\end{array}
\right.
\vspace{3mm}%
\label{aa.8}\\
&  \lesssim\left\Vert g_{S}\right\Vert _{L^{2}}^{2^{-k}}\left\Vert
g_{in}\right\Vert _{L^{2}\left(  w_{k+1}\right)  }^{1-2^{-k}}\nonumber\\
&  \lesssim e^{-2^{-k}Ct}\left\Vert g_{in}\right\Vert _{L^{2}(w_{k+1})}\,.%
\vspace{3mm}%
\nonumber
\end{align}
Combining this with the Sobolev inequality \cite[Proposition 3.8]{[Taylor]},
yields%
\begin{align*}
\left\Vert \nabla_{x}^{k}g_{S}\right\Vert _{L_{x}^{\infty}L_{p}^{2}}  &
\leq\left\Vert \nabla_{x}^{k}g_{S}\right\Vert _{L_{p}^{2}L_{x}^{\infty}%
}\lesssim\sqrt{\left\Vert \nabla_{x}^{k+1}g_{S}\right\Vert _{L^{2}}\left\Vert
\nabla_{x}^{k+2}g_{S}\right\Vert _{L^{2}}}\\
&  \lesssim\left(  e^{-2^{-k-1}Ct}\left\Vert g_{in}\right\Vert _{L^{2}%
(w_{k+2})}\right)  ^{1/2}\left\Vert g_{in}\right\Vert _{L^{2}\left(
w_{k+2}\right)  }^{1/2}\\
&  \lesssim e^{-2^{-k-2}Ct}\left\Vert g_{in}\right\Vert _{L^{2}(w_{k+2})}.
\end{align*}

Summing up, we conclude the following proposition.

\begin{proposition}
\label{x-regularity}Let $k,$ $\ell\in\mathbb{N\cup\{}0\}.$ Then for $t\geq1,$
\[
\left\Vert \nabla_{p}^{\ell}\nabla_{x}^{k}g_{L,0}\right\Vert _{L_{x}^{\infty
}L_{p}^{2}}\lesssim(1+t)^{-(3+k)/2}\left\Vert g_{in}\right\Vert _{L_{x}%
^{1}L_{p}^{2}},
\]%
\[
\left\Vert \nabla_{x}^{k}g_{L,\perp}\right\Vert _{L_{x}^{\infty}L_{p}^{2}%
}\lesssim e^{-Ct}\left\Vert g_{in}\right\Vert _{L_{x}^{1}L_{p}^{2}},
\]
and%
\[
\left\Vert \nabla_{x}^{k}g_{S}\right\Vert _{L_{x}^{\infty}L_{p}^{2}}\leq
e^{-C_{k}t}\left\Vert g_{in}\right\Vert _{L^{2}\left(  w_{k+2}\right)  },
\]
where $C>0,\ C_{k}>0.$ Here
\[
w_{n}\equiv%
\begin{cases}
1, & \ga\in\left[  0,1\right]  ,\\
\left\langle p\right\rangle ^{|\ga|n}, & \ga\in\left[  -2,0\right)  .
\end{cases}
\]

\end{proposition}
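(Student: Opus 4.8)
The three bounds are independent, and the plan is to establish each one separately: the two long-wave pieces by direct estimation of their spectral integrals, and the short-wave piece by feeding the regularization Lemma \ref{x-regularityST} into an interpolation argument.

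\textbf{The fluid long-wave part.} I would differentiate the explicit representation of $g_{L,0}$ under the $\eta$-integral: each $\nabla_x$ pulls down a factor $i\eta$, while $\nabla_p^\ell$ acts only on the eigenfunction $e_D(p,\eta)$, since the scalar $\langle e_D(-\eta),\hat{g_{in}}\rangle_p$, the factor $e^{\lambda(\eta)t}$ and $e^{i\eta\cdot x}$ are $p$-independent. Using the decay $|e^{\lambda(\eta)t}|\le e^{-\overline a_2|\eta|^2 t}$ and the Schwartz bound $|\langle p\rangle^{\theta}\pa_p^{\al}e_D(p,\eta)|_{L_p^2}\le C_{\theta,\al}|e_D(p,\eta)|_{L_p^2}$ from Lemma \ref{SpecAB}, the uniform bound $|e_D(\pm\eta)|_{L_p^2}\lesssim 1$ on $\{|\eta|<\delta\}$, and the elementary inequality $|\hat{g_{in}}(\eta)|_{L_p^2}\le\|g_{in}\|_{L_x^1L_p^2}$, one is reduced to the Gaussian integral $\int_{\mathbb{R}^3}|\eta|^k e^{-\overline a_2|\eta|^2 t}\,d\eta\lesssim t^{-(3+k)/2}$, which for $t\ge 1$ is $\lesssim(1+t)^{-(3+k)/2}$. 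Since the bound is uniform in $x$, taking the $L_p^2$ norm and then the supremum in $x$ yields the first estimate.

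\textbf{The non-fluid long-wave part.} Here I would again differentiate the spectral integral for $g_{L,\perp}$ in $x$; now the resulting factor $|\eta|^k$ is harmless because $|\eta|<\delta$, and the spectral-gap bound $|e^{(-i\eta\cdot p+L_{AB})t}\Pi_\eta^{D\perp}|_{L_p^2}\lesssim e^{-a(\tau_2)t}$ of Lemma \ref{SpecAB} supplies exponential decay. Integrating over the bounded region $\{|\eta|<\delta\}$ against $|\hat{g_{in}}(\eta)|_{L_p^2}\le\|g_{in}\|_{L_x^1L_p^2}$ gives the second estimate.

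\textbf{The short-wave part, and the main obstacle.} This is where the real work lies: $g_S=g-g_L$ has no spectral representation whose $x$-derivatives I can control directly, and Lemma \ref{regularization} only supplies a factor $t^{-3/2}$ per $x$-derivative as $t\to0^+$, which is not integrable and cannot be plugged naively into Duhamel's formula. The point is to combine two facts. First, the $L^2$ spectral gap on $\{|\eta|\ge\delta\}$ gives $\|g_S\|_{L^2}\lesssim e^{-Ct}\|g_{in}\|_{L^2}$. Second, Lemma \ref{x-regularityST}(ii) with $f_{in}=g_{in}$ and $m_\theta\equiv 1$ gives $\|\nabla_x^k g\|_{L^2}\lesssim\|g_{in}\|_{L^2(w_k)}$ for $t\ge 1$, and subtracting the trivial bound for $\nabla_x^k g_L$ (whose $\eta$-integral runs over $|\eta|<\delta$) yields $\|\nabla_x^k g_S\|_{L^2}\lesssim\|g_{in}\|_{L^2(w_k)}$. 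Interpolating these in $x$ repeatedly,
\[
\|\nabla_x^k g_S\|_{L^2}\lesssim\|g_S\|_{L^2}^{2^{-k}}\,\|g_{in}\|_{L^2(w_{k+1})}^{1-2^{-k}}\lesssim e^{-2^{-k}Ct}\|g_{in}\|_{L^2(w_{k+1})},
\]
and finally I would pass to $L_x^\infty L_p^2$ via the Sobolev/interpolation inequality $\|h\|_{L_x^\infty}\lesssim\|\nabla_x h\|_{L^2}^{1/2}\|\nabla_x^2 h\|_{L^2}^{1/2}$ (see \cite[Proposition 3.8]{[Taylor]}) applied to $h=\nabla_x^k g_S$, losing one more factor $\tfrac12$ in the exponent and one more weight, which produces the third estimate with some $C_k>0$. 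The delicate bookkeeping — and the heart of the method — is precisely that the weight index must climb from $w_k$ to $w_{k+2}$ in exchange for reconciling the smoothing estimate of Lemma \ref{x-regularityST} with the $L^2$ exponential decay; everything else reduces to routine Gaussian estimates over the long-wave region.
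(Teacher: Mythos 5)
Your proposal is correct and follows essentially the same route as the paper: the $g_{L,0}$ and $g_{L,\perp}$ bounds come directly from the Fourier-side spectral representations together with the $p$-regularity of $e_D(\eta)$ and the spectral gap, while for $g_S$ you combine the $L^2$ spectral-gap decay with the uniform-in-$t$ derivative bounds of Lemma \ref{x-regularityST}(ii) via interpolation in $x$ and then the Sobolev inequality, reproducing the weight cost from $w_k$ to $w_{k+2}$. The only cosmetic difference is that the paper first passes $\|\cdot\|_{L_x^\infty L_p^2}\le\|\cdot\|_{L_p^2 L_x^\infty}$ before applying the Sobolev bound \cite[Proposition 3.8]{[Taylor]}, a step you leave implicit.
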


\subsection{Improvement of the $p$-regularity (general discussion)}

\label{improve-p}

In this subsection, we will study the $p$-regularity of the linearized Landau
equation
\begin{equation}
\left\{
\begin{array}
[c]{l}%
\displaystyle\partial_{t}f+\frac{1}{m_{A}}p\cdot\nabla_{x}f=L_{AB}f\,,\\[3mm]%
\displaystyle f\left(  0,x,p\right)  =f_{in}\left(  x,p\right)  .
\end{array}
\right.  \label{Landau-f}%
\end{equation}

\subsubsection{Improvement of the $p$-regularity in finite time}

\label{sub-p-regularity}

Assume that $0<t\leq1$. In order to improve the $p$-regularity of $f$ in
finite time, we first establish the $p$-regularity of the operator
$e^{t\mathcal{L}_{AB}}.$ Specifically, let $u$ be a solution of the equation
$\partial_{t}u=\mathcal{L}_{AB}u,$ $u(0,x,p)=u_{0}(x,p),$ and then we prove
that for each $\ell\in\mathbb{N}$, there exists a functional $\mathcal{F}%
_{\ell}$ such that
\[
t\left\Vert \nabla_{p}^{\ell}u\right\Vert ^{2}_{L^{2}(m_{\theta})}%
\lesssim\mathcal{F}_{\ell}\left(  0,u\right)  ,\ \ 0\leq t\leq1,
\]
where $\mathcal{F}_{\ell}\left(  0,u\right)  $ involves only the other
derivatives of the initial data $u_{0}$ with differentiation order less or
equal to $\ell.$ With the aid of the operator $e^{t\mathcal{L}_{AB}}$ and the
Duhamel principle, we use a bootstrap process to get the regularization
estimate of $f$ in \textbf{Both} the space variable $x$ and momentum variable
$p$ simultaneously.

Before proceeding, we derive some useful inequalities to simplify tedious
computations regarding the evolutions we will consider later on. Besides,
these inequalities clue us in on the form of the desired functional
$\mathcal{F}_{\ell}$.

\begin{lemma}
\label{energy-evolution}Let $u$ be a solution to the equation $\partial
_{t}u=\mathcal{L}_{AB}u.$ Then for $k,\ell\in\mathbb{N\cup\{}0\mathbb{\}},$
there exist $c_{0}>0$ independent\ of $k$ and $\ell$, and $C>0$ such that%
\begin{equation}
\frac{d}{dt}\left\Vert \nabla_{x}^{k}u\right\Vert _{L^{2}\left(  m_{\theta
}\right)  }^{2}\leq-c_{0}\left\Vert \nabla_{x}^{k}u\right\Vert _{L_{\sigma
}^{2}\left(  m_{\theta}\right)  }^{2}, \label{x-evolu}%
\end{equation}%
\begin{align}
\frac{d}{dt}\left\Vert \nabla_{p}^{\ell}u\right\Vert _{L^{2}\left(  m_{\theta
}\right)  }^{2}  &  \leq-c_{0}\left\Vert \nabla_{p}^{\ell}u\right\Vert
_{L_{\sigma}^{2}\left(  m_{\theta}\right)  }^{2}\label{p-evolu}\\
&  +C\left(  \sum_{j=0}^{\ell-1}\left\Vert \nabla_{p}^{j}u\right\Vert
_{L_{\sigma}^{2}\left(  m_{\theta}\right)  }^{2}+\sum_{j=0}^{\ell-1}\left\Vert
\nabla_{p}^{j}\nabla_{x}u\right\Vert _{L^{2}\left(  m_{\theta}\right)  }%
^{2}+\left\Vert u\right\Vert _{L^{2}}^{2}\right)  ,\nonumber
\end{align}
and%
\begin{align}
\frac{d}{dt}\left\Vert \nabla_{p}^{\ell}\nabla_{x}^{k}u\right\Vert
_{L^{2}\left(  m_{\theta}\right)  }^{2}  &  \leq-c_{0}\left\Vert \nabla
_{p}^{\ell}\nabla_{x}^{k}u\right\Vert _{L_{\sigma}^{2}\left(  m_{\theta
}\right)  }^{2}\label{mixed-evolu}\\
&  +C\left(  \sum_{j=0}^{\ell-1}\left\Vert \nabla_{p}^{j}\nabla_{x}%
^{k}u\right\Vert _{L_{\sigma}^{2}\left(  m_{\theta}\right)  }^{2}+\sum
_{j=0}^{\ell-1}\left\Vert \nabla_{p}^{j}\nabla_{x}^{k+1}u\right\Vert
_{L^{2}\left(  m_{\theta}\right)  }^{2}+\left\Vert \nabla_{x}^{k}u\right\Vert
_{L^{2}}^{2}\right)  .\nonumber
\end{align}

\end{lemma}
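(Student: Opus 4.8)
The relevant operator is $\mathcal{L}_{AB}=-\frac{1}{m_{A}}p\cdot\nabla_{x}-\Lambda^{AB}$, and the plan is the standard weighted $L^{2}(m_{\theta})$ energy estimate applied to $\partial_{t}u=\mathcal{L}_{AB}u$ after differentiating the equation in $x$ and in $p$; the only real work is the bookkeeping of the commutator between $\nabla_{p}$ and $p\cdot\nabla_{x}$. I will use three facts: that for $(A,B)$ one has $\widetilde{K}^{AB}=0$, hence $-\Lambda^{AB}=L_{AB}-\varpi\chi_{R}$ with $\varpi\chi_{R}$ a bounded, compactly supported multiplier; that $\Vert\cdot\Vert_{L^{2}(m_{\theta})}\leq\Vert\cdot\Vert_{L_{\sigma}^{2}(m_{\theta})}$ since $\langle p\rangle^{\ga+2}\geq1$ (here $\ga\geq-2$ is used); and the clean weighted coercivity $\langle m_{\theta}\Lambda^{AB}v,v\rangle_{p}\geq c_{0}|v|_{L_{\sigma}^{2}(m_{\theta})}^{2}$, which follows from Lemma \ref{energy-Guo} at order $\alpha=0$ together with $\Lambda^{AB}=\widetilde{\Lambda}^{AB}+\varpi\chi_{R}$ and $|\chi_{R}v|_{L_{p}^{2}}^{2}\leq\langle m_{\theta}\chi_{R}v,v\rangle_{p}$ (as $0\leq\chi_{R}\leq1\leq m_{\theta}$), provided $\varpi$ is large enough. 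Granting these, $(\ref{x-evolu})$ is immediate: $\nabla_{x}^{k}$ commutes with $\mathcal{L}_{AB}$ (constant-coefficient transport, $\Lambda^{AB}$ acting only in $p$), so $v=\nabla_{x}^{k}u$ solves the same equation; testing against $m_{\theta}v$ over $x,p$, the transport term $\langle m_{\theta}\,p\cdot\nabla_{x}v,v\rangle=\frac{1}{2}\int m_{\theta}\,p\cdot\nabla_{x}|v|^{2}$ vanishes, leaving $-\langle m_{\theta}\Lambda^{AB}v,v\rangle\leq-c_{0}\Vert v\Vert_{L_{\sigma}^{2}(m_{\theta})}^{2}$.

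For $(\ref{p-evolu})$ fix $\alpha$ with $|\alpha|=\ell\geq1$ and apply $\partial_{p}^{\alpha}$. Since $p\cdot\nabla_{x}$ has coefficients linear in $p$, the commutator is clean, $[\partial_{p}^{\alpha},p\cdot\nabla_{x}]u=\sum_{k}\alpha_{k}\,\partial_{x_{k}}\partial_{p}^{\alpha-e_{k}}u$, whence
\[
\partial_{t}\partial_{p}^{\alpha}u=-\frac{1}{m_{A}}p\cdot\nabla_{x}\partial_{p}^{\alpha}u-\frac{1}{m_{A}}\sum_{k}\alpha_{k}\,\partial_{x_{k}}\partial_{p}^{\alpha-e_{k}}u-\partial_{p}^{\alpha}(\Lambda^{AB}u).
\]
Test against $m_{\theta}\,\partial_{p}^{\alpha}u$. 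The first term vanishes as above. The transport commutator term is, by Cauchy--Schwarz and Young, $\leq\eps\Vert\partial_{p}^{\alpha}u\Vert_{L^{2}(m_{\theta})}^{2}+C_{\eps}\sum_{j\leq\ell-1}\Vert\nabla_{p}^{j}\nabla_{x}u\Vert_{L^{2}(m_{\theta})}^{2}$. For the elliptic term, write $-\Lambda^{AB}=L_{AB}-\varpi\chi_{R}$: the $L_{AB}$-piece is bounded by $(\ref{Guo-1})$, giving $-\Vert\partial_{p}^{\alpha}u\Vert_{L_{\sigma}^{2}(m_{\theta})}^{2}+\delta\sum_{|\bar\alpha|\leq\ell}\Vert\partial_{p}^{\bar\alpha}u\Vert_{L_{\sigma}^{2}(m_{\theta})}^{2}+C_{\delta}\Vert u\Vert_{L^{2}}^{2}$, while the $-\varpi\chi_{R}$-piece, after $\partial_{p}^{\alpha}(\chi_{R}u)=\chi_{R}\partial_{p}^{\alpha}u+[\partial_{p}^{\alpha},\chi_{R}]u$, produces $-\varpi\int\chi_{R}m_{\theta}|\partial_{p}^{\alpha}u|^{2}\leq0$ (discarded) together with a term supported where $\nabla_{p}\chi_{R}\neq0$ that involves only $p$-derivatives of $u$ of order $\leq\ell-1$, hence is $\leq\eps\Vert\partial_{p}^{\alpha}u\Vert_{L_{\sigma}^{2}(m_{\theta})}^{2}+C_{\eps}\sum_{j\leq\ell-1}\Vert\nabla_{p}^{j}u\Vert_{L_{\sigma}^{2}(m_{\theta})}^{2}$. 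Summing over the finitely many $|\alpha|=\ell$, and choosing $\delta$ and then $\eps$ small once $R,\varpi$ are fixed, the order-$\ell$ error terms are absorbed into $-\Vert\nabla_{p}^{\ell}u\Vert_{L_{\sigma}^{2}(m_{\theta})}^{2}$ (using $\Vert\cdot\Vert_{L^{2}(m_{\theta})}\leq\Vert\cdot\Vert_{L_{\sigma}^{2}(m_{\theta})}$), yielding $(\ref{p-evolu})$ with $c_{0}$ — e.g. $1/2$ — independent of $\ell$ (only $C$ depends on $\ell$, through $\delta$).

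Finally, $(\ref{mixed-evolu})$ is the identical computation run with $\partial_{x}^{\beta}u$, $|\beta|=k$, in place of $u$, which is legitimate because $\nabla_{x}^{k}$ commutes with $\mathcal{L}_{AB}$: the transport commutator now outputs $\partial_{x_{k}}\partial_{x}^{\beta}\partial_{p}^{\alpha-e_{k}}u$, i.e. the $\Vert\nabla_{p}^{j}\nabla_{x}^{k+1}u\Vert_{L^{2}(m_{\theta})}$ terms, and the zero-order remainder from $(\ref{Guo-1})$ becomes $\Vert\nabla_{x}^{k}u\Vert_{L^{2}}^{2}$. (In fact $(\ref{mixed-evolu})$ contains $(\ref{x-evolu})$ as the case $\ell=0$ and $(\ref{p-evolu})$ as the case $k=0$, so one may present only the mixed estimate.) The single conceptual point — not so much a difficulty as the reason for the shape of the statement — is the commutator $[\nabla_{p}^{\ell},p\cdot\nabla_{x}]$: it is exactly what puts the mixed norms $\Vert\nabla_{p}^{j}\nabla_{x}u\Vert_{L^{2}(m_{\theta})}$ (resp. $\Vert\nabla_{p}^{j}\nabla_{x}^{k+1}u\Vert_{L^{2}(m_{\theta})}$) on the right-hand sides, which is why, downstream, the $p$-regularity can only be closed after the $x$-regularity of Lemma \ref{x-regularityST} is available. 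Everything else is routine: the ordering of the small parameters after $R,\varpi$, and the justification of the formal energy identities by approximation with smoothed, velocity-truncated data.
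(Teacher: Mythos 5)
Your proof is correct and follows the same approach as the paper: compute the weighted $L^{2}(m_{\theta})$ energy identity, kill the transport term by antisymmetry in $x$, control the transport commutator $[\partial_{p}^{\alpha},p\cdot\nabla_{x}]$ by Young's inequality into the mixed norms, and treat $\partial_{p}^{\alpha}\Lambda^{AB}$ by splitting off the $\varpi\chi_{R}$-piece and quoting Lemma~\ref{energy-Guo} for the $\widetilde\Lambda^{AB}$-piece. The one place you deviate is in deriving \eqref{x-evolu}: the paper carries out the integration by parts against $m_{\theta}\partial_{x}^{\alpha}u$ explicitly and shows directly that $\varphi(p)$ dominates the weight-commutator terms $\frac{1}{2m_{\theta}}\sum_{i,j}(\partial_{j}\sigma_{ij}^{AB}\partial_{i}m_{\theta}+\sigma_{ij}^{AB}\partial_{ij}^{2}m_{\theta})$ pointwise (via Lemma~\ref{decom}~(ii)--(iii)), whereas you package the weighted zero-order coercivity $\langle m_{\theta}\Lambda^{AB}v,v\rangle_{p}\geq c_{0}|v|_{L_{\sigma}^{2}(m_{\theta})}^{2}$ out of Lemma~\ref{energy-Guo} at order $\alpha=0$ plus the elementary bound $|\chi_{R}v|_{L_{p}^{2}}^{2}\leq\langle m_{\theta}\chi_{R}v,v\rangle_{p}$ absorbed by a large $\varpi$. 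Both routes are valid and need the same ingredients (Lemma~\ref{decom}~(ii)--(iii) are what drive Lemma~\ref{energy-Guo} anyway); yours is shorter at the cost of treating Guo's lemma as a black box, while the paper's computation is self-contained for the weighted coercivity. Your remarks on the ordering of $R,\varpi,\delta$ and on the uniformity of $c_{0}$ in $\ell$ are correct and match the paper's logic.
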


\begin{proof}
Let $\alpha\ $and $\beta$ be any multi-indices. Direct computation shows
\begin{align*}
\frac{1}{2}\frac{d}{dt}\left\Vert \partial_{x}^{\alpha}u\right\Vert
_{L^{2}\left(  m_{\theta}\right)  }^{2}  &  =\frac{1}{2}\int\nabla_{x}%
\cdot\left(  \mathcal{-}\frac{p}{m_{A}}\left\vert \partial_{x}^{\alpha
}u\right\vert ^{2}\right)  m_{\theta}dxdp\\
&  +\int\left(  \partial_{x}^{\alpha}u\nabla_{p}\cdot\left[  \sigma^{AB}%
\nabla_{p}\partial_{x}^{\alpha}u\right]  -\varphi\left(  p\right)  \left\vert
\partial_{x}^{\alpha}u\right\vert ^{2}\right)  m_{\theta}dxdp\\
&  =-\int\left(  \sigma^{AB}\nabla_{p}\partial_{x}^{\alpha}u,\nabla
_{p}\partial_{x}^{\alpha}u\right)  m_{\theta}dxdp-\int\left[  \varphi\left(
p\right)  \left\vert \partial_{x}^{\alpha}u\right\vert ^{2}m_{\theta}+\frac
{1}{2}\left(  \sigma^{AB}\nabla_{p}\left\vert \partial_{x}^{\alpha
}u\right\vert ^{2},\nabla_{p}m_{\theta}\right)  \right]  dxdp\\
&  =-\int\left(  \sigma^{AB}\nabla_{p}\partial_{x}^{\alpha}u,\nabla
_{p}\partial_{x}^{\alpha}u\right)  m_{\theta}dxdp\\
&  -\int\left[  \varphi\left(  p\right)  -\frac{1}{2m_{\theta}}\sum
_{i,j=1}^{3}\left(  \partial_{j}\sigma_{ij}^{AB}\partial_{i}m_{\theta}%
+\sigma_{ij}^{AB}\partial_{ij}^{2}m_{\theta}\right)  \right]  \left\vert
\partial_{x}^{\alpha}u\right\vert ^{2}m_{\theta}dxdp,
\end{align*}
where $\partial_{j}=\partial_{p_{j}}$, $\partial_{ij}^{2}$ $=\partial
_{p_{i}p_{j}}^{2}$ and%

\[
\varphi\left(  p\right)  =\frac{1}{4}\frac{m_{B}^{2}}{m_{A}^{2}}\left(
\sigma^{AB}p,p\right)  -\frac{1}{2}\frac{m_{B}}{m_{A}}\nabla_{p}\cdot\left(
\sigma^{AB}p\right)  +\varpi\chi_{R}.
\]
By Lemma \ref{decom} $(iii)$,
\[
\left\vert \frac{1}{2m_{\theta}}\sum_{i,j=1}^{3}\left(  \partial_{j}%
\sigma_{ij}^{AB}\partial_{i}m_{\theta}+\sigma_{ij}^{AB}\partial_{ij}m_{\theta
}\right)  \right\vert \leq c\left\langle p\right\rangle ^{\gamma},
\]
which implies that
\begin{align*}
&  \varphi\left(  p\right)  -\frac{1}{2m_{\theta}}\sum_{i,j=1}^{3}\left(
\partial_{j}\sigma_{ij}^{AB}\partial_{i}m_{\theta}+\sigma_{ij}^{AB}%
\partial_{ij}m_{\theta}\right) \\
&  \geq\frac{1}{4}\lambda_{1}^{AB}\left(  p\right)  \left(  \frac{m_{B}}%
{m_{A}}\left\vert p\right\vert \right)  ^{2}+\frac{1}{2}\lambda_{1}%
^{AB}\left(  p\right)  \frac{m_{B}}{m_{A}}\left(  \frac{m_{B}}{m_{A}%
}\left\vert p\right\vert \right)  ^{2}-\frac{m_{B}}{m_{A}}\lambda_{2}%
^{AB}\left(  p\right)  -\frac{1}{2}\frac{m_{B}}{m_{A}}\lambda_{1}^{AB}\left(
p\right)  +\varpi\chi_{R}-c\left\langle p\right\rangle ^{\gamma}\\
&  \geq c_{1}\left\langle p\right\rangle ^{\gamma+2},
\end{align*}
$\ $provided $\varpi$ and $R$ are sufficiently large. Owing to Lemma
\ref{decom} $(ii)$, there exists $c_{2}>0$ independent\ of $\alpha$ such that
\[
\frac{d}{dt}\left\Vert \partial_{x}^{\alpha}u\right\Vert _{L^{2}\left(
m_{\theta}\right)  }^{2}\leq-c_{2}\left\Vert \partial_{x}^{\alpha}u\right\Vert
_{L_{\sigma}^{2}\left(  m_{\theta}\right)  }^{2},
\]
as required.

For $\left\vert \alpha\right\vert \geq1,$ compute the evolution
\begin{align*}
&  \frac{1}{2}\frac{d}{dt}\left\Vert \partial_{p}^{\alpha}u\right\Vert
_{L^{2}\left(  m_{\theta}\right)  }^{2}\\
&  =\int\partial_{p}^{\alpha}u\partial_{p}^{\alpha}\left(  -\frac{p}{m_{A}%
}\cdot\nabla_{x}u-\widetilde{\Lambda}^{AB}u-\varpi\chi_{R}u\right)  m_{\theta
}dxdp\\
&  =\frac{1}{2}\int\nabla_{x}\cdot\left(  -\frac{p}{m_{A}}\left\vert
\partial_{p}^{\alpha}u\right\vert ^{2}\right)  m_{\theta}dxdp-\int\left[
\sum_{\beta<\alpha,\ \left\vert \beta\right\vert =\left\vert \alpha\right\vert
-1}C_{\beta}^{\alpha}\partial_{p}^{\alpha-\beta}\left(  \frac{p}{m_{A}%
}\right)  \cdot\nabla_{x}\partial_{p}^{\beta}u\right]  \partial_{p}^{\alpha
}um_{\theta}dxdp\\
&  -\int\left(  \partial_{p}^{\alpha}\widetilde{\Lambda}^{AB}u\right)
\partial_{p}^{\alpha}um_{\theta}dxdp-\varpi\int\partial_{p}^{\alpha}%
u\partial_{p}^{\alpha}\left(  \chi_{R}u\right)  m_{\theta}dxdp\\
&  =-\int\left[  \sum_{\beta<\alpha,\ \left\vert \beta\right\vert =\left\vert
\alpha\right\vert -1}C_{\beta}^{\alpha}\partial_{p}^{\alpha-\beta}\left(
\frac{p}{m_{A}}\right)  \cdot\nabla_{x}\partial_{p}^{\beta}u\right]
\partial_{p}^{\alpha}um_{\theta}dxdp-\int\left(  \partial_{p}^{\alpha
}\widetilde{\Lambda}^{AB}u\right)  \partial_{p}^{\alpha}um_{\theta}dxdp\\
&  -\varpi\int\chi_{R}\left\vert \partial_{p}^{\alpha}u\right\vert
^{2}m_{\theta}dxdp-\varpi\int\sum_{\beta<\alpha}C_{\beta}^{\alpha}\partial
_{p}^{\alpha}u\partial_{p}^{\alpha-\beta}\chi_{R}\partial_{p}^{\beta
}um_{\theta}dxdp,
\end{align*}
here we denote $\binom{\alpha}{\beta}\ $by $C_{\beta}^{\alpha}.$ By choosing
$0<\delta\ll1$ and the Cauchy inequality,
\[
\left\vert \int\left[  \sum_{\substack{\beta<\alpha,\\\left\vert
\beta\right\vert =\left\vert \alpha\right\vert -1}}C_{\beta}^{\alpha}%
\partial_{p}^{\alpha-\beta}\left(  \frac{p}{m_{A}}\right) \cdot \nabla_{x}%
\partial_{p}^{\beta}u\right]  \partial_{p}^{\alpha}um_{\theta}dxdp\right\vert
\leq\delta\left\Vert \partial_{p}^{\alpha}u\right\Vert _{L^{2}\left(
m_{\theta}\right)  }^{2}+C_{\delta}\sum_{\left\vert \overline{\alpha
}\right\vert <\left\vert \alpha\right\vert }\left\Vert \nabla_{x}\partial
_{p}^{\overline{\alpha}}u\right\Vert _{L^{2}\left(  m_{\theta}\right)  }^{2},
\]
and%
\[
\left\vert \varpi\int\sum_{\beta<\alpha}C_{\beta}^{\alpha}\partial_{p}%
^{\alpha}u\partial_{p}^{\alpha-\beta}\chi_{R}\partial_{p}^{\beta}um_{\theta
}dxdp\right\vert \leq\delta\left\Vert \partial_{p}^{\alpha}u\right\Vert
_{L^{2}\left(  m_{\theta}\right)  }^{2}+C_{\delta}\sum_{\left\vert
\overline{\alpha}\right\vert <\left\vert \alpha\right\vert }\left\Vert
\partial_{p}^{\overline{\alpha}}u\right\Vert _{L^{2}\left(  m_{\theta}\right)
}^{2}.
\]
From Lemma \ref{energy-Guo},
\[
-\int\left(  \partial_{p}^{\alpha}\widetilde{\Lambda}^{AB}u\right)
\partial_{p}^{\alpha}um_{\theta}dxdp\leq-c_{1}\left\Vert \partial_{p}^{\alpha
}u\right\Vert _{L_{\sigma}^{2}\left(  m_{\theta}\right)  }^{2}+\delta
\sum_{\left\vert \overline{\alpha}\right\vert \leq\left\vert \alpha\right\vert
}\left\Vert \partial_{p}^{\overline{\alpha}}u\right\Vert _{L_{\sigma}%
^{2}\left(  m_{\theta}\right)  }^{2}+C_{\delta}\left\Vert \chi_{R}u\right\Vert
_{L^{2}}^{2}.
\]
Together with $\left\Vert \partial_{p}^{\alpha}u\right\Vert _{L^{2}\left(
m_{\theta}\right)  }^{2}\leq\left\Vert \partial_{p}^{\alpha}u\right\Vert
_{L_{\sigma}^{2}\left(  m_{\theta}\right)  }^{2},$ we therefore have
\[
\frac{d}{dt}\left\Vert \nabla_{p}^{\ell}u\right\Vert _{L^{2}\left(  m_{\theta
}\right)  }^{2}\leq-\frac{c_{1}}{2}\left\Vert \partial_{p}^{\ell}u\right\Vert
_{L_{\sigma}^{2}\left(  m_{\theta}\right)  }^{2}+C\left(  \sum_{j=0}^{\ell
-1}\left\Vert \nabla_{p}^{j}u\right\Vert _{L_{\sigma}^{2}\left(  m_{\theta
}\right)  }^{2}+\sum_{j=0}^{\ell-1}\left\Vert \nabla_{p}^{j}\nabla
_{x}u\right\Vert _{L^{2}\left(  m_{\theta}\right)  }^{2}+\left\Vert
u\right\Vert _{L^{2}}^{2}\right)  ,
\]
for $\ell\in\mathbb{N}$. Since $\partial_{x}^{\alpha}$ is commutative with
$\mathcal{L}_{AB},$ $\left(  \ref{mixed-evolu}\right)  $ is a consequence of
$\left(  \ref{p-evolu}\right)  $.
\end{proof}

Now, we embark on the $p$-regularity estimate for $e^{t\mathcal{L}_{AB}}$ and
$f$ in turn. For clarification, we first elaborate our procedure in the cases
of $\ell=1$ and $\ell=2.$ For general $\ell,$ we provide the explicit form of
the desired functional $\mathcal{F}_{\ell}$ and complete the proof inductively
on $\ell.$\newline Step 1: The estimate of $\nabla_{p}\nabla_{x}^{k}f$ (i.e.,
$\ell=1$). Define the functional
\[
\mathcal{F}_{1}(t,u)\equiv\int u^{2}m_{1}dxdp+\int\left\vert \nabla
_{x}u\right\vert ^{2}m_{\theta}dxdp+\kappa t\int\left\vert \nabla
_{p}u\right\vert ^{2}m_{\theta}dxdp.
\]
In view of Lemma \ref{energy-evolution},
\[
\frac{d}{dt}\int u^{2}m_{1}dxdp\lesssim-\Vert u\Vert_{L_{\sigma}^{2}(m_{1}%
)}^{2},\ \text{\ \ \ \ }\frac{d}{dt}\int\left\vert \nabla_{x}u\right\vert
^{2}m_{\theta}dxdp\lesssim-\Vert\nabla_{x}u\Vert_{L_{\sigma}^{2}(m_{\theta}%
)}^{2},
\]%
\[
\frac{d}{dt}\int\left\vert \nabla_{p}u\right\vert ^{2}m_{\theta}%
dxdp\lesssim-\Vert\nabla_{p}u\Vert_{L_{\sigma}^{2}(m_{\theta})}^{2}+C\left(
\Vert\nabla_{x}u\Vert_{L_{\sigma}^{2}(m_{\theta})}^{2}+\Vert u\Vert
_{L_{\sigma}^{2}(m_{\theta})}^{2}\right)  .
\]
Collecting terms gives
\begin{align*}
\frac{d}{dt}\mathcal{F}_{1}(t,u)  &  \lesssim\Vert u\Vert_{L_{\sigma}%
^{2}(m_{1})}^{2}\left(  -1+C\kappa t\right)  +C\kappa\int\left\vert \nabla
_{p}u\right\vert ^{2}m_{\theta}\\
&  +\Vert\nabla_{x}u\Vert_{L_{\sigma}^{2}(m_{\theta})}^{2}\left(  -1+C\kappa
t\right)  +\Vert\nabla_{p}u\Vert_{L_{\sigma}^{2}(m_{\theta})}^{2}\left(
-\kappa t\right)  .
\end{align*}
Let $\kappa=\varepsilon^{2}$. Choosing $\varepsilon>0\ $sufficiently small, we
thereby obtain
\[
\frac{d}{dt}\mathcal{F}_{1}(t,u)<0\ \ \ \ \text{for \ }0<t<1.
\]
It implies that for $0\leq t\leq1,$%
\begin{equation}
t\left\Vert \nabla_{p}u\left(  t\right)  \right\Vert _{L^{2}\left(  m_{\theta
}\right)  }^{2}\lesssim\mathcal{F}_{1}(0,u)=\left\Vert u\left(  0\right)
\right\Vert _{L^{2}\left(  m_{1}\right)  }^{2}+\left\Vert \nabla_{x}u\left(
0\right)  \right\Vert _{L^{2}\left(  m_{\theta}\right)  }^{2}\,. \label{aa.2}%
\end{equation}

Next, we return to the estimate for $f$. Let $0<t_{0}\leq1$ be given. By
Duhamel's principle,
\[
f\left(  t\right)  =e^{(t-t_{0}/2)\mathcal{L}_{AB}}f(t_{0}/2)+\int_{t_{0}%
/2}^{t}e^{\left(  t-s\right)  \mathcal{L}_{AB}}K^{AB}f\left(  s\right)  ds,
\]
for $0<t_{0}/2<t\leq1.$ Hence, we have
\begin{align}
\left\Vert \nabla_{p}f\left(  t\right)  \right\Vert _{L^{2}\left(  m_{\theta
}\right)  }  &  \leq C\left(  t-t_{0}/2\right)  ^{-\frac{1}{2}}\left(
\left\Vert f\left(  t_{0}/2\right)  \right\Vert _{L^{2}\left(  m_{1}\right)
}+\left\Vert \nabla_{x}f\left(  t_{0}/2\right)  \right\Vert _{L^{2}\left(
m_{\theta}\right)  }\right) \label{fp}\\
&  \hspace{1em}+C\int_{t_{0}/2}^{t}(t-s)^{-\frac{1}{2}}\left(  \left\Vert
f\left(  s\right)  \right\Vert _{L^{2}\left(  m_{1}\right)  }+\left\Vert
\nabla_{x}f\left(  s\right)  \right\Vert _{L^{2}\left(  m_{\theta}\right)
}\right)  ds\nonumber\\
&  \leq C\left(  t-t_{0}/2\right)  ^{-\frac{1}{2}}t_{0}^{-\frac{3}{2}%
}\left\Vert f_{in}\right\Vert _{L^{2}\left(  m_{1}\right)  }\,,\nonumber
\end{align}
due to Lemma \ref{x-regularityST}, $\left(  \ref{f-weighted-energy}\right)
\ $and $(\ref{aa.2}).$ Consequently,
\[
\left\Vert \nabla_{p}f\left(  t\right)  \right\Vert _{L^{2}\left(  m_{\theta
}\right)  }\lesssim t^{-\frac{1}{2}-\frac{3}{2}}\left\Vert f_{in}\right\Vert
_{L^{2}\left(  m_{1}\right)  },\ \ \ \ \ 0<t\leq1.
\]
On the other hand, since $\partial_{x}^{\alpha}$ commutes with the equation
$\left(  \ref{Landau-f}\right)  $, we can improve the mixed regularity
simultaneously. Precisely, replacing $f$ by $\partial_{x}^{\alpha}f$ in
$\left(  \ref{fp}\right)  $ gives
\begin{align*}
\left\Vert \nabla_{p}\nabla_{x}^{k}f\left(  t\right)  \right\Vert
_{L^{2}\left(  m_{\theta}\right)  }  &  \leq C\left(  t-t_{0}/2\right)
^{-\frac{1}{2}}\left(  \left\Vert \nabla_{x}^{k}f\left(  t_{0}/2\right)
\right\Vert _{L^{2}\left(  m_{1}\right)  }+\left\Vert \nabla_{x}^{k+1}f\left(
t_{0}/2\right)  \right\Vert _{L^{2}\left(  m_{\theta}\right)  }\right) \\
&  +C\int_{t_{0}/2}^{t}(t-s)^{-\frac{1}{2}}\left(  \left\Vert \nabla_{x}%
^{k}f\left(  s\right)  \right\Vert _{L^{2}\left(  m_{1}\right)  }+\left\Vert
\nabla_{x}^{k+1}f\left(  s\right)  \right\Vert _{L^{2}\left(  m_{\theta
}\right)  }\right)  ds.
\end{align*}
By Lemma \ref{x-regularityST} and $\left(  \ref{f-weighted-energy}\right)  ,$
\begin{equation}
\left\Vert \nabla_{p}\nabla_{x}^{k}f\left(  t\right)  \right\Vert
_{L^{2}\left(  m_{\theta}\right)  }\leq Ct^{-\frac{1}{2}-\frac{3}{2}%
(k+1)}\left\Vert f_{in}\right\Vert _{L^{2}\left(  m_{k+1}\right)  }.
\label{aa.5}%
\end{equation}
\newline Step 2: The estimate of $\nabla_{p}^{2}\nabla_{x}^{k}f\ $(i.e.,
$\ell=2$). Define the functional
\[
\mathcal{F}_{2}(t,u)\equiv\int(u^{2}+\left\vert \nabla_{x}u\right\vert
^{2}+\kappa_{1}\left\vert \nabla_{p}u\right\vert ^{2})m_{1}dxdp+\int
(\left\vert \nabla_{x}^{2}u\right\vert ^{2}+\kappa_{2}\left\vert \nabla
_{p}\nabla_{x}u\right\vert ^{2})m_{\theta}dxdp+\kappa_{3}t\int\left\vert
\nabla_{p}^{2}u\right\vert ^{2}m_{\theta}dxdp.
\]
By Lemma \ref{energy-evolution},
\[
\frac{d}{dt}\int(u^{2}+\left\vert \nabla_{x}u\right\vert ^{2})m_{1}%
dxdp\lesssim-\Vert u\Vert_{L_{\sigma}^{2}(m_{1})}^{2}-\Vert\nabla_{x}%
u\Vert_{L_{\sigma}^{2}(m_{1})}^{2}\,,
\]

\[
\frac{d}{dt}\int\left\vert \nabla_{p}u\right\vert ^{2}m_{1}dxdp\lesssim
-\Vert\nabla_{p}u\Vert_{L_{\sigma}^{2}(m_{1})}^{2}+C\left(  \Vert\nabla
_{x}u\Vert_{L_{\sigma}^{2}(m_{1})}^{2}+\Vert u\Vert_{L_{\sigma}^{2}(m_{1}%
)}^{2}\right)  \,,
\]

\[
\frac{d}{dt}\int\left\vert \nabla_{x}^{2}u\right\vert ^{2}m_{\theta
}dxdp\lesssim-\Vert\nabla_{x}^{2}u\Vert_{L_{\sigma}^{2}(m_{\theta})}^{2}\,,
\]

\[
\frac{d}{dt}\int\left\vert \nabla_{p}\nabla_{x}u\right\vert ^{2}m_{\theta
}dxdp\lesssim-\Vert\nabla_{p}\nabla_{x}u\Vert_{L_{\sigma}^{2}(m_{\theta})}%
^{2}+C\left(  \Vert\nabla_{x}^{2}u\Vert_{L_{\sigma}^{2}(m_{\theta})}^{2}%
+\Vert\nabla_{x}u\Vert_{L_{\sigma}^{2}(m_{\theta})}^{2}\right)  \,,
\]

\[
\frac{d}{dt}\int\left\vert \nabla_{p}^{2}u\right\vert ^{2}m_{\theta
}dxdp\lesssim-\Vert\nabla_{p}^{2}u\Vert_{L_{\sigma}^{2}(m_{\theta})}%
^{2}+C\left(  \Vert\nabla_{p}\nabla_{x}u\Vert_{L_{\sigma}^{2}(m_{\theta})}%
^{2}+\Vert u\Vert_{L_{\sigma}^{2}(m_{\theta})}^{2}+\Vert\nabla_{p}%
u\Vert_{L_{\sigma}^{2}(m_{\theta})}^{2}\right)  \,.
\]
Hence,
\begin{align*}
\frac{d}{dt}\mathcal{F}_{2}(t,u)  &  \lesssim\Vert u\Vert_{L_{\sigma}%
^{2}(m_{1})}^{2}\left(  -1+C\kappa_{1}+C\kappa_{3}t\right)  +\Vert\nabla
_{x}u\Vert_{L_{\sigma}^{2}(m_{1})}^{2}\left(  -1+C\kappa_{1}+C\kappa
_{2}\right) \\
&  +\Vert\nabla_{p}u\Vert_{L_{\sigma}^{2}(m_{1})}^{2}\left(  -\kappa
_{1}+C\kappa_{3}t\right)  +C\kappa_{3}\Vert\nabla_{p}^{2}u\Vert_{L^{2}%
(m_{\theta})}^{2}\\
&  +\Vert\nabla_{x}^{2}u\Vert_{L_{\sigma}^{2}(m_{\theta})}^{2}\left(
-1+C\kappa_{2}\right)  +\Vert\nabla_{p}\nabla_{x}u\Vert_{L_{\sigma}%
^{2}(m_{\theta})}^{2}\left(  -\kappa_{2}+C\kappa_{3}t\right) \\
&  +\Vert\nabla_{p}^{2}u\Vert_{L_{\sigma}^{2}(m_{\theta})}^{2}\left(
-\kappa_{3}t\right)  .
\end{align*}
Set $\kappa_{i}=\varepsilon^{i}$. Choosing $\varepsilon>0$ sufficiently small
, we find%
\[
\frac{d}{dt}\mathcal{F}_{2}(t,u)\leq0,\ \ \ \text{for \ }0<t<1,\
\]
so that
\begin{equation}
t\left\Vert \nabla_{p}^{2}u(t)\right\Vert _{L^{2}(m_{\theta})}^{2}%
\lesssim\mathcal{F}_{2}(0,u)=\left\Vert u(0)\right\Vert _{H^{1}(m_{1})}%
^{2}+\left\Vert \nabla_{x}^{2}u(0)\right\Vert _{L^{2}(m_{\theta})}%
^{2}+\left\Vert \nabla_{p}\nabla_{x}u(0)\right\Vert _{L^{2}(m_{\theta})}%
^{2}\,, \label{aa.4}%
\end{equation}
for $0\leq t\leq1.$

Let $0<t_{0}\leq1.$ By Duhamel's principle,
\[
f\left(  t\right)  =e^{(t-t_{0}/2)\mathcal{L}_{AB}}f(t_{0}/2)+\int_{t_{0}%
/2}^{t}e^{\left(  t-s\right)  \mathcal{L}_{AB}}K^{AB}f\left(  s\right)  ds,
\]
for $0<t_{0}/2<t\leq1.$ Using Lemma \ref{x-regularityST}, $\left(
\ref{f-weighted-energy}\right)  ,(\ref{aa.5})$ and $(\ref{aa.4})$, we obtain
\begin{align}
\left\Vert \nabla_{p}^{2}f\left(  t\right)  \right\Vert _{L^{2}\left(
m_{\theta}\right)  }  &  \leq C\left(  t-t_{0}/2\right)  ^{-\frac{1}{2}%
}\left(  \left\Vert f(t_{0}/2)\right\Vert _{H^{1}(m_{1})}+\left\Vert
\nabla_{x}^{2}f(t_{0}/2)\right\Vert _{L^{2}(m_{\theta})}+\left\Vert \nabla
_{p}\nabla_{x}f(t_{0}/2)\right\Vert _{L^{2}(m_{\theta})}\right) \label{fpp}\\
&  \hspace{1em}+C\int_{t_{0}/2}^{t}(t-s)^{-\frac{1}{2}}\left(  \left\Vert
f(s)\right\Vert _{H^{1}(m_{1})}+\left\Vert \nabla_{x}^{2}f(s)\right\Vert
_{L^{2}(m_{\theta})}+\left\Vert \nabla_{p}\nabla_{x}f(s)\right\Vert
_{L^{2}(m_{\theta})}\right)  ds\nonumber\\
&  \leq C\left(  t-t_{0}/2\right)  ^{-\frac{1}{2}}t_{0}^{-\frac{1}{2}-\frac
{3}{2}\times2}\left\Vert f_{in}\right\Vert _{L^{2}\left(  m_{2}\right)
}\,.\nonumber
\end{align}
Namely,
\[
\left\Vert \nabla_{p}^{2}f\left(  t\right)  \right\Vert _{L^{2}\left(
m_{\theta}\right)  }\lesssim t^{-\frac{2}{2}-\frac{3}{2}\times2}\left\Vert
f_{in}\right\Vert _{L^{2}\left(  m_{2}\right)  },\ \ \ \ \ 0<t\leq1.
\]
Again, since $\partial_{x}^{\alpha}$ is commutative with the equation $\left(
\ref{Landau-f}\right)  $, replacing $f$ by $\partial_{x}^{\alpha}f$ in
$\left(  \ref{fpp}\right)  $ gives
\begin{align*}
\left\Vert \nabla_{p}^{2}\nabla_{x}^{k}f\left(  t\right)  \right\Vert
_{L^{2}\left(  m_{\theta}\right)  }  &  \leq C\left(  t-t_{0}/2\right)
^{-\frac{1}{2}}\left(  \left\Vert \nabla_{x}^{k}f(t_{0}/2)\right\Vert
_{H^{1}(m_{1})}+\left\Vert \nabla_{x}^{k+2}f(t_{0}/2)\right\Vert
_{L^{2}(m_{\theta})}+\left\Vert \nabla_{p}\nabla_{x}^{k+1}f(t_{0}%
/2)\right\Vert _{L^{2}(m_{\theta})}\right) \\
&  +C\int_{t_{0}/2}^{t}(t-s)^{-\frac{1}{2}}\left(  \left\Vert \nabla_{x}%
^{k}f(s)\right\Vert _{H^{1}(m_{1})}+\left\Vert \nabla_{x}^{k+2}f(s)\right\Vert
_{L^{2}(m_{\theta})}+\left\Vert \nabla_{p}\nabla_{x}^{k+1}f(s)\right\Vert
_{L^{2}(m_{\theta})}\right)  ds\\
&  \leq C\left(  t-t_{0}/2\right)  ^{-\frac{1}{2}}t_{0}^{-\frac{1}{2}-\frac
{3}{2}\left(  k+2\right)  }\left\Vert f_{in}\right\Vert _{L^{2}\left(
m_{k+2}\right)  }.
\end{align*}
That is,
\[
\left\Vert \nabla_{p}^{2}\nabla_{x}^{k}f\left(  t\right)  \right\Vert
_{L^{2}\left(  m_{\theta}\right)  }\leq Ct^{-\frac{2}{2}-\frac{3}{2}%
(k+2)}\left\Vert f_{in}\right\Vert _{L^{2}\left(  m_{k+2}\right)  }.
\]
\newline Step 3: The estimate of $\nabla_{p}^{\ell}\nabla_{x}^{k}f$. For
general $\ell\in\mathbb{N}$, consider the functional
\begin{align*}
\mathcal{F}_{\ell}\left(  t,u\right)   &  =\int\left(  \sum_{j=0}^{\ell
-1}\left\vert \nabla_{x}^{j}u\right\vert ^{2}+\sum_{j=1}^{\ell-1}\sum
_{q=1}^{j}\varepsilon^{q}\left\vert \nabla_{p}^{q}\nabla_{x}^{j-q}u\right\vert
^{2}\right)  m_{1}dxdp+\int\left(  \left\vert \nabla_{x}^{\ell}u\right\vert
^{2}+\sum_{j=1}^{\ell-1}\varepsilon^{j+1}\left\vert \nabla_{p}^{j}\nabla
_{x}^{\ell-j}u\right\vert ^{2}\right)  m_{\theta}dxdp%
\vspace{3mm}%
\\
&  +\varepsilon^{\ell+1}t\int\left\vert \nabla_{p}^{\ell}u\right\vert
^{2}m_{\theta}dxdp,
\end{align*}
for $\varepsilon>0$ sufficiently small. Applying the above argument
inductively on $\ell$, we obtain high order $p$-regularity in small time and
conclude our result as below.

\begin{proposition}
\label{Mixed-decay}Let $k,$ $\ell\in\mathbb{N}\cup\{0\}.$ Then for
$0<t\leq1,$
\[
\left\Vert \nabla_{p}^{\ell}\nabla_{x}^{k}f\left(  t\right)  \right\Vert
_{L^{2}\left(  m_{\theta}\right)  }\leq Ct^{-\frac{\ell}{2}-\frac{3}{2}%
(k+\ell)}\left\Vert f_{in}\right\Vert _{L^{2}\left(  m_{k+\ell}\right)  \ }.
\]

\end{proposition}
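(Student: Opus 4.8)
The plan is to induct on $\ell$: the case $\ell=0$ is Lemma \ref{x-regularityST}(i), and the cases $\ell=1,2$ were treated above; the inductive step replays the two-stage scheme used there. Stage one is a purely semigroup statement: for a solution $u$ of $\partial_t u=\mathcal{L}_{AB}u$ I will show that
\[
t\,\|\nabla_p^{\ell}u(t)\|_{L^2(m_\theta)}^2\lesssim\mathcal{F}_\ell(0,u),\qquad 0\le t\le 1,
\]
with $\mathcal{F}_\ell$ the functional displayed just before the statement; stage two transfers this to the full solution $f$ of $(\ref{Landau-f})$ via Duhamel's formula, exactly as $(\ref{fp})$ and $(\ref{fpp})$ did for $\ell=1,2$.

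For stage one I differentiate $\mathcal{F}_\ell(t,u)$ in $t$ and feed each constituent block through Lemma \ref{energy-evolution}. By $(\ref{p-evolu})$ and $(\ref{mixed-evolu})$, the block $\|\nabla_p^{q}\nabla_x^{j-q}u\|_{L^2(m)}^2$ produces its own coercive term $-c_0\|\nabla_p^{q}\nabla_x^{j-q}u\|_{L_\sigma^2(m)}^2$ plus errors that involve only derivatives with \emph{strictly fewer} $p$-derivatives (the residue of the commutator $[p\cdot\nabla_x,\nabla_p]$) or of strictly lower total order. Giving the block at $p$-level $q$ the coefficient $\varepsilon^{q}$ --- respectively $\varepsilon^{q+1}$ among the top total-order blocks, and $\varepsilon^{\ell+1}t$ for the single top block $\|\nabla_p^{\ell}u\|^2$ --- these errors are dominated by the coercive term one level below once $\varepsilon$ is chosen small, while the factor $t$ in front of $\|\nabla_p^{\ell}u\|^2$ keeps the product-rule byproduct $\frac{d}{dt}(t\,\cdot\,)=(\,\cdot\,)+t\frac{d}{dt}(\,\cdot\,)$ under control on $(0,1]$. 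Hence $\mathcal{F}_\ell(t,u)\lesssim\mathcal{F}_\ell(0,u)$ for $t\le 1$, and since $\mathcal{F}_\ell(t,u)\ge\varepsilon^{\ell+1}t\,\|\nabla_p^{\ell}u(t)\|_{L^2(m_\theta)}^2$ the displayed semigroup bound follows. What matters for stage two is that $\mathcal{F}_\ell(0,u)$ involves only derivatives of $u_0$ of total order $\le\ell$ with at most $\ell-1$ of them in $p$, and that its weight hierarchy (the weight $m_1$ on blocks of total order $\le\ell-1$, the weight $m_\theta$ on blocks of total order $\ell$) is calibrated so that --- conceding, by $(\ref{p-evolu})$, one power $\langle p\rangle^{|\gamma|}$ per $p$-derivative when $\gamma<0$ --- everything can still be bounded by $\|f_{in}\|_{L^2(m_{k+\ell})}$; this is legitimate because Lemmas \ref{energy-Guo}, \ref{regularization} and \ref{energy-evolution} hold with $m_\theta=\langle p\rangle^\theta$ for any $\theta\ge 0$.

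For stage two, fix $0<t_0\le 1$ and write, for $t_0/2<t\le1$,
\[
\nabla_x^k f(t)=e^{(t-t_0/2)\mathcal{L}_{AB}}\nabla_x^k f(t_0/2)+\int_{t_0/2}^{t}e^{(t-s)\mathcal{L}_{AB}}K^{AB}\nabla_x^k f(s)\,ds,
\]
apply $\nabla_p^{\ell}$ and the $L^2(m_\theta)$-norm, and invoke the semigroup bound on each term. The quantities $\mathcal{F}_\ell(0,\nabla_x^k f(t_0/2))^{1/2}$ and $\mathcal{F}_\ell(0,K^{AB}\nabla_x^k f(s))^{1/2}$ are bounded by Lemma \ref{x-regularityST} for the pure $x$-derivative blocks and by the induction hypothesis for the mixed blocks (which carry at most $\ell-1$ $p$-derivatives); here one uses that, for $(X,Y)=(A,B)$, the operator $K^{AB}=\varpi\chi_R$ is merely multiplication by a smooth cutoff in $p$, so differentiating $\chi_R\nabla_x^k f(s)$ reproduces only derivatives of $\nabla_x^k f(s)$ of the same orders, localized to $|p|\le 2R$. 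The dominant contributions are $(t-t_0/2)^{-1/2}\,t_0^{-\frac{\ell-1}{2}-\frac{3}{2}(k+\ell)}\|f_{in}\|_{L^2(m_{k+\ell})}$ from the first term and $\int_{t_0/2}^{t}(t-s)^{-1/2}s^{-\frac{\ell-1}{2}-\frac{3}{2}(k+\ell)}\,ds\,\|f_{in}\|_{L^2(m_{k+\ell})}$ from the Duhamel integral; setting $t=t_0$ and using $\int_{t_0/2}^{t_0}(t_0-s)^{-1/2}\,ds\lesssim\sqrt{t_0}$, both are $\lesssim t_0^{-\frac{\ell}{2}-\frac{3}{2}(k+\ell)}\|f_{in}\|_{L^2(m_{k+\ell})}$. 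Since $t_0\in(0,1]$ was arbitrary, this is the claimed estimate.

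The main obstacle is stage one: making the differential inequality for $\mathcal{F}_\ell$ actually close --- i.e., tracking at each $p$-level and total order exactly which negative dissipation term swallows which positive cross term, and verifying that the geometric choice $\kappa_q=\varepsilon^{q}$ together with the lone time-weighted top block is enough (in particular that the product-rule byproduct of that top block is harmless throughout $(0,1]$). A secondary, bookkeeping, difficulty is the weight accounting: each of the $k+\ell$ differentiations can cost one power $\langle p\rangle^{|\gamma|}$ when $\gamma<0$, and one must check that the weight hierarchy built into $\mathcal{F}_\ell$ is arranged so that the final right-hand side is exactly $\|f_{in}\|_{L^2(m_{k+\ell})}$ --- no more.
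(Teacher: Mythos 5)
Your proposal is correct and follows the paper's own two-stage strategy essentially verbatim: prove a time-weighted energy bound $t\,\|\nabla_p^\ell u\|_{L^2(m_\theta)}^2\lesssim\mathcal{F}_\ell(0,u)$ for the semigroup $e^{t\mathcal{L}_{AB}}$ by closing the differential inequality for the functional $\mathcal{F}_\ell$ via Lemma \ref{energy-evolution} with the geometric coefficients $\varepsilon^q$, then transfer to $f$ by Duhamel on $[t_0/2,t]$, treating $K^{AB}$ as a compactly supported multiplier and feeding the initial data at time $t_0/2$ through Lemma \ref{x-regularityST} and the induction hypothesis exactly as $(\ref{fp})$ and $(\ref{fpp})$ do. The only imprecision is the aside that the factor $t$ "keeps the product-rule byproduct under control": it is not the $t$ factor but the weight hierarchy ($\|\nabla_p^\ell u\|^2_{L^2(m_\theta)}\lesssim\|\nabla_p^{\ell-1}u\|^2_{L^2_\sigma(m_1)}$) together with the smallness of $\varepsilon^{\ell+1}$ that absorbs the positive byproduct into the dissipation of the next-lower block, which your preceding clause about "the coercive term one level below" does correctly identify.
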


\subsubsection{Improvement of the $p$ regularity in large time}

In what follows, we establish the $p$-regularity in large time through the
Gronwall type inequalities.

\begin{proposition}
\label{energy*} Let $f$ be a solution to equation $\left(  \ref{Landau-f}%
\right)  $ and let $k,\ \ell\in\mathbb{N}\cup\{0\}.$ Then for $t\geq1,$%
\[
\left\Vert \nabla_{p}^{\ell}\nabla_{x}^{k}f(t)\right\Vert _{L^{2}(m_{\theta}%
)}\leq C\left\Vert f_{in}\right\Vert _{L^{2}\left(  m_{k+\ell}\right)  \ }\ ,
\]
the constant $C$ depending only upon $k$ and $\ell.$
\end{proposition}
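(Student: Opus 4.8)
The plan is to upgrade the small-time estimates of Proposition \ref{Mixed-decay} to a uniform-in-time bound by running the energy functionals of Lemma \ref{energy-evolution} directly on the full equation (\ref{Landau-f}), rather than on the auxiliary semigroup $e^{t\mathcal{L}_{AB}}$. First I would record the base cases. For $\ell=0$ and any $k$, the operator $\partial_x^\alpha$ commutes with (\ref{Landau-f}), so Lemma \ref{energy-Guo} together with $\|\nabla_x^k f\|_{L^2}\le\|\nabla_x^k f_{in}\|_{L^2}$ gives $\frac{d}{dt}\|\nabla_x^k f\|_{L^2(m_\theta)}^2\le -c_0\|\nabla_x^k f\|_{L^2_\sigma(m_\theta)}^2+c\|\nabla_x^k f_{in}\|_{L^2}^2$, hence $\|\nabla_x^k f(t)\|_{L^2(m_\theta)}\lesssim\|f_{in}\|_{L^2(m_k)}$ for all $t$; this is just a restatement of (\ref{fx-weighted-energy}) with the weight tracked via $w_k$. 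The content is in the $p$-derivatives.

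For the inductive step on $\ell$, I would prove the statement by strong induction on $\ell$, and for fixed $\ell$ by downward considerations on $k$ exactly as in the small-time argument. Assume the bound holds for all pairs $(\ell',k')$ with $\ell'<\ell$ (all $k'$). Consider the functional $\mathcal{F}_\ell(t,f)$ from Step 3 of subsection \ref{sub-p-regularity}, but \emph{without} the time weight on the top-order term, i.e. replace $\varepsilon^{\ell+1}t\int|\nabla_p^\ell u|^2 m_\theta$ by $\varepsilon^{\ell+1}\int|\nabla_p^\ell u|^2 m_\theta$. Differentiating along the flow and using (\ref{x-evolu})--(\ref{mixed-evolu}), the dissipation from the lower-order $x$-derivative blocks (weighted in $m_1$) absorbs, after choosing $\varepsilon$ small, all the error terms generated by the commutators $[p\cdot\nabla_x,\nabla_p]$ that appear in (\ref{p-evolu}) and (\ref{mixed-evolu}); the remaining uncontrolled contributions are the forcing terms $\|\nabla_x^k f\|_{L^2}^2$ (and lower), which by the $\ell=0$ base case and the induction hypothesis are bounded by $C\|f_{in}\|_{L^2(m_{k+\ell})}^2$ for all $t$. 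This yields a differential inequality of the schematic form $\frac{d}{dt}\mathcal{F}_\ell(t,f)\le -c_0(\text{dissipation})+C\|f_{in}\|_{L^2(m_{k+\ell})}^2$, and since $\mathcal{F}_\ell$ is comparable to the sum of squared norms it bounds, Gronwall (in fact just integrating, because the dissipation controls $\mathcal{F}_\ell$ itself up to the forcing) gives $\mathcal{F}_\ell(t,f)\lesssim \mathcal{F}_\ell(1,f)+\|f_{in}\|_{L^2(m_{k+\ell})}^2$ for $t\ge 1$. Finally, $\mathcal{F}_\ell(1,f)$ is finite and bounded by $C\|f_{in}\|_{L^2(m_{k+\ell})}^2$ thanks to Proposition \ref{Mixed-decay} evaluated at $t=1$, which supplies the needed control of every derivative appearing in $\mathcal{F}_\ell$. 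Reading off the top-order term of $\mathcal{F}_\ell$ then gives $\|\nabla_p^\ell\nabla_x^k f(t)\|_{L^2(m_\theta)}\le C\|f_{in}\|_{L^2(m_{k+\ell})}$, and taking $m_\theta=1$ converts the weight into $w_{k+\ell}$.

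The main obstacle is the bookkeeping of weights and the absorption structure: one must verify that the hierarchy of small parameters $\varepsilon^q$ can be chosen so that \emph{every} commutator error term at level $(\ell,k)$ is dominated by a strictly-lower-level dissipation term already present in $\mathcal{F}_\ell$ with a larger coefficient, uniformly in $t\ge 1$ — this is the same combinatorial mechanism used in the small-time proof, but now one has to be careful that the forcing terms $\|\nabla_x^{k}u\|_{L^2}$ and $\|\nabla_p^j\nabla_x^{k}u\|$ on the right of (\ref{p-evolu})--(\ref{mixed-evolu}) are handled by \emph{induction} rather than by the time-weight trick, so the induction must be organized so that every such term has strictly smaller $\ell$. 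A secondary point is that for very soft potentials the weight index in $m_n$ grows with $n$, so one must check the weight $m_{k+\ell}$ (equivalently $w_{k+\ell}$) is exactly what Lemma \ref{energy-Guo} and Lemma \ref{decom}(iii) permit; this is why the statement carries the weight $m_{k+\ell}$ on the initial data. Once these are in place, the argument is a routine energy estimate plus Gronwall.
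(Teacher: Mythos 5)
Your plan — running the energy ladder from Lemma \ref{energy-evolution} on the full equation rather than on $e^{t\mathcal{L}_{AB}}$, dropping the time weight, and closing via Gronwall together with Proposition \ref{Mixed-decay} to bound the functional at $t=1$ — is correct and captures the same mechanism the paper uses. The only structural difference is organizational: the paper splits the argument into a pure-$p$-derivative functional $H^n[f]=\sum_{j\le n}\|\nabla_p^j f\|^2_{L^2(m_\theta)}$ and a separate mixed-derivative functional
$\mathcal{A}_N(t,f)=\int\bigl(\sum_{j=1}^{N+1}|\nabla_x^j f|^2+\sum_{q,j}\varepsilon^{j+N-q}|\nabla_p^j\nabla_x^q f|^2\bigr)m_\theta\,dx\,dp$
with every term carrying at least one $\nabla_x$, proves a Gronwall inequality for each, and feeds the $\mathcal{A}_N$ bound into the $H^n$ forcing; you instead propose one combined functional (the small-time $\mathcal{F}_\ell$ with the $t$-factor removed). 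Both are the same absorption argument with a geometric hierarchy of $\varepsilon^q$'s, both need Lemma \ref{energy-Guo} on the full $L_{AB}$, both use (\ref{fx-weighted-energy}) / Lemma \ref{x-regularityST}(ii) to control the unweighted $\|\nabla_x^j f(t)\|_{L^2}$ forcing for $t\ge 1$, and both read off the data norm at $t=1$ from Proposition \ref{Mixed-decay}. One small imprecision to flag: your base-case remark $\|\nabla_x^k f\|_{L^2}\le\|\nabla_x^k f_{in}\|_{L^2}$ is vacuous under the standing hypotheses (no $x$-regularity is assumed on $f_{in}$); what is actually needed, and what you implicitly use, is the chain $\|\nabla_x^k f(t)\|_{L^2}\lesssim\|\nabla_x^k f(1)\|_{L^2}\lesssim\|f_{in}\|_{L^2(m_k)}$ via (\ref{fx-weighted-energy}) and Lemma \ref{x-regularityST}. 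With that correction, your argument is sound and essentially coincides with the paper's.
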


\begin{proof}
Let $n\in\mathbb{N}$. Define
\[
H^{n}[f]\left(  t\right)  =\sum_{j=0}^{n}\left\Vert \nabla_{p}^{j}%
f(t)\right\Vert _{L^{2}(m_{\theta})}^{2}\,,\quad H_{x}^{n}[f]\left(  t\right)
=\sum_{j=0}^{n}\left\Vert \nabla_{p}^{j}\nabla_{x}f(t)\right\Vert
_{L^{2}\left(  m_{\theta}\right)  }^{2}\,.
\]
We first claim that
\[
\frac{d}{dt}H^{n}[f]\left(  t\right)  \leq-cH^{n}[f]\left(  t\right)
+C\left(  H_{x}^{n-1}[f]\left(  t\right)  +\left\Vert f(t)\right\Vert _{L^{2}%
}^{2}\right)  ,
\]
for some constants $c>0$ small and $C>0$ large. To confirm this, in view of
Lemma \ref{energy-Guo}, we find
\[
\frac{d}{dt}\left\Vert \nabla_{p}^{\ell}f\right\Vert _{L^{2}(m_{\theta})}%
^{2}\lesssim-\left\Vert \nabla_{p}^{\ell}f\right\Vert _{L_{\sigma}%
^{2}(m_{\theta})}^{2}+C\left[  \sum_{j=0}^{\ell-1}\left\Vert \nabla_{p}%
^{j}f\right\Vert _{L_{\sigma}^{2}(m_{\theta})}^{2}+\sum_{j=0}^{\ell
-1}\left\Vert \nabla_{p}^{j}\nabla_{x}f\right\Vert _{L^{2}(m_{\theta})}%
^{2}+\left\Vert f\right\Vert _{L^{2}}^{2}\right]  ,
\]
so that
\begin{align*}
\frac{d}{dt}\sum_{j=0}^{n}\varepsilon^{j}\left\Vert \nabla_{p}^{j}f\right\Vert
_{L^{2}(m_{\theta})}^{2}\,  &  \lesssim-\varepsilon^{n}\left\Vert \nabla
_{p}^{n}f\right\Vert _{L_{\sigma}^{2}(m_{\theta})}^{2}-\sum_{j=0}^{n-1}\left[
\varepsilon^{j}-C\left(  \varepsilon^{j+1}+\cdots+\varepsilon^{n}\right)
\right]  \left\Vert \nabla_{p}^{j}f\right\Vert _{L_{\sigma}^{2}(m_{\theta}%
)}^{2}\\
&  +nC\left(  \sum_{j=0}^{n-1}\left\Vert \nabla_{p}^{j}\nabla_{x}f\right\Vert
_{L^{2}\left(  m_{\theta}\right)  }^{2}+\left\Vert f\right\Vert _{L^{2}}%
^{2}\right) \\
&  \lesssim-\varepsilon^{n}\left\Vert \nabla_{p}^{n}f\right\Vert
_{L^{2}(m_{\theta})}^{2}-\sum_{j=0}^{n-1}\varepsilon^{j}\left[
1-nC\varepsilon\right]  \left\Vert \nabla_{p}^{j}f\right\Vert _{L^{2}%
(m_{\theta})}^{2}\\
&  +nC\left(  H_{x}^{n-1}[f]\left(  t\right)  +\left\Vert f\left(  t\right)
\right\Vert _{L^{2}}^{2}\right)  ,
\end{align*}
whenever we choose $0<\varepsilon\ll1$ with $1-nC\varepsilon\geq1/2$. Hence,
\begin{align*}
\frac{d}{dt}H^{n}[f]\left(  t\right)   &  \lesssim-\left\Vert \nabla_{p}%
^{n}f\right\Vert _{L^{2}(m_{\theta})}^{2}-\sum_{j=0}^{n-1}\frac{\varepsilon
^{j-n}}{2}\left\Vert \nabla_{p}^{j}f\right\Vert _{L^{2}(m_{\theta})}^{2}%
+\frac{nC}{\varepsilon^{n}}\left(  H_{x}^{n-1}[f]\left(  t\right)  +\left\Vert
f\left(  t\right)  \right\Vert _{L^{2}}^{2}\right) \\
&  \lesssim-H^{n}[f]\left(  t\right)  +C^{\prime}\left(  H_{x}^{n-1}[f]\left(
t\right)  +\left\Vert f(t)\right\Vert _{L^{2}}^{2}\right)  ,
\end{align*}
as required. Now, for $t\geq1$, we have
\begin{align*}
H^{n}[f]\left(  t\right)   &  \leq e^{-c\left(  t-1\right)  }H^{n}[f]\left(
1\right)  +C\int_{1}^{t}e^{-c\left(  t-s\right)  }\left(  H_{x}^{n-1}%
[f]\left(  s\right)  +\left\Vert f(s)\right\Vert _{L^{2}}^{2}\right)  ds\\
&  \leq e^{-c\left(  t-1\right)  }\sum_{j=0}^{n}\left\Vert \nabla_{p}%
^{j}f(1)\right\Vert _{L^{2}\left(  m_{\theta}\right)  }^{2}+C^{\prime}\left(
1-e^{-c\left(  t-1\right)  }\right)  \left\Vert f(0)\right\Vert _{L^{2}}%
^{2}+C\int_{1}^{t}e^{-c\left(  t-s\right)  }H_{x}^{n-1}[f]\left(  s\right)
ds\\
&  \lesssim\left\Vert f_{in}\right\Vert _{L^{2}\left(  m_{n}\right)  \ }%
^{2}+\int_{1}^{t}e^{-c\left(  t-s\right)  }H_{x}^{n-1}[f]\left(  s\right)  ds,
\end{align*}
the last inequality being true due to Proposition \ref{Mixed-decay}.

Next, we shall claim that
\[
H_{x}^{n-1}[f]\left(  t\right)  \leq C\left\Vert f_{in}\right\Vert
_{L^{2}\left(  m_{n}\right)  \ }^{2},\ \ \ t\geq1,
\]
for some constant $C>0$ depending only upon $n.$ When $n=1,$ it is from Lemma
\ref{x-regularityST} $\left(  ii\right)  $. When $n\geq2,$ consider the
functional
\[
\mathcal{A}_{N}\left(  t,f\right)  \equiv%
{\displaystyle\int}
\left(  \sum_{j=1}^{N+1}\left\vert \nabla_{x}^{j}f(t)\right\vert ^{2}%
+\sum_{q=1}^{N}\sum_{j=1}^{N+1-q}\varepsilon^{j+N-q}\left\vert \nabla_{p}%
^{j}\nabla_{x}^{q}f(t)\right\vert ^{2}\right)  m_{\theta}dxdp,
\]
for $\varepsilon>0$ sufficiently small. By Lemma \ref{energy-Guo} again, we
have%
\[
\frac{d}{dt}\left\Vert \nabla_{x}^{k}f\right\Vert _{L^{2}\left(  m_{\theta
}\right)  }^{2}\lesssim-\left\Vert \nabla_{x}^{k}f\right\Vert _{L_{\sigma}%
^{2}\left(  m_{\theta}\right)  }^{2}+C\left\Vert \nabla_{x}^{k}f\right\Vert
_{L^{2}}^{2},
\]
and%
\[
\frac{d}{dt}\left\Vert \nabla_{p}^{\ell}\nabla_{x}^{k}f\right\Vert
_{L^{2}(m_{\theta})}^{2}\lesssim-\left\Vert \nabla_{p}^{\ell}\nabla_{x}%
^{k}f\right\Vert _{L_{\sigma}^{2}(m_{\theta})}^{2}+C\left[  \sum_{j=0}%
^{\ell-1}\left\Vert \nabla_{p}^{j}\nabla_{x}^{k}f\right\Vert _{L_{\sigma}%
^{2}(m_{\theta})}^{2}+\sum_{j=0}^{\ell-1}\left\Vert \nabla_{p}^{j}\nabla
_{x}^{k+1}f\right\Vert _{L^{2}(m_{\theta})}^{2}+\left\Vert \nabla_{x}%
^{k}f\right\Vert _{L^{2}}^{2}\right]  .
\]
Hence, following the same argument as $H^{n}[f],$ we deduce
\[
\frac{d}{dt}\mathcal{A}_{N}\left(  t,f\right)  \leq-c\mathcal{A}_{N}\left(
t,f\right)  +C\sum_{j=1}^{N+1}\left\Vert \nabla_{x}^{j}f(t)\right\Vert
_{L^{2}}^{2},
\]
for some constants $c>0$ small and $C>0$ large as well. Therefore, for
$t\geq1$
\begin{align*}
\mathcal{A}_{N}\left(  t,f\right)   &  \leq e^{-c\left(  t-1\right)
}\mathcal{A}_{N}\left(  1,f\right)  +C\int_{1}^{t}e^{-c\left(  t-s\right)
}\sum_{j=1}^{N+1}\left\Vert \nabla_{x}^{j}f(s)\right\Vert _{L^{2}}^{2}ds\\
&  \leq e^{-c\left(  t-1\right)  }\mathcal{A}_{N}\left(  1,f\right)
+C^{\prime}\int_{1}^{t}e^{-c\left(  t-s\right)  }\sum_{j=1}^{N+1}\left\Vert
\nabla_{x}^{j}f(1)\right\Vert _{L^{2}}^{2}ds\\
&  \lesssim\mathcal{A}_{N}\left(  1,f\right)  +\sum_{j=1}^{N+1}\left\Vert
\nabla_{x}^{j}f(1)\right\Vert _{L^{2}}^{2}\lesssim\left\Vert f_{in}\right\Vert
_{L^{2}\left(  m_{N+1}\right)  \ }^{2},
\end{align*}
due to Proposition \ref{Mixed-decay}. It implies that for $t\geq1,$
\[
H_{x}^{n-1}[f]\left(  t\right)  \lesssim\mathcal{A}_{n-1}\left(  t,f\right)
\lesssim\left\Vert f_{in}\right\Vert _{L^{2}\left(  m_{n}\right)  \ }^{2}.
\]
As a consequence, we have
\[
H^{n}[f]\left(  t\right)  \lesssim\left\Vert f_{in}\right\Vert _{L^{2}\left(
m_{n}\right)  \ }^{2},\ \ \ \ H_{x}^{n}[f]\left(  t\right)  \lesssim\left\Vert
f_{in}\right\Vert _{L^{2}\left(  m_{n+1}\right)  \ }^{2}.
\]
Therefore, for $\ell\in\mathbb{N},$%
\[
\left\Vert \nabla_{p}^{\ell}f(t)\right\Vert _{L^{2}(m_{\theta})}%
\lesssim\left\Vert f_{in}\right\Vert _{L^{2}\left(  m_{\ell}\right)
\ },\ \ \ \ t\geq1.
\]
In effect, it holds that for any $k,\ \ell\in\mathbb{N}$
\[
\left\Vert \nabla_{p}^{\ell}\nabla_{x}^{k}f(t)\right\Vert _{L^{2}(m_{\theta}%
)}^{2}\lesssim\mathcal{A}_{k+\ell-1}\left(  t,f\right)  \lesssim\left\Vert
f_{in}\right\Vert _{L^{2}\left(  m_{k+\ell}\right)  \ }^{2},\ \ t\geq1.
\]
Together with Lemma \ref{x-regularityST} $\left(  ii\right)  ,$ the proof is completed.
\end{proof}

\subsection{Proof of Theorem \ref{theorem-g}}

Regarding the regularity of $g_{L,0},$ it has been done in Lemma \ref{SpecAB}.
Therefore, it remains to demonstrate that for any $k,$ $\ell\in\mathbb{N\cup
\{}0\}$,
\[
\left\Vert \nabla_{p}^{\ell}\nabla_{x}^{k}g_{L,\bot}\right\Vert _{L_{x}%
^{\infty}L_{p}^{2}}\lesssim e^{-2^{-(\ell+1)}Ct}\left(  \left\Vert
g_{in}\right\Vert _{L^{2}\left(  w_{\ell+1}\right)  }+\left\Vert
g_{in}\right\Vert _{L_{x}^{1}L_{p}^{2}}\right)
\]
and%
\[
\left\Vert \nabla_{p}^{\ell}\nabla_{x}^{k}g_{S}\right\Vert _{L_{x}^{\infty
}L_{p}^{2}}\lesssim e^{-C_{k,\ell}t}\left\Vert g_{in}\right\Vert
_{L^{2}\left(  w_{k+\ell+2}\right)  },
\]
whenever $t\geq1.$ Here $C_{k,\ell}>0$ depends only upon $k$ and $\ell.$ Among
them, the case in which $k\in\mathbb{N\cup\{}0\}$ and $\ell=0$ is a
consequence of Proposition \ref{x-regularity} and thus we may assume that
$k\in\mathbb{N\cup\{}0\}$ and $\ell\in\mathbb{N}$ in the following discussion.

Firstly, we prove that for $t\geq1,$ the $L^{2}$-norms of any derivatives of
$g_{L}$ and $g_{S}$ are bounded above uniformly in $t$ by multiples of certain
weighted $L^{2}$-norms of the initial data $g_{in}$. To see this, note that
the long wave part $g_{L}$ satisfies%
\begin{equation}
\left\{
\begin{array}
[c]{l}%
\displaystyle\partial_{t}g_{L}+\frac{1}{m_{A}}p\cdot\nabla_{x}g_{L}%
=L_{AB}g_{L}\,,\quad(t,x,p)\in({\mathbb{R}}^{+},{\mathbb{R}^{3}},{\mathbb{R}%
}^{3})\,,\\[4mm]%
\displaystyle g_{L}(0,x,p)=(g_{L})_{in}(x,p)\,,
\end{array}
\right.  \label{g.1L}%
\end{equation}
and the short wave part $g_{S}$ satisfies
\begin{equation}
\left\{
\begin{array}
[c]{l}%
\displaystyle\partial_{t}g_{S}+\frac{1}{m_{A}}p\cdot\nabla_{x}g_{S}%
=L_{AB}g_{S}\,,\quad(t,x,p)\in({\mathbb{R}}^{+},{\mathbb{R}},{\mathbb{R}}%
^{3})\,,\\[4mm]%
\displaystyle g_{S}(0,x,p)=(g_{S})_{in}(x,p),
\end{array}
\right.  \label{g.1S}%
\end{equation}
respectively, where
\[
(g_{L})_{in}(x,p)=\int_{\left\vert \eta\right\vert <\delta}e^{i\eta\cdot
x}\hat{g_{in}}\left(  \eta,p\right)  d\eta,\quad\quad\text{ }(g_{S}%
)_{in}(x,p)=\int_{\left\vert \eta\right\vert \geq\delta}e^{i\eta\cdot x}%
\hat{g_{in}}\left(  \eta,p\right)  d\eta.
\]
Hence, from Proposition \ref{energy*} (with $m_{\theta}=1$), it readily
follows that%
\[
\left\Vert \nabla_{p}^{\ell}\nabla_{x}^{k}g_{L}(t)\right\Vert _{L^{2}},\text{
}\left\Vert \nabla_{p}^{\ell}\nabla_{x}^{k}g_{S}(t)\right\Vert _{L^{2}%
}\lesssim\left\Vert g_{in}\right\Vert _{L^{2}\left(  w_{k+\ell}\right)  \ }.
\]
However,
\[
\left\Vert \nabla_{x}^{k}g_{L}(t)\right\Vert _{L^{2}}\lesssim\left\Vert
g_{in}\right\Vert _{L^{2}}%
\]
for all $k\in\mathbb{N\cup\{}0\}$. During the process of the proof of
Propositions \ref{Mixed-decay} and \ref{energy*}, we can improve the
regularity of $g_{L}$ to obtain%
\[
\left\Vert \nabla_{p}^{\ell}\nabla_{x}^{k}g_{L}(t)\right\Vert _{L^{2}}%
\lesssim\left\Vert g_{in}\right\Vert _{L^{2}\left(  w_{\ell}\right)
\ },\text{ }t\geq1.
\]
In sum, we summarize the $L^{2}$-norm bounds for derivatives as the following:

\begin{proposition}
\label{aa.7} Let $k,\ell\in\mathbb{N}\cup\{0\}$. Then for $t\geq1$
\[
\left\Vert \nabla_{p}^{\ell}\nabla_{x}^{k}g_{L}(t)\right\Vert _{L^{2}}%
\lesssim\left\Vert g_{in}\right\Vert _{L^{2}\left(  w_{\ell}\right)  \ },
\]
and%
\[
\left\Vert \nabla_{p}^{\ell}\nabla_{x}^{k}g_{S}(t)\right\Vert _{L^{2}}%
\lesssim\left\Vert g_{in}\right\Vert _{L^{2}\left(  w_{k+\ell}\right)  \ }.
\]

\end{proposition}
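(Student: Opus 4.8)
The plan is to notice that both $g_L$ and $g_S$ are themselves solutions of the linearized Landau equation $(\ref{Landau-f})$, with initial data $(g_L)_{in}$ and $(g_S)_{in}$ respectively, so that the estimates of subsection \ref{improve-p} — in particular Proposition \ref{energy*} taken with the weight $m_\theta=1$ — apply verbatim to each of them. The only fact we need about the frequency cut-offs is that, for every fixed $p$, the maps $g_{in}\mapsto(g_L)_{in}$ and $g_{in}\mapsto(g_S)_{in}$ are orthogonal projections on $L^2_x$ and commute with multiplication by any function of $p$ alone; hence $\|(g_L)_{in}\|_{L^2(w_n)}\le\|g_{in}\|_{L^2(w_n)}$ and $\|(g_S)_{in}\|_{L^2(w_n)}\le\|g_{in}\|_{L^2(w_n)}$ for every $n$, and since $g_{in}\in L^2(w_{k+\ell+2})$ the truncated data lie in all the weighted spaces that Propositions \ref{Mixed-decay}--\ref{energy*} require.

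For $g_S$ this already finishes the job: applying Proposition \ref{energy*} to $f=g_S$, $f_{in}=(g_S)_{in}$, $m_\theta=1$ gives for $t\ge1$
\[
\|\nabla_p^\ell\nabla_x^k g_S(t)\|_{L^2}\lesssim\|(g_S)_{in}\|_{L^2(w_{k+\ell})}\le\|g_{in}\|_{L^2(w_{k+\ell})}\,,
\]
which is the second assertion. For $g_L$ the extra ingredient is that the $x$-Fourier support of $g_L(t)$ stays inside $\{|\eta|<\delta\}$ for all $t\ge0$; this is manifest from the defining formula $g_L=\int_{|\eta|<\delta}e^{i\eta\cdot x+(-ip\cdot\eta/m_A+L_{AB})t}\hat{g_{in}}(\eta,p)\,d\eta$ (equivalently, the Fourier-transformed equation $\partial_t\widehat{g_L}=(-i\eta\cdot p/m_A+L_{AB})\widehat{g_L}$ does not couple different frequencies). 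By Plancherel in $x$ the cut-off then lets us discard every $x$-derivative: $\|\nabla_p^\ell\nabla_x^k g_L(t)\|_{L^2}\lesssim\|\nabla_p^\ell g_L(t)\|_{L^2}$ for all $k$ (the implicit constant depending on the fixed numbers $k$ and $\delta$). It remains only to bound $\|\nabla_p^\ell g_L(t)\|_{L^2}$, which is Proposition \ref{energy*} for $f=g_L$ with $k=0$, $m_\theta=1$: for $t\ge1$, $\|\nabla_p^\ell g_L(t)\|_{L^2}\lesssim\|(g_L)_{in}\|_{L^2(w_\ell)}\le\|g_{in}\|_{L^2(w_\ell)}$. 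Chaining the last two bounds yields the first assertion. An equivalent route, closer to what is suggested in the text, is to re-run the Gronwall scheme of Proposition \ref{energy*} directly for $f=g_L$: the sole place where $x$-derivatives of the datum enter is the forcing term $\sum_j\|\nabla_x^j g_L(t)\|_{L^2}^2$, which by the same compact-support remark is $\lesssim\|g_{in}\|_{L^2}^2$, so the weight is never upgraded beyond $w_\ell$.

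I do not expect a genuine analytic obstacle here: the statement is a bookkeeping corollary of Proposition \ref{energy*} together with the elementary fact that a frequency-localized solution spends no regularity budget on $x$-derivatives. The two points that deserve a line of care are (i) checking that $(g_L)_{in},(g_S)_{in}$ belong to the weighted $L^2$ spaces demanded by Propositions \ref{Mixed-decay}--\ref{energy*}, which is exactly the contraction property of the cut-offs together with $g_{in}\in L^2(w_{k+\ell+2})$, and (ii) keeping the $p$-weight $w_\ell$ — a function of $p$ alone — clear of the $x$-variable manipulations, so that it commutes with $\nabla_x^k$ and with the Fourier projection onto $\{|\eta|<\delta\}$.
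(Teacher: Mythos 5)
Your proposal is correct and takes essentially the same approach as the paper: apply Proposition \ref{energy*} with $m_\theta=1$ to $g_L$ and $g_S$, each a solution of $(\ref{Landau-f})$ with frequency-truncated data, and exploit the fact that the $x$-Fourier support of $g_L$ lies in $\{|\eta|<\delta\}$ to keep the weight from growing in the $x$-direction. Your Plancherel shortcut, which discards every $\nabla_x^k$ from $\nabla_p^\ell\nabla_x^k g_L$ outright before invoking Proposition \ref{energy*} at $k=0$, is a slightly cleaner execution of what the paper expresses as re-running the Gronwall scheme for $g_L$, and you correctly flag the two as equivalent.
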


Thereupon, write $g_{L,\bot}=g_{L}-g_{L,0}$ and then we find
\begin{align}
\left\Vert \nabla_{p}^{\ell}\nabla_{x}^{k}g_{L,\bot}\right\Vert _{L^{2}}  &
\leq\left\Vert \nabla_{p}^{\ell}\nabla_{x}^{k}g_{L}\right\Vert _{L^{2}%
}+\left\Vert \nabla_{p}^{\ell}\nabla_{x}^{k}g_{L,0}\right\Vert _{L^{2}%
}\nonumber\\
&  \lesssim\left\Vert g_{in}\right\Vert _{L^{2}\left(  w_{\ell}\right)
\ }+(1+t)^{-(3+2k)/4}\left\Vert g_{in}\right\Vert _{L_{x}^{1}L_{p}^{2}%
}\nonumber\\
&  \lesssim\left\Vert g_{in}\right\Vert _{L^{2}\left(  w_{\ell}\right)
\ }+\left\Vert g_{in}\right\Vert _{L_{x}^{1}L_{p}^{2}} \label{g_kl-norm}%
\end{align}
for $t\geq1,$ $k,$ $\ell\in\mathbb{N\cup\{}0\}.$

Further, to obtain the decay rate, we employ the Sobolev inequality and the
interpolation inequality to attain this end. Specifically, by the Sobolev
inequality \cite[Proposition 3.8]{[Taylor]}, we find
\[
\left\Vert \nabla_{p}^{\ell}\nabla_{x}^{k}g_{L,\bot}\right\Vert _{L_{x}%
^{\infty}L_{p}^{2}}\lesssim\sqrt{\left\Vert \nabla_{p}^{\ell}\nabla_{x}%
^{k+2}g_{L,\bot}\right\Vert _{L^{2}}\left\Vert \nabla_{p}^{\ell}\nabla
_{x}^{k+1}g_{L,\bot}\right\Vert _{L^{2}}}.
\]
On the other hand, the interpolation inequality says that
\[
\left\Vert \nabla_{p}^{\ell}\nabla_{x}^{k}g_{L,\bot}\right\Vert _{L^{2}%
}\lesssim\left\{
\begin{array}
[c]{ll}%
\left\Vert \nabla_{x}^{k}g_{L,\bot}\right\Vert _{_{L^{2}}}^{1/2}\left\Vert
\nabla_{p}^{2}\nabla_{x}^{k}g_{L,\bot}\right\Vert _{_{L^{2}}}^{1/2},%
\vspace{3mm}%
& \ell=1\\
\left\Vert \nabla_{x}^{k}g_{L,\bot}\right\Vert _{L^{2}}^{2^{-\ell}}\left(
{\displaystyle\prod\limits_{j=2}^{\ell}}
\left\Vert \nabla_{p}^{j}\nabla_{x}^{k}g_{L,\bot}\right\Vert _{L^{2}%
}^{2^{-\left(  \ell-j+2\right)  }}\right)  \left\Vert \nabla_{p}^{\ell
+1}\nabla_{x}^{k}g_{L,\bot}\right\Vert _{L^{2}}^{1/2}, & \ell\geq2
\end{array}
\right.
\]
and
\begin{align*}
\left\Vert \nabla_{x}^{k}g_{L,\bot}\right\Vert _{L^{2}}  &  \lesssim\left\{
\begin{array}
[c]{ll}%
\left\Vert g_{L,\bot}\right\Vert _{_{L^{2}}}^{1/2}\left\Vert \nabla_{x}%
^{2}g_{L,\bot}\right\Vert _{_{L^{2}}}^{1/2},%
\vspace{3mm}%
& k=1\\
\left\Vert g_{L,\bot}\right\Vert _{L^{2}}^{2^{-k}}\left(
{\displaystyle\prod\limits_{j=2}^{k}}
\left\Vert \nabla_{x}^{j}g_{L,\bot}\right\Vert _{L^{2}}^{2^{-\left(
k-j+2\right)  }}\right)  \left\Vert \nabla_{x}^{k+1}g_{L,\bot}\right\Vert
_{L^{2}}^{1/2}, & k\geq2
\end{array}
\right. \\
&  \lesssim e^{-Ct}\left\Vert g_{in}\right\Vert _{L^{2}}%
\end{align*}
for all $t\geq1.$ Accordingly, combining these with $\left(  \ref{g_kl-norm}%
\right)  $ yields
\[
\left\Vert \nabla_{p}^{\ell}\nabla_{x}^{k}g_{L,\bot}\right\Vert _{L_{x}%
^{\infty}L_{p}^{2}}\lesssim e^{-2^{-(\ell+1)}Ct}\left(  \left\Vert
g_{in}\right\Vert _{L^{2}\left(  w_{\ell+1}\right)  }+\left\Vert
g_{in}\right\Vert _{L_{x}^{1}L_{p}^{2}}\right)  ,
\]
for all $t\geq1.$ Meanwhile, with the same argument, we also obtain
\begin{align*}
\left\Vert \nabla_{p}^{\ell}\nabla_{x}^{k}g_{S}\right\Vert _{L_{x}^{\infty
}L_{p}^{2}}  &  \lesssim\sqrt{\left\Vert \nabla_{p}^{\ell}\nabla_{x}%
^{k+1}g_{S}\right\Vert _{L^{2}}\left\Vert \nabla_{p}^{\ell}\nabla_{x}%
^{k+2}g_{S}\right\Vert _{L^{2}}}\\
&  \lesssim e^{-2^{-\left(  k+\ell+2\right)  }Ct}\left\Vert g_{in}\right\Vert
_{L^{2}\left(  w_{k+\ell+2}\right)  \ },
\end{align*}
by $\left(  \ref{aa.8}\right)  $ and Proposition \ref{aa.7}. This completes
the proof of Theorem \ref{theorem-g}.

\section{The result for $h$}

Recall the equation
\begin{equation}
\left\{
\begin{array}
[c]{l}%
\displaystyle\pa_{t}h+\frac{1}{m_{B}}p\cdot\nabla_{x}h=L_{BB}h+L_{BA}%
g\,,\quad(t,x,p)\in({\mathbb{R}}^{+},{\mathbb{R}}^{3},{\mathbb{R}}^{3})\,,\\[4mm]%
\displaystyle h(0,x,p)=h_{in}(x,p)\,.
\end{array}
\right.  \label{h.1}%
\end{equation}
In previous sections, the regularity of the solution $g$ has been investigated
well. Now, regarding $g$ as a source term, we shall devote to establish the
regularization estimate for the solution $h$ in the rest of the article.
Especially, we demonstrate that $h$ has the same decay rate as $g$ (see
Theorem \ref{MAIN}).

Let $\mathbb{G_{BB}}^{t}$ be the solution operator of
\[
\displaystyle\pa_{t}f+\frac{1}{m_{B}}p\cdot\nabla_{x}f=L_{BB}f\,,
\]
and then the homogeneous part of $h$ is $\widetilde{h}=\mathbb{G_{BB}}%
^{t}h_{in}$. Concerning estimates on the homogeneous part, it is the same as
those on $g_{L}$ and $g_{S}$ in previous sections and hence we only state the
results without proofs. As before, we decompose $\widetilde{h}$ into the long
wave-fluid part $\widetilde{h}_{L,0}$, long wave-nonfluid part $\widetilde
{h}_{L,\bot}$ and short wave part $\widetilde{h}_{S}$. Then they respectively
satisfy
\[
\left\Vert \nabla_{p}^{\ell}\nabla_{x}^{k}\widetilde{h}_{L,0}\right\Vert
_{L_{x}^{\infty}L_{p}^{2}}\lesssim(1+t)^{-(3+k)/2}\left\Vert h_{in}\right\Vert
_{L_{x}^{1}L_{p}^{2}}\,,
\]

\[
\left\Vert \nabla_{p}^{\ell}\nabla_{x}^{k}\widetilde{h}_{L,\bot}\right\Vert
_{L_{x}^{\infty}L_{p}^{2}}\lesssim e^{-Ct}\left(  \left\Vert h_{in}\right\Vert
_{L^{2}\left(  w_{\ell+1}\right)  }+\left\Vert h_{in}\right\Vert _{L_{x}%
^{1}L_{p}^{2}}\right)  ,
\]
and
\[
\left\Vert \nabla_{p}^{\ell}\nabla_{x}^{k}\widetilde{h}_{S}\right\Vert
_{L_{x}^{\infty}L_{p}^{2}}\lesssim e^{-Ct}\left\Vert h_{in}\right\Vert
_{L^{2}\left(  w_{k+\ell+2}\right)  },
\]
with the constant $C>0$ depending on $k$ and $\ell$, and
\[
w_{n}\equiv%
\begin{cases}
1, & \ga\in\left[  0,1\right]  ,\\
\left\langle p\right\rangle ^{|\ga|n}, & \ga\in\left[  -2,0\right)  .
\end{cases}
\]
Thereupon, it remains to control the inhomogeneous part. Hereafter, we may
assume $h_{in}=0$.

We now decompose the solution $h$ into the long wave part $h_{L}$ and the
short wave part $h_{S}$, which respectively satisfy
\begin{equation}
\left\{
\begin{array}
[c]{l}%
\displaystyle\pa_{t}h_{L}+\frac{1}{m_{B}}p\cdot\nabla_{x}h_{L}=L_{BB}%
h_{L}+L_{BA}g_{L}\,,\quad(t,x,p)\in({\mathbb{R}}^{+},{\mathbb{R}}^{3},{\mathbb{R}%
}^{3})\,,\\[4mm]%
\displaystyle h_{L}(0,x,p)=0\,,
\end{array}
\right.  \label{h.1L}%
\end{equation}
and
\begin{equation}
\left\{
\begin{array}
[c]{l}%
\displaystyle\pa_{t}h_{S}+\frac{1}{m_{B}}p\cdot\nabla_{x}h_{S}=L_{BB}%
h_{S}+L_{BA}g_{S}\,,\quad(t,x,p)\in({\mathbb{R}}^{+},{\mathbb{R}}^{3},{\mathbb{R}%
}^{3})\,,\\[4mm]%
\displaystyle h_{S}(0,x,p)=0\,.
\end{array}
\right.  \label{h.1S}%
\end{equation}
It is easy to see that
\begin{align*}
h_{L}(t,x,p)  &  =\int_{0}^{t}\int_{|\eta|<\de}e^{i\eta\cdot x}e^{(-ip\cdot
\eta/m_{B}+L_{BB})(t-s)}L_{BA}e^{(-ip\cdot\eta/m_{A}+L_{AB})s}\hat{g}%
_{in}d\eta ds\\
&  =h_{00}^{L}+h_{0\perp}^{L}+h_{\perp0}^{L}+h_{\perp\perp}^{L}\,,
\end{align*}
where
\begin{align*}
h_{00}^{L}  &  =\sum_{j=0}^{4}\int_{0}^{t}\int_{|\eta|<\de}e^{i\eta\cdot
x}e^{\sigma_{j}(\eta)(t-s)+\la(\eta)s}\big<e_{j}(-\eta),L_{BA}e_{D}%
(\eta)\big>_{p}\big<\hat{g}_{in},e_{D}(-\eta)\big>_{p}e_{j}(\eta)d\eta ds\\
h_{0\perp}^{L}  &  =\sum_{j=0}^{4}\int_{0}^{t}\int_{|\eta|<\de}e^{i\eta\cdot
x}e^{\sigma_{j}(\eta)(t-s)}\big<e_{j}(-\eta),L_{BA}e^{(-ip\cdot\eta
/m_{A}+L_{AB})s}\Pi_{\eta}^{D\perp}\hat{g}_{in}\big>_{p}e_{j}(\eta)d\eta ds\\
h_{\perp0}^{L}  &  =\int_{0}^{t}\int_{|\eta|<\de}e^{i\eta\cdot x}%
e^{\la(\eta)s}e^{(-ip\cdot\eta/m_{B}+L_{BB})(t-s)}\Pi_{\eta}^{\perp}%
L_{BA}e_{D}(\eta)\big<e_{D}(-\eta),\hat{g}_{in}\big>_{p}d\eta ds\\
h_{\perp\perp}^{L}  &  =\int_{0}^{t}\int_{|\eta|<\de}e^{i\eta\cdot
x}e^{(-ip\cdot\eta/m_{B}+L_{BB})(t-s)}\Pi_{\eta}^{\perp}L_{BA}e^{(-ip\cdot
\eta/m_{A}+L_{AB})s}\Pi_{\eta}^{D\perp}\hat{g}_{in}d\eta ds\,.
\end{align*}
Here the eigenfunctions $\{e_{j}\left(  \eta\right)  \}_{j=0}^{4}\ $ and
$e_{D}\left(  \eta\right)  $ are defined as in $\left(  \ref{pre.bb.h}%
\ \right)  $ and $\left(  \ref{pre.ab.h}\right)  $ with $\left\langle
e_{j}(-\eta),e_{l}(\eta)\right\rangle _{p}=$ $\delta_{jl}$ and $\left\langle
e_{D}(-\eta),e_{D}(\eta)\right\rangle _{p}=1.$ The main result for the
solution $h$ is stated as below.

\begin{theorem}
\label{theorem-h} \textrm{(Main result of $h$)} Let $k\ $and$\ \ell\ $be
nonnegative integers. Then for $t\geq1$
\[
\Vert\nabla_{p}^{\ell}\nabla_{x}^{k}(h_{00}^{L}+h_{0\perp}^{L})\Vert
_{L_{x}^{\infty}L_{p}^{2}}\leq(1+t)^{-(3+k)/2}\Vert g_{in}\Vert_{L_{x}%
^{1}L_{p}^{2}}\,,
\]

\[
\left\Vert \nabla_{p}^{\ell}\nabla_{x}^{k}h_{\perp0}^{L}\right\Vert
_{L_{x}^{\infty}L_{p}^{2}}\lesssim(1+t)^{-(3+k)/2}\left(  \left\Vert
g_{in}\right\Vert _{L^{2}\ }+\Vert g_{in}\Vert_{L_{x}^{1}L_{p}^{2}}\right)  ,
\]
and
\[
\left\Vert \nabla_{p}^{\ell}\nabla_{x}^{k}h_{\perp\perp}^{L}\right\Vert
_{L_{x}^{\infty}L_{p}^{2}}\lesssim e^{-Ct}\Vert g_{in}\Vert_{L^{2}\left(
w_{\ell+1}\right)  }.
\]
On the other hand,
\[
\left\Vert \nabla_{p}^{\ell}\nabla_{x}^{k}h_{S}\right\Vert _{L_{x}^{\infty
}L_{p}^{2}}\lesssim e^{-Ct}\left\Vert g_{in}\right\Vert _{L^{2}\left(
w_{k+\ell+2}\right)  },
\]
the constant $C>0$ depending on $k$ and $\ell$. Here
\[
w_{n}\equiv%
\begin{cases}
1, & \ga\in\left[  0,1\right]  ,\\
\left\langle p\right\rangle ^{|\ga|n}, & \ga\in\left[  -2,0\right)  .
\end{cases}
\]

\end{theorem}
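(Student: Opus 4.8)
The plan is to bound the five constituents $h_{\perp\perp}^{L}$, $h_{S}$, $h_{0\perp}^{L}$, $h_{\perp0}^{L}$, $h_{00}^{L}$ one at a time, starting from the explicit oscillatory representations above. Each piece is a convolution in the time variable $s$ of two semigroups, and the key bookkeeping is how many of the two are \emph{slow} (long-wave fluid modes $e^{\lambda(\eta)s}$ or $e^{\sigma_{j}(\eta)(t-s)}$, of size $\lesssim e^{-c|\eta|^{2}\cdot(\mathrm{time})}$ by Lemmas \ref{SpecAB}--\ref{SpecBB}) versus \emph{fast} (the spectral-gap parts $\Pi_{\eta}^{D\perp}$, $\Pi_{\eta}^{\perp}$, of size $\lesssim e^{-c\cdot(\mathrm{time})}$). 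Whenever at most one factor is slow, the $\eta$-integral $\int_{|\eta|<\delta}|\eta|^{k}e^{-c|\eta|^{2}t}\,d\eta\lesssim(1+t)^{-(3+k)/2}$ produces the claimed rate, the $L_{x}^{\infty}L_{p}^{2}$ bound is free by Minkowski in $\eta$ (two extra $x$-derivatives only insert $|\eta|^{2}$ with $|\eta|<\delta$), and the $\nabla_{p}^{\ell}$ derivatives fall on the Schwartz eigenfunctions $e_{j}(\eta,p)$, $e_{D}(\eta,p)$ (controlled by the last assertions of Lemmas \ref{SpecAB}--\ref{SpecBB}) or are absorbed by energy estimates of Proposition \ref{energy*} type; the only term where a genuine resonance must be resolved is $h_{00}^{L}$, the one with two slow factors.

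\textbf{The fast and mixed pieces.} For $h_{\perp\perp}^{L}$ both factors are fast, so the $s$-convolution and the bounded $\eta$-integral give $\lesssim e^{-Ct}$; the $p$-regularization of the $\mathbb{G}_{AB}$- and $\mathbb{G}_{BB}$-flows for small times (analogues of Lemma \ref{regularization}) costs weighted $L^{2}_{p}$ norms of $\hat g_{in}$, which by Plancherel become $\|g_{in}\|_{L^{2}(w_{\ell+1})}$. For $h_{S}$, note that $L_{BA}g_{S}$, hence $h_{S}$, has Fourier support in $|\eta|\ge\delta$, where $L_{BB}^{\eta}$ has a spectral gap (Lemma \ref{SpecBB}(i)); Duhamel's principle applied to $(\ref{h.1S})$ with source $L_{BA}g_{S}$, together with the exponential decay and the space- and momentum-smoothing estimates for $g_{S}$ from Theorem \ref{theorem-g} and the $\mathbb{G}_{BB}$-regularization for small $t-s$, yields $\lesssim e^{-Ct}\|g_{in}\|_{L^{2}(w_{k+\ell+2})}$, the last two weights coming from the Sobolev embedding $L^{2}\hookrightarrow L_{x}^{\infty}L_{p}^{2}$ used as in the proof of Theorem \ref{theorem-g}. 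For $h_{0\perp}^{L}$ one has $|\langle e_{j}(-\eta),L_{BA}e^{(-ip\cdot\eta/m_{A}+L_{AB})s}\Pi_{\eta}^{D\perp}\hat g_{in}\rangle_{p}|\lesssim e^{-a(\tau_{2})s}\|g_{in}\|_{L_{x}^{1}L_{p}^{2}}$ since $L_{BA}$ is bounded (Lemma \ref{LBA}) and $e_{j}$ is Schwartz; then $\int_{0}^{t}e^{-\bar a_{1}|\eta|^{2}(t-s)}e^{-a(\tau_{2})s}\,ds\lesssim e^{-c|\eta|^{2}t}$ for $|\eta|<\delta$, giving $(1+t)^{-(3+k)/2}\|g_{in}\|_{L_{x}^{1}L_{p}^{2}}$ with $\nabla_{p}^{\ell}$ acting only on $e_{j}$. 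For $h_{\perp0}^{L}$, observe that $\Pi_{\eta}^{\perp}L_{BA}e_{D}(\eta)=O(|\eta|)$ because $L_{BA}E_{D}=M_{B}^{-1/2}Q^{BA}(M_{B},M_{A})=0$ by $(\ref{mix.1.b})$; a Duhamel estimate for the $\mathbb{G}_{BB}$-flow (whose $\Pi_{\eta}^{\perp}$-part is a contraction with exponential decay) then gives $(1+t)^{-(3+k)/2}$ in the $x$-derivatives, while the $\nabla_{p}^{\ell}$ regularity follows from Proposition \ref{energy*}-type energy estimates for the $\mathbb{G}_{BB}$-equation with the smooth source $e^{\lambda(\eta)t}\Pi_{\eta}^{\perp}L_{BA}e_{D}(\eta)\langle e_{D}(-\eta),\hat g_{in}\rangle_{p}$, whose weighted norms are controlled by $\|g_{in}\|_{L^{2}}+\|g_{in}\|_{L_{x}^{1}L_{p}^{2}}$.

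\textbf{The resonant piece $h_{00}^{L}$.} Here both time-factors are slow; I would first carry out the $s$-integral,
\[
\int_{0}^{t}e^{\sigma_{j}(\eta)(t-s)+\lambda(\eta)s}\,ds=\frac{e^{\lambda(\eta)t}-e^{\sigma_{j}(\eta)t}}{\lambda(\eta)-\sigma_{j}(\eta)},
\]
and exploit the small-$\eta$ vanishing of the coefficient $\langle e_{j}(-\eta),L_{BA}e_{D}(\eta)\rangle_{p}$. Since $L_{BA}E_{D}=0$, this coefficient is automatically $O(|\eta|)$. For the non-resonant sound modes $j=0,1$ one has $a_{j,1}\ne0$, hence $|\lambda(\eta)-\sigma_{j}(\eta)|\gtrsim|\eta|$, so the resolvent factor times the coefficient is $\lesssim e^{-c|\eta|^{2}t}$ and $\int_{|\eta|<\delta}|\eta|^{k}e^{-c|\eta|^{2}t}\,d\eta\lesssim(1+t)^{-(3+k)/2}$. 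For the resonant modes $j=2,3,4$ (which propagate in the mass-diffusion direction, $a_{j,1}=0$, so $\lambda-\sigma_{j}$ may be as small as $O(|\eta|^{2})$ or even vanish), the crucial point is the sharper cancellation $\langle e_{j}(-\eta),L_{BA}e_{D}(\eta)\rangle_{p}=O(|\eta|^{2})$. Expanding with $e_{D}(\eta)=E_{D}+iE_{D,1}|\eta|+O(|\eta|^{2})$ and using $L_{BA}E_{D}=0$, this reduces to $\langle E_{j},L_{BA}E_{D,1}\rangle_{p}=0$ for $j=2,3,4$, where $E_{D,1}=L_{AB}^{-1}(p\cdot\omega/m_{A})E_{D}$, hence $L_{AB}E_{D,1}=(p\cdot\omega/m_{A})E_{D}$. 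I would verify this from the collision invariants $(\ref{invQ})$: applied to $F_{A}=E_{D,1}\sqrt{M_{A}}$, $F_{B}=M_{B}$, conservation of mass of species $B$ and of total energy (the $L_{AB}$-contribution being odd against $\sqrt{M_{A}}$ and $|p|^{2}\sqrt{M_{A}}$) force $\langle\chi_{0},L_{BA}E_{D,1}\rangle_{p}=\langle\chi_{4},L_{BA}E_{D,1}\rangle_{p}=0$, which kills the pairing with $E_{2}=-\sqrt{2/5}\,\chi_{0}+\sqrt{3/5}\,\chi_{4}$; and conservation of total momentum forces the $\Psi$-component of $L_{BA}E_{D,1}$ to point in the direction $\omega$, so its pairing with $E_{3}=\omega_{1}^{\perp}\!\cdot\Psi$ and $E_{4}=\omega_{2}^{\perp}\!\cdot\Psi$ vanishes by $\omega\perp\omega_{i}^{\perp}$. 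Granting this, $\bigl|\tfrac{e^{\lambda t}-e^{\sigma_{j}t}}{\lambda-\sigma_{j}}\bigr|\le t\max(|e^{\lambda t}|,|e^{\sigma_{j}t}|)\lesssim t\,e^{-c|\eta|^{2}t}$ combined with the $O(|\eta|^{2})$ coefficient gives, via $\int_{|\eta|<\delta}|\eta|^{k+2}\,t\,e^{-c|\eta|^{2}t}\,d\eta\lesssim(1+t)^{-(3+k)/2}$, exactly the claimed rate; $|\langle\hat g_{in},e_{D}(-\eta)\rangle_{p}|\le\|g_{in}\|_{L_{x}^{1}L_{p}^{2}}$ supplies the norm, and $\nabla_{p}^{\ell}$ again only hits the Schwartz eigenfunctions.

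\textbf{Finishing and the main obstacle.} The higher $p$-regularity that is not already visible in the eigenfunctions (i.e., for $h_{S}$, $h_{\perp0}^{L}$ and $h_{\perp\perp}^{L}$) is obtained, exactly as for $g$, by combining Proposition \ref{energy*}-type energy functionals carrying a source term with the interpolation argument, and the weights $w_{n}$ in the statement are those dictated by the Guo-type estimates (Lemma \ref{energy-Guo}) and the small-time regularization (Lemma \ref{regularization}), inherited through Theorem \ref{theorem-g}; the passage to $L_{x}^{\infty}L_{p}^{2}$ is by Minkowski in $\eta$ on the long-wave parts and by \cite[Proposition 3.8]{[Taylor]} applied twice on $h_{S}$. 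I expect the only genuinely new difficulty to be the second-order cancellation $\langle e_{j}(-\eta),L_{BA}e_{D}(\eta)\rangle_{p}=O(|\eta|^{2})$ for $j=2,3,4$ — that is, checking that conservation of total momentum together with the orthogonality of the propagation directions $\{\omega,\omega_{1}^{\perp},\omega_{2}^{\perp}\}$, and the mass/energy invariants, kill the relevant first-order coefficients; once that is in place, everything else is semigroup bookkeeping on top of Lemmas \ref{SpecAB}--\ref{SpecBB}, Lemma \ref{regularization}, and the machinery already developed for $g$.
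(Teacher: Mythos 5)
Your sketch correctly identifies and verifies the decisive cancellations $\langle e_{j}(-\eta),L_{BA}e_{D}(\eta)\rangle_{p}=O(|\eta|^{2})$ for $j=2,3,4$ (conservation of mass/energy kills the $E_{2}$ pairing, conservation of momentum together with $\omega\perp\omega_{i}^{\perp}$ kills $E_{3},E_{4}$), and your explicit $s$-integration with the bound $|(e^{\lambda t}-e^{\sigma_{j}t})/(\lambda-\sigma_{j})|\le t\,e^{-c|\eta|^{2}t}$ is an equally valid alternative to the paper's $\int_{0}^{t}e^{-\bar a|\eta|^{2}t}ds$ estimate for $h_{00}^{L}$. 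Your treatment of $h_{S}$, $h_{0\perp}^{L}$, and the first-order cancellation for $h_{\perp0}^{L}$ is also in the paper's spirit.

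However, there is a genuine gap in how you pass from $x$-only derivatives of $h_{\perp0}^{L}$ to the mixed derivatives $\nabla_{p}^{\ell}\nabla_{x}^{k}h_{\perp0}^{L}$. You claim the $x$-derivatives decay like $(1+t)^{-(3+k)/2}$ and that the $\nabla_{p}^{\ell}$ regularity then ``follows'' from energy estimates for the $\mathbb{G}_{BB}$-flow. But Proposition \ref{energy*}-type estimates for the inhomogeneous equation only yield $\Vert\nabla_{p}^{\ell}\nabla_{x}^{k}h_{\perp0}^{L}\Vert_{L^{2}}\lesssim t^{1/2}\Vert g_{in}\Vert_{L^{2}}$ (the source is integrated over $[0,t]$), and interpolating this $t^{1/2}$-growing bound against a $(1+t)^{-(3+k)/2}$ pure-$x$ decay strictly \emph{loses} the target rate: writing $\Vert\nabla_{p}^{\ell}\nabla_{x}^{k}h\Vert\lesssim\Vert\nabla_{x}^{k}h\Vert^{1-\theta}\Vert\nabla_{p}^{L}\nabla_{x}^{k}h\Vert^{\theta}$ gives $(1+t)^{-(3+k)(1-\theta)/2}t^{\theta/2}$, which is never $\lesssim(1+t)^{-(3+k)/2}$ for $\theta>0$. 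The paper avoids this by exploiting the \emph{extra} factor of $|\eta|$ arising from $\Pi_{\eta}^{\perp}L_{BA}e_{D}(\eta)=O(|\eta|)$ to obtain the \emph{improved} rate $\Vert\nabla_{x}^{k}h_{\perp0}^{L}\Vert_{L_{x}^{\infty}L_{p}^{2}}\lesssim(1+t)^{-(4+k)/2}\Vert g_{in}\Vert_{L_{x}^{1}L_{p}^{2}}$ (Proposition \ref{long-wave-h}); this gain of half a power is precisely what makes the dyadic $p$-interpolation (\ref{I3-regularity}) close, after choosing $q\geq\lceil\log_{2}(5+k)\rceil$. You have the extra $|\eta|$ in hand but don't convert it into the $(4+k)/2$ rate, nor do you carry out the interpolation that uses it.

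A second, smaller omission is the weight bookkeeping: the paper isolates $h_{\ast,0}^{L}=h_{00}^{L}+h_{\perp0}^{L}$, which solves (\ref{eq:improve**}) with the $p$-Schwartz source $L_{BA}g_{L,0}$, precisely so that the $p$-regularization costs only $\Vert g_{in}\Vert_{L^{2}}$ and not $\Vert g_{in}\Vert_{L^{2}(w_{\ell})}$ --- this is needed to match the $\Vert g_{in}\Vert_{L^{2}}$ appearing in the theorem's bound for $h_{\perp0}^{L}$. You allude to the ``smooth source'' but do not spell out this decomposition, which the theorem's stated weights require. The analogous manoeuvre with $h_{\ast,\perp}^{L}=h_{0\perp}^{L}+h_{\perp\perp}^{L}$ and source $L_{BA}g_{L,\perp}$ is what produces the $w_{\ell+1}$ weight for $h_{\perp\perp}^{L}$; here the exponential $x$-decay is strong enough to absorb the $t^{1/2}$ growth without the $(4+k)/2$ trick, which is why your outline is closer to complete for that term.
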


\subsection{Improvement of the $x$-regularity}

In this subsection, we deal with the $x$-regularization estimate for the long
wave part $h_{L}$. Owing to the fact that the system $\left(  \ref{mix.2.a}%
\right)  $ is decoupled, we instinctively use the Duhamel principle to solve
$h$ through treating $g$ as a source term. However, there are some possible
wave resonances between $g$ and $h$ as mentioned in the Introduction, and thus
this method usually leads the solution $h$ to lower decay rate than $g.$ To
remedy this, further physical properties of the collision operators, namely
the microscopic cancellations representing the conservation of mass, total
momentum and total energy, are employed in the following discussion.

\begin{proposition}
\label{long-wave-h}\textrm{(Long wave $h_{L}$, improve of the $x$ regularity)
}For $k,\ell\in\mathbb{N\cup}\{0\}$
\[
\Vert\nabla_{p}^{\ell}\nabla_{x}^{k}(h_{00}^{L}+h_{0\perp}^{L})\Vert
_{L_{x}^{\infty}L_{p}^{2}}\lesssim(1+t)^{-(3+k)/2}\Vert g_{in}\Vert_{L_{x}%
^{1}L_{p}^{2}}\,,
\]

\begin{equation}
\Vert\nabla_{x}^{k}h^{L}_{\perp0}\Vert_{L_{x}^{\infty}L_{p}^{2}}%
\lesssim(1+t)^{-(4+k)/2}\Vert g_{in}\Vert_{L_{x}^{1}L_{p}^{2}}\,,
\end{equation}
and
\begin{equation}
\Vert\nabla_{x}^{k}h^{L}_{\perp\perp}\Vert_{L_{x}^{\infty}L_{p}^{2}}\lesssim
te^{-a(\tau)t}\Vert g_{in}\Vert_{L_{x}^{1}L_{p}^{2}}\,.
\end{equation}

\end{proposition}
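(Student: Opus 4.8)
The plan is to bound, on the Fourier side, each of the four pieces $h_{00}^{L},h_{0\perp}^{L},h_{\perp0}^{L},h_{\perp\perp}^{L}$ separately. The passage from $\eta$ to $x$ rests on Minkowski's inequality,
\[
\left\Vert \nabla_{p}^{\ell}\nabla_{x}^{k}F\right\Vert _{L_{x}^{\infty}L_{p}^{2}}\le\int_{|\eta|<\de}|\eta|^{k}\left\vert \nabla_{p}^{\ell}\widehat{F}(t,\eta,\cdot)\right\vert _{L_{p}^{2}}\,d\eta ,
\]
combined with $|\widehat{g}_{in}(\eta,\cdot)|_{L_{p}^{2}}\le\Vert g_{in}\Vert _{L_{x}^{1}L_{p}^{2}}$. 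I will use: the eigenvalue bounds $|e^{\la(\eta)t}|,|e^{\sigma_{j}(\eta)t}|\le e^{-c|\eta|^{2}t}$; the spectral gaps $|e^{(-ip\cdot\eta/m_{A}+L_{AB})t}\Pi_{\eta}^{D\perp}|_{L_{p}^{2}}\lesssim e^{-a(\tau_{2})t}$ and $|e^{(-ip\cdot\eta/m_{B}+L_{BB})t}\Pi_{\eta}^{\perp}|_{L_{p}^{2}}\lesssim e^{-a(\tau_{1})t}$; boundedness of $L_{BA}$ (Lemma \ref{LBA}); the fact that for $|\eta|<\de$ the eigenfunctions $e_{j}(\eta),e_{D}(\eta)$ and all their weighted $p$-derivatives are bounded in $L_{p}^{2}$ uniformly in $\eta$ (Lemmas \ref{SpecAB}, \ref{SpecBB}); and the elementary bound $\int_{|\eta|<\de}|\eta|^{m}e^{-c|\eta|^{2}t}\,d\eta\lesssim(1+t)^{-(m+3)/2}$. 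In $h_{00}^{L}$ and $h_{0\perp}^{L}$ the only $p$-dependence is carried by the Schwartz eigenfunction $e_{j}(\eta)$, so $\nabla_{p}^{\ell}$ costs only a constant and it suffices to bound $\nabla_{x}^{k}$. For $h_{0\perp}^{L}$ one estimates the amplitude by $|\langle e_{j}(-\eta),L_{BA}e^{(-ip\cdot\eta/m_{A}+L_{AB})s}\Pi_{\eta}^{D\perp}\widehat{g}_{in}\rangle_{p}|\lesssim e^{-a(\tau_{2})s}\Vert g_{in}\Vert _{L_{x}^{1}L_{p}^{2}}$ and uses $\int_{0}^{t}e^{-c|\eta|^{2}(t-s)}e^{-a(\tau_{2})s}\,ds\lesssim e^{-c|\eta|^{2}t}$, so the Fourier integrand is $\lesssim|\eta|^{k}e^{-c|\eta|^{2}t}\Vert g_{in}\Vert _{L_{x}^{1}L_{p}^{2}}$.

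The decisive term is $h_{00}^{L}$, where $\int_{0}^{t}e^{\sigma_{j}(\eta)(t-s)+\la(\eta)s}\,ds=(e^{\la(\eta)t}-e^{\sigma_{j}(\eta)t})/(\la(\eta)-\sigma_{j}(\eta))$. Since $a_{j,1}=0$ for $j=2,3,4$, one has $\la(\eta)-\sigma_{j}(\eta)=O(|\eta|^{2})$, a small divisor that by itself would produce only the rate $(1+t)^{-(1+k)/2}$; this must be absorbed by a cancellation in the amplitude $\langle e_{j}(-\eta),L_{BA}e_{D}(\eta)\rangle_{p}$. First, $L_{BA}E_{D}=M_{B}^{-1/2}Q^{BA}(M_{B},M_{A})=0$ by $(\ref{mix.1.b})$, and $e_{D}(\eta)=E_{D}+i|\eta|E_{D,1}+O(|\eta|^{2})$ with $E_{D,1}=L_{AB}^{-1}(p\cdot\om/m_{A})E_{D}$, so $\langle e_{j}(-\eta),L_{BA}e_{D}(\eta)\rangle_{p}=i|\eta|\langle E_{j},L_{BA}E_{D,1}\rangle_{p}+O(|\eta|^{2})$ and in all cases $|\langle e_{j}(-\eta),L_{BA}e_{D}(\eta)\rangle_{p}|\lesssim|\eta|$. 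It remains to show $\langle E_{j},L_{BA}E_{D,1}\rangle_{p}=0$ for $j=2,3,4$. Here I use $\langle\psi,L_{BA}\phi\rangle_{p}=\int\psi\,M_{B}^{-1/2}Q^{BA}(M_{B},\phi\,M_{A}^{1/2})\,dp$, together with $L_{AB}E_{D,1}=(p\cdot\om/m_{A})M_{A}^{1/2}$ (hence $Q^{AB}(E_{D,1}M_{A}^{1/2},M_{B})=(p\cdot\om/m_{A})M_{A}$) and the invariant identities $(\ref{invQ})$: conservation of the mass of $B$ kills the $\chi_{0}$-component of $L_{BA}E_{D,1}$; conservation of total energy together with the parity relation $\int|p|^{2}(p\cdot\om)M_{A}\,dp=0$ kills its $\chi_{4}$-component, giving $\langle E_{2},L_{BA}E_{D,1}\rangle_{p}=0$; and conservation of total momentum plus parity show that the vector $(\langle\chi_{i},L_{BA}E_{D,1}\rangle_{p})_{i=1}^{3}$ is a scalar multiple of $\om$, hence orthogonal to $\om_{1}^{\perp}$ and $\om_{2}^{\perp}$, so $\langle E_{3},L_{BA}E_{D,1}\rangle_{p}=\langle E_{4},L_{BA}E_{D,1}\rangle_{p}=0$. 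Consequently $|\langle e_{j}(-\eta),L_{BA}e_{D}(\eta)\rangle_{p}|\lesssim|\eta|^{2}$ for $j=2,3,4$, which beats the $|\eta|^{-2}$ divisor, while for $j=0,1$ the condition $a_{j,1}\ne0$ yields $|\la(\eta)-\sigma_{j}(\eta)|\gtrsim|\eta|$, which against the $|\eta|$ amplitude again leaves no negative power. Thus in every case the Fourier integrand of $h_{00}^{L}$ is $\lesssim|\eta|^{k}e^{-c|\eta|^{2}t}\Vert g_{in}\Vert _{L_{x}^{1}L_{p}^{2}}$, and combining with the estimate of $h_{0\perp}^{L}$ gives $\Vert \nabla_{p}^{\ell}\nabla_{x}^{k}(h_{00}^{L}+h_{0\perp}^{L})\Vert _{L_{x}^{\infty}L_{p}^{2}}\lesssim(1+t)^{-(3+k)/2}\Vert g_{in}\Vert _{L_{x}^{1}L_{p}^{2}}$.

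For $h_{\perp0}^{L}$ there is no resonance, since $e^{(-ip\cdot\eta/m_{B}+L_{BB})(t-s)}\Pi_{\eta}^{\perp}$ decays like $e^{-a(\tau_{1})(t-s)}$; hence $\int_{0}^{t}e^{-c|\eta|^{2}s}e^{-a(\tau_{1})(t-s)}\,ds\lesssim e^{-c|\eta|^{2}t}$, while $|\Pi_{\eta}^{\perp}L_{BA}e_{D}(\eta)|_{L_{p}^{2}}\lesssim|\eta|$ again because $L_{BA}E_{D}=0$, so the Fourier integrand is $\lesssim|\eta|^{k+1}e^{-c|\eta|^{2}t}\Vert g_{in}\Vert _{L_{x}^{1}L_{p}^{2}}$ and $\Vert \nabla_{x}^{k}h_{\perp0}^{L}\Vert _{L_{x}^{\infty}L_{p}^{2}}\lesssim(1+t)^{-(4+k)/2}\Vert g_{in}\Vert _{L_{x}^{1}L_{p}^{2}}$, the extra half power coming precisely from the vanishing $L_{BA}E_{D}=0$. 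Finally $h_{\perp\perp}^{L}$ carries both gaps, and since $a(\tau_{1})(t-s)+a(\tau_{2})s\ge a(\tau)t$ with $a(\tau)=\min\{a(\tau_{1}),a(\tau_{2})\}$ the $s$-integral is $\lesssim te^{-a(\tau)t}$; as $\int_{|\eta|<\de}|\eta|^{k}\,d\eta$ is a finite constant, $\Vert \nabla_{x}^{k}h_{\perp\perp}^{L}\Vert _{L_{x}^{\infty}L_{p}^{2}}\lesssim te^{-a(\tau)t}\Vert g_{in}\Vert _{L_{x}^{1}L_{p}^{2}}$. I expect the main obstacle to be the verification of the microscopic cancellation $\langle E_{j},L_{BA}E_{D,1}\rangle_{p}=0$ for $j=2,3,4$ — lining up the conservation laws $(\ref{invQ})$ with the explicit eigenvectors $E_{j}$ and with $L_{AB}E_{D,1}=(p\cdot\om/m_{A})M_{A}^{1/2}$ — together with the attendant small-divisor bookkeeping in $h_{00}^{L}$; the other three pieces follow routinely from the spectral gaps.
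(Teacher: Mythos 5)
Your proposal has the same overall architecture as the paper's proof — the split of $h^{L}_{00}$ into the five spectral modes $I_j$, the use of $L_{BA}E_D=0$ together with the conservation of total energy (for $j=2$) and of total momentum plus $\om_{1,2}^{\perp}\perp\om$ (for $j=3,4$) to produce the $O(|\eta|^2)$ amplitude cancellations, and the spectral-gap estimates for $h^{L}_{0\perp}$, $h^{L}_{\perp 0}$, $h^{L}_{\perp\perp}$. Your bookkeeping for those last three pieces is correct.

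There is, however, a genuine gap in the treatment of $I_j$ for $j=2,3,4$. You evaluate the $s$-integral via $\int_0^t e^{\sigma_j(t-s)+\lambda s}\,ds = (e^{\lambda t}-e^{\sigma_j t})/(\lambda-\sigma_j)$ and then claim the $O(|\eta|^2)$ amplitude "beats the $|\eta|^{-2}$ divisor," i.e.\ you rely on a lower bound $|\lambda(\eta)-\sigma_j(\eta)|\gtrsim|\eta|^2$. This is not established — for $j=2,3,4$ both $\lambda(\eta)$ and $\sigma_j(\eta)$ have vanishing $O(|\eta|)$ part, so $\lambda-\sigma_j=(a_{j,2}-a_2)|\eta|^2+O(|\eta|^3)$, and nothing in Lemmas \ref{SpecAB}--\ref{SpecBB} rules out $a_{j,2}=a_2$ (they come from two unrelated operators, $L_{AB}$ and $L_{BB}$). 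If that coincidence occurs the divisor degenerates further and your asserted integrand bound $|\eta|^k e^{-c|\eta|^2 t}$ is not justified by your reasoning. This is precisely why the paper uses the resolvent formula \eqref{1st-FTOC} only for $j=0,1$ (where $a_{j,1}\neq 0$ gives a nondegenerate linear leading term and hence $|\lambda-\sigma_j|\gtrsim|\eta|$), and for $j=2,3,4$ instead moves the absolute value inside the $s$-integral: $|e^{\sigma_j(\eta)(t-s)+\lambda(\eta)s}|\leq e^{-\overline a|\eta|^2 t}$, $\int_0^t ds \le t$, and $t\int_{|\eta|<\delta}|\eta|^{k+2}e^{-\overline a|\eta|^2 t}\,d\eta\lesssim t(1+t)^{-(5+k)/2}\lesssim(1+t)^{-(3+k)/2}$. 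Your argument is repaired by replacing the inverse-divisor step with this, or equivalently by the mean-value bound $|e^{\lambda t}-e^{\sigma_j t}|\le|\lambda-\sigma_j|\,t\,e^{-\overline a|\eta|^2 t}$, which never inverts $\lambda-\sigma_j$.
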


\begin{proof}
We first split $h_{00}^{L}=I_{0}+I_{1}+I_{2}+I_{3}+I_{4}$, where each $I_{j}$
is the orthogonal projection of $h_{00}^{L}$ along $e_{j},$ i.e.,%

\[
I_{j}=\int_{0}^{t}\int_{|\eta|<\delta}e^{i\eta\cdot x}e^{\sigma_{j}%
(\eta)(t-s)+\lambda(\eta)s}\left\langle e_{j}(-\eta),L_{BA}e_{D}%
(\eta)\right\rangle _{p}\left\langle \hat{g}_{in},e_{D}(-\eta)\right\rangle
_{p}e_{j}(\eta)d\eta ds\,.
\]
Notice that $\lambda\left(  \eta\right)  -\sigma_{j}(\eta)\neq0$ for
$j=0\ $and $1.$ It is natural to evaluate $I_{j}$ ($j=0,1$) by the fundamental
theorem of calculus, that is,
\begin{equation}
\int_{0}^{t}e^{\sigma_{j}(\eta)(t-s)+\lambda(\eta)s}ds=\frac{1}{\lambda\left(
\eta\right)  -\sigma_{j}(\eta)}\left[  e^{\lambda\left(  \eta\right)
t}-e^{\sigma_{j}(\eta)t}\right]  ,\ \ \ j=0,1.\bigskip\label{1st-FTOC}%
\end{equation}
On the other hand, we have such a \textquotedblleft microscopic
cancellation\textquotedblright\ $L_{BA}E_{D}=0$ from $Q_{BA}\left(
M_{B},M_{A}\right)  =0\ $that
\[
\left\langle e_{j}(-\eta),L_{BA}e_{D}(\eta)\right\rangle _{p}=\frac
{1}{\left\Vert E_{D}+iE_{D,1}\eta+O(|\eta|^{2})\right\Vert _{L_{p}^{2}}%
}\left\langle e_{j}(-\eta),iL_{BA}E_{D,1}\eta+O(|\eta|^{2})\right\rangle
_{p}=O\left(  \left\vert \eta\right\vert \right)  .
\]

Therefore,
\[
\Vert\nabla_{x}^{k}I_{j}\Vert_{L_{x}^{\infty}L_{p}^{2}}\leq C\int_{|\eta
|<\de}\frac{|\eta|^{k+1}}{|\la(\eta)-\sigma_{j}(\eta)|}\Big|e^{\la(\eta
)t}+e^{\sigma_{j}(\eta)t}\Big||\hat{g}_{in}|_{L_{p}^{2}}d\eta\,,
\]
for all $k\in\mathbb{N\cup\{}0\}.$ Further, since
\begin{equation}
\frac{|\eta|}{|\lambda(\eta)-\sigma_{j}(\eta)|}=\frac{1}{|(-a_{2}+a_{j,2}%
)\eta+ia_{j,1}+O(|\eta|^{2})|}=O(1),\ \ j=0,1,\label{eigen-difference}%
\end{equation}
we deduce
\begin{align*}
\Vert\nabla_{x}^{k}I_{0}\Vert_{L_{x}^{\infty}L_{p}^{2}},\ \Vert\nabla_{x}%
^{k}I_{1}\Vert_{L_{x}^{\infty}L_{p}^{2}} &  \leq C\int_{|\eta|<\de}|\eta
|^{k}\Big|e^{\la(\eta)t}+e^{\sigma_{j}(\eta)t}\Big|d\eta\Vert g_{in}%
\Vert_{L_{x}^{1}L_{p}^{2}}\\
&  \lesssim(1+t)^{-\left(  k+3\right)  /2}\Vert g_{in}\Vert_{L_{x}^{1}%
L_{p}^{2}}\,.
\end{align*}

In contrast with $I_{0}$ and $I_{1},$ we lose $\left(  \ref{1st-FTOC}\right)
$ or $\left(  \ref{eigen-difference}\right)  $ in the case of $I_{j},\ j=1,2$
and $3.\ $To maintain the decay rate of $h,$ we come up with further
microscopic cancellations as well as $L_{BA}E_{D}=0.$ In the light of the
conservation of mass and total energy, the operator
\[
L_{AB}J+\frac{m_{A}}{m_{B}}\frac{\sqrt{M_{B}}}{\sqrt{M_{A}}}L_{BA}J
\]
is orthogonal to the collision invariants $\sqrt{M_{A}}$ and $|p|^{2}%
\sqrt{M_{A}}$. With this applied to $J=E_{D,1}=L_{AB}^{-1}\left(
p\cdot\om/m_{A}\right)  E_{D}$ (where we recall that $E_{D}=\sqrt{M_{A}}$) it
follows
\[
\int_{{\mathbb{R}}^{3}}\Big(L_{AB}\big[L_{AB}^{-1}\left(  p\cdot
\om/m_{A}\right)  E_{D}\big]+\frac{m_{A}}{m_{B}}\frac{\sqrt{M_{B}}}%
{\sqrt{M_{A}}}L_{BA}\big[L_{AB}^{-1}\left(  p\cdot\om/m_{A}\right)
E_{D}\big]\Big)\Phi_{1}\sqrt{M_{A}}dp=0,
\]
for any linear combination $\Phi_{1}$ of $1$\ and $|p|^{2},$ so that
\[
\frac{m_{A}}{m_{B}}\int_{{\mathbb{R}}^{3}}\Phi_{1}\sqrt{M_{B}}L_{BA}%
\big[L_{AB}^{-1}\left(  p\cdot\om/m_{A}\right)  E_{D}\big]dp=-\int
_{{\mathbb{R}}^{3}}\Phi_{1}\left(  p\cdot\om/m_{A}\right)  M_{A}dp\,.
\]
Choosing $\Phi_{1}=E_{2}/\sqrt{M_{B}}=\sqrt{\frac{1}{10}}(-5+|p|^{2})$, we
have
\[
\int_{{\mathbb{R}}^{3}}E_{2}L_{BA}\big[L_{AB}^{-1}\left(  p\cdot
\om/m_{A}\right)  E_{D}\big]dp=-\frac{m_{B}}{m_{A}}\int_{{\mathbb{R}}^{3}%
}\sqrt{\frac{1}{10}}(-5+|p|^{2})\left(  p\cdot\om/m_{A}\right)  M_{A}dp=0,
\]
since the integrand is odd. It turns out that $\big<e_{2}(-\eta),L_{BA}%
e_{D}(\eta)\big>_{p}=O(|\eta|^{2})$. Hence
\begin{align*}
\Vert\nabla_{x}^{k}I_{2}\Vert_{L_{x}^{\infty}L_{p}^{2}}  &  \leq C\int_{0}%
^{t}\int_{|\eta|<\de}\left\vert \eta\right\vert ^{2+k}\left\vert e^{\sigma
_{2}(\eta)(t-s)+\la(\eta)s}\right\vert |\hat{g}_{in}|_{L_{p}^{2}}d\eta ds\\
&  \leq C\int_{0}^{t}\int_{|\eta|<\de}\left\vert \eta\right\vert
^{2+k}e^{-\overline{a}|\eta|^{2}t}d\eta ds\Vert g_{in}\Vert_{L_{x}^{1}%
L_{p}^{2}}\ \ \ \\
&  \leq C^{\prime}\int_{0}^{t}\left(  1+t\right)  ^{-\left(  5+k\right)
/2}ds\Vert g_{in}\Vert_{L_{x}^{1}L_{p}^{2}}\\
&  \leq C^{\prime\prime}\left(  1+t\right)  ^{-\left(  3+k\right)  /2}\Vert
g_{in}\Vert_{L_{x}^{1}L_{p}^{2}},
\end{align*}
here $\overline{a}=1/2\min\{a_{2},a_{2,2}\}>0$ whenever $\delta>0$ is small.

Furthermore, owing to the conservation of momentum, the operator
\[
L_{AB}J+\frac{\sqrt{M_{B}}}{\sqrt{M_{A}}}L_{BA}J
\]
is orthogonal to all collision invariants $p_{1}\sqrt{M_{A}},\ p_{2}%
\sqrt{M_{A}}$\ and $p_{3}\sqrt{M_{A}}$. With this applied to $J=E_{D,1}%
=L_{AB}^{-1}\left(  p\cdot\om/m_{A}\right)  E_{D}$, it follows
\[
\int_{{\mathbb{R}}^{3}}\Big(L_{AB}\big[L_{AB}^{-1}\left(  p\cdot
\om/m_{A}\right)  E_{D}\big]+\frac{\sqrt{M_{B}}}{\sqrt{M_{A}}}L_{BA}%
\big[L_{AB}^{-1}\left(  p\cdot\om/m_{A}\right)  E_{D}\big]\Big)\Phi_{2}%
\sqrt{M_{A}}dp=0,
\]
for any linear combination $\Phi_{2}$ of $p_{1},\ p_{2}$ and $p_{3}$, so that
\[
\int_{{\mathbb{R}}^{3}}\Phi_{2}\sqrt{M_{B}}L_{BA}\big[L_{AB}^{-1}\left(
p\cdot\om/m_{A}\right)  E_{D}\big]dp=-\int_{{\mathbb{R}}^{3}}\Phi_{2}\left(
p\cdot\om/m_{A}\right)  M_{A}dp.
\]
Choosing $\Phi_{2}=E_{3}/\sqrt{M_{B}}=\omega_{1}^{\bot}\cdot p,$ we find
\begin{align*}
&  \int_{{\mathbb{R}}^{3}}E_{3}L_{BA}\big[L_{AB}^{-1}\left(  p\cdot
\om/m_{A}\right)  E_{D}\big]dp\\
&  =-\int_{{\mathbb{R}}^{3}}\left(  p\cdot\om_{1}^{\bot}\right)  \left(
p\cdot\om\right)  \frac{M_{A}}{m_{A}}dp=-\sum_{i,j=1}^{3}\left(  \int
p_{i}p_{j}\frac{M_{A}}{m_{A}}dp\right)  \left(  \omega_{1}^{\bot}\right)
_{i}\omega_{j}=0,
\end{align*}
due to the fact that $\int p_{i}p_{j}\frac{M_{A}}{m_{A}}dp=\frac{1}{3m_{B}%
}\delta_{ij}$ and $\omega_{1}^{\bot}\bot\omega.$ It turns out that
$\big<e_{3}(-\eta),L_{BA}e_{D}(\eta)\big>_{p}=O(|\eta|^{2})$. Likewise,
$\big<e_{4}(-\eta),L_{BA}e_{D}(\eta)\big>_{p}=O(|\eta|^{2})$. Therefore,
\[
\Vert\nabla_{x}^{k}I_{3}\Vert_{L_{x}^{\infty}L_{p}^{2}},\Vert\nabla_{x}%
^{k}I_{4}\Vert_{L_{x}^{\infty}L_{p}^{2}}\lesssim\left(  1+t\right)  ^{-\left(
3+k\right)  /2}\Vert g_{in}\Vert_{L_{x}^{1}L_{p}^{2}}\,.
\]
This completes the estimate of $h_{00}^{L}$.

For $h_{0\perp}^{L}$, direct computation gives
\begin{align*}
\Vert\nabla_{x}^{k}h_{0\perp}^{L}\Vert_{L_{x}^{\infty}L_{p}^{2}}  &  \leq
C\sum_{j=1}^{3}\int_{0}^{t}\int_{|\eta|<\de}\left\vert \eta\right\vert
^{k}e^{-a(\tau_{2})s}\left\vert e^{\sigma_{j}(\eta)(t-s)}\right\vert |\hat
{g}_{in}|_{L_{p}^{2}}d\eta ds\\
&  \leq C^{\prime}\left(  \int_{0}^{t/2}e^{-a(\tau_{2})s}(1+t)^{-\left(
3+k\right)  /2}ds+\int_{t/2}^{t}e^{-a(\tau_{2})s}ds\right)  \Vert g_{in}%
\Vert_{L_{x}^{1}L_{p}^{2}}\,\\
&  \lesssim\left(  1+t\right)  ^{-\left(  3+k\right)  /2}\Vert g_{in}%
\Vert_{L_{x}^{1}L_{p}^{2}}.
\end{align*}
In fact, one can improve $p$ regularity immediately; that is, for any
$\ell,\ k\in\mathbb{N\cup\{}0\mathbb{\}}$,
\[
\Vert\nabla_{p}^{\ell}\nabla_{x}^{k}\left(  h_{00}^{L}+h_{0\perp}^{L}\right)
\Vert_{L_{x}^{\infty}L_{p}^{2}}\lesssim(1+t)^{-\left(  3+k\right)  /2}\Vert
g_{in}\Vert_{L_{x}^{1}L_{p}^{2}}\,.
\]
For $h_{\perp0}^{L}$, we have
\begin{align*}
\Vert\nabla_{x}^{k}h_{\perp0}^{L}\Vert_{L_{x}^{\infty}L_{p}^{2}}  &
\lesssim\int_{0}^{t}\int_{|\eta|<\delta}|\eta|^{k+1}e^{-a(\tau_{1}%
)(t-s)}\left\vert e^{\lambda(\eta)s}\right\vert |\hat{g}_{in}|_{L_{p}^{2}%
}d\eta ds\\
&  \lesssim\left(  \int_{t/2}^{t}e^{-a(\tau_{1})(t-s)}\left(  1+s\right)
^{-\left(  k+4\right)  /2}ds+\int_{0}^{t/2}e^{-a(\tau_{1})\left(  t-s\right)
}ds\right)  \Vert g_{in}\Vert_{L_{x}^{1}L_{p}^{2}}\\
&  \lesssim(1+t)^{-\left(  4+k\right)  /2}\Vert g_{in}\Vert_{L_{x}^{1}%
L_{p}^{2}}\,.
\end{align*}
Finally,
\[
\Vert\nabla_{x}^{k}h_{\perp\perp}^{L}\Vert_{L_{x}^{\infty}L_{p}^{2}}%
\lesssim\int_{0}^{t}\int_{|\eta|<\delta}\left\vert \eta\right\vert
^{k}e^{-a(\tau_{1})(t-s)-a(\tau_{2})s}d\eta ds\Vert g_{in}\Vert_{L_{x}%
^{1}L_{p}^{2}}\lesssim te^{-a(\tau)t}\Vert g_{in}\Vert_{L_{x}^{1}L_{p}^{2}},
\]
here $a(\tau)=\min\{a(\tau_{1}),a(\tau_{2})\}.$ This completes the proof of
the proposition.
\end{proof}

\subsection{Improvement of the $x$ and $p$ regularities (general discussion)}

Let $f$ be a solution of the linearized Landau equation
\begin{equation}
\left\{
\begin{array}
[c]{l}%
\partial_{t}f+\frac{1}{m_{B}}p\cdot\nabla_{x}f=L_{BB}f+L_{BA}g,%
\vspace{3mm}%
\\
f\left(  0,x,p\right)  =f_{in}\left(  x,p\right)  .
\end{array}
\right.  \label{fBB}%
\end{equation}
where $g$ is known and satisfies the equation $\partial_{t}g+\frac{1}{m_{A}%
}p\cdot\nabla_{x}g=L_{AB}g.$ In the following, we first apply similar
arguments in section $3$ to establish the regularization estimates for $f$ in
short time and in large time. After that, based on these regularization
estimates, one can readily obtain the decay rate of $h_{S}$ through the
interpolation trick as before. Contrary to $h_{S}$, the $L_{x}^{\infty}L_{p}^{2}$
norm of $\nabla_{x}h^L_{\bot0}$ decays only algebraically, rather than exponentially. To maintain the decay rate of $\nabla_{p}^{\ell}\nabla_{x}^{k} h^L_{\bot0}$ the same as that of homogeneous problem, we interpolate  the $L_{x}^{\infty}L_{p}^{2}$ norm of it from higher $p$-derivatives. However, the method used before for improving $p$-regularity results in  the weights imposed on initial data grow drastically (In fact, exponentially grows with respect to $\ell$).  We refine the $p$-regularization estimate by making use of equation \eqref{eq:improve**} and  the $p$-regularity of source term $g_{L;0}$. In such a way, the $p$-regularity of $h^{L}_{\bot 0}$ has been improved without imposing any weight on initial data.

Let the operator $\mathcal{L}_{BB}=-\frac{1}{m_{B}}p\cdot\nabla_{x}%
-\Lambda^{BB}.$ As shown in \cite{[10]}, the operator $e^{t\mathcal{L}_{BB}}$
has the same energy estimate and regularization estimate as
$e^{t\mathcal{L}_{AB}}$ in Lemma \ref{regularization}. Based on this estimate,
we deduce the $x$-regularity of $f$ in small time as below.

\begin{lemma}
\label{hx-regularityST}Let $f$ be a solution of the linearized Landau equation
$\left(  \ref{fBB}\right)  .$ Then for $k\in\mathbb{N\cup\{}0\},$

\noindent\textrm{(i)}\quad there is $C_{k}>0$ such that for $0<t\leq1$%
\[
\left\Vert \nabla_{x}^{k}f\right\Vert _{L^{2}\left(  m_{\theta}\right)  }\leq
C_{k}t^{-\frac{3}{2}k}\left(  \left\Vert f_{in}\right\Vert _{L^{2}\left(
m_{k}\right)  }+\left\Vert g_{in}\right\Vert _{L^{2}\left(  m_{k}\right)
}\right)  .
\]
\noindent\textrm{(ii)}\quad For $t\geq1,$%
\[
\left\Vert \nabla_{x}^{k}f\right\Vert _{L^{2}\left(  m_{\theta}\right)
}\lesssim\left(  \left\Vert f_{in}\right\Vert _{L^{2}\left(  m_{k}\right)
}+t^{1/2}\left\Vert g_{in}\right\Vert _{L^{2}\left(  m_{k}\right)  }\right)
.
\]

\end{lemma}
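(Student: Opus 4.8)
\textbf{Proof plan for Lemma \ref{hx-regularityST}.}
The plan is to mimic the Picard-type iteration of Lemma \ref{x-regularityST}, but now carrying the source term $L_{BA}g$ along throughout, and using the already-established estimates for $g$ (Lemma \ref{x-regularityST} applied to the $AB$-equation, together with Proposition \ref{Mixed-decay} and Proposition \ref{energy*}) to control every term in which $g$ appears. First I would record the zeroth-order energy estimate: since $L_{BA}$ is a bounded operator on every weighted $L_p^2$ (Lemma \ref{LBA}, which shows $L_{BA}$ behaves like the smoothing kernel $K^{BB}$), the weighted energy inequality of Lemma \ref{energy-Guo} for $L_{BB}$ gives
\[
\frac{d}{dt}\left\Vert f\right\Vert_{L^2(m_\theta)}^2 \lesssim -\left\Vert f\right\Vert_{L_\sigma^2(m_\theta)}^2 + c\left\Vert f\right\Vert_{L^2}^2 + C\left\Vert g\right\Vert_{L^2(m_\theta)}^2,
\]
and since $\left\Vert g(t)\right\Vert_{L^2(m_\theta)}\lesssim\left\Vert g_{in}\right\Vert_{L^2(m_\theta)}$ by the $k=0$ case of Lemma \ref{x-regularityST}, Gronwall yields $\left\Vert f(t)\right\Vert_{L^2(m_\theta)}\lesssim\left\Vert f_{in}\right\Vert_{L^2(m_\theta)}+\left\Vert g_{in}\right\Vert_{L^2(m_\theta)}$ for all $t\le 1$, and $\lesssim\left\Vert f_{in}\right\Vert_{L^2(m_\theta)}+t^{1/2}\left\Vert g_{in}\right\Vert_{L^2(m_\theta)}$ for $t\ge 1$ (the $t^{1/2}$ coming from integrating the constant-in-time source against the exponentially damped semigroup, then using Cauchy--Schwarz in $s$). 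This already settles $k=0$.

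For $k\ge 1$ I would write, via Duhamel, $f = f_{\mathrm{hom}} + \int_0^t e^{(t-s)\mathcal L_{BB}}\bigl(K^{BB}f + L_{BA}g\bigr)(s)\,ds$, and then iterate exactly as in the proof of Lemma \ref{x-regularityST}: define $f^{(0)}$ by $\partial_t f^{(0)}=\mathcal L_{BB}f^{(0)}$, $f^{(0)}(0)=f_{in}$, and $f^{(j)}$ by $\partial_t f^{(j)}=\mathcal L_{BB}f^{(j)} + K^{BB}f^{(j-1)} + (\text{source contribution})$. The new feature is that the source $L_{BA}g$ must be propagated: I would treat $L_{BA}g$ as an additional forcing in the iteration hierarchy, and at each level estimate the corresponding Duhamel integral by splitting $(t-s)+s$ over $t$ exactly as in $(\ref{aa.9})$--$(\ref{aa.10})$ so as to pair the singular operator $t\nabla_x$ with $e^{t\mathcal L_{BB}}$, making the $(t-s)^{-1/2}$ or $(t-s)^{-3/2}$-type kernels integrable. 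For the $g$-terms one uses the already-known bounds $\left\Vert\nabla_x^k g(s)\right\Vert_{L^2(m_j)}\lesssim s^{-\frac32 k}\left\Vert g_{in}\right\Vert_{L^2(m_k)}$ ($s\le1$) and $\lesssim\left\Vert g_{in}\right\Vert_{L^2(m_k)}$ ($s\ge1$); since $L_{BA}$ is bounded, the same structure of time-integrals appears as in Lemma \ref{x-regularityST}, and after finitely many iterations the remainder becomes regular. Taking $f=\sum_{j=0}^{2k}f^{(j)}+\mathcal R^{(k)}$ and estimating each piece in small time ($0<t\le1$) gives the $t^{-\frac32 k}$ blow-up with the weight $m_k$ on both $f_{in}$ and $g_{in}$; the large-time bound then follows from the small-time bound at $t=1$ together with the energy inequality $(\ref{fx-weighted-energy})$ adapted to $L_{BB}$ with source (whence the $t^{1/2}\left\Vert g_{in}\right\Vert$).

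The main obstacle is bookkeeping the source term through the iteration so that no extra $p$-weight accumulates beyond $m_k$: in the $AB$-case of Lemma \ref{x-regularityST}, each application of $\nabla_x$ to $e^{s\mathcal L_{AB}}$ costs one power of the weight ($m_\theta\to m_1$), and I must verify that routing $L_{BA}g$ — which itself only needs the $x$-derivative bounds of $g$ already in hand — does not force a higher weight than $m_k$ on $g_{in}$. Since $L_{BA}$ maps $L^2(m_j)$ boundedly into $L^2(m_j)$ (indeed into any weighted space, as $Z^{BA}$ is Gaussian), and the worst $g$-term at the $j$th iteration level involves $\nabla_x^k g$ estimated with weight $m_k$, the weight stays at $m_k$; this is the point that requires care but no genuinely new idea. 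The $t^{1/2}$ growth in the large-time estimate is unavoidable and harmless — it reflects that $g$ does not decay at the linearized level when measured only in $L^2(m_k)$ (as opposed to $L_x^1L_p^2$), and will be improved later using the decay of $g$ in Theorem \ref{theorem-g}.
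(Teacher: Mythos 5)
Your plan matches the paper's proof in outline — for $k=0$ a weighted energy/Duhamel estimate giving $\|f_{in}\|_{L^2(m_\theta)}+t^{1/2}\|g_{in}\|_{L^2(m_\theta)}$, and for $k\geq 1$ a Picard iteration exactly as in Lemma \ref{x-regularityST} — but the one substantive choice you make differently is worth noting. The paper keeps every iterate $f^{(j)}$, $0\le j\le 2k$, exactly as in the homogeneous problem (governed only by $\mathcal L_{BB}$ and $K^{BB}$, with data $f_{in}$) and dumps the source entirely into the remainder equation $\partial_t\mathcal R^{(k)}+\frac{1}{m_B}p\cdot\nabla_x\mathcal R^{(k)}=L_{BB}\mathcal R^{(k)}+K^{BB}f^{(2k)}+L_{BA}g$; you instead propose to feed $L_{BA}g$ in as forcing at the iteration levels and apply the $\frac{(t-s)+s}{t}$ split there, pairing $s\nabla_x$ with $g(s)$ and bounding $\|s\nabla_x^j g(s)\|_{L^2(m_\theta)}\lesssim s^{1-\frac32 j}\|g_{in}\|_{L^2(m_j)}$ from Lemma \ref{x-regularityST}. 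Both routes reach the stated bound, but yours is arguably more robust at one point the paper glosses over: with the source in the remainder, the naive bound $\|\nabla_x^k\mathcal R^{(k)}\|\lesssim\int_0^t\|\nabla_x^k f^{(2k)}(s)\|ds+\int_0^t\|\nabla_x^k g(s)\|ds$ produces $\int_0^t s^{-3k/2}ds$, which diverges for $k\geq 1$; one must additionally split the source integral at $s=t/2$, using $\|\nabla_x^k\mathbb{G}_{BB}^{t-s}L_{BA}g(s)\|\lesssim (t-s)^{-3k/2}\|g_{in}\|_{L^2}$ for $s<t/2$ and $\|\nabla_x^k g(s)\|\lesssim (t/2)^{-3k/2}\|g_{in}\|_{L^2(m_k)}$ for $s>t/2$, to land at $t^{1-3k/2}$. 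Your version avoids this extra step because the source is regularised through the iterates before it reaches the remainder, at the cost of heavier bookkeeping per level. Finally, the weight concern you raise is resolved exactly as you suspect: $L_{BA}$ has a kernel Gaussian in both $p$ and $p_*$, so $\|L_{BA}u\|_{L^2(m)}\lesssim\|u\|_{L^2}$ for any polynomial weight $m$, and the only place $m_k$ attaches to $g_{in}$ is when $\nabla_x^k$ is pushed onto $g$ itself via Lemma \ref{x-regularityST}.
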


\begin{proof}
Firstly, similar to $\left(  \ref{f-weighted-energy}\right)  $ and $\left(
\ref{fx-weighted-energy}\right)  $, there exists a universal constant $C>0$
such that
\begin{equation}
\left\Vert \mathbb{G_{BB}}^{t}f_{in}\right\Vert _{L^{2}\left(  m_{\theta
}\right)  }\leq C\left\Vert f_{in}\right\Vert _{L^{2}\left(  m_{\theta
}\right)  }, \label{f-energy-BB1}%
\end{equation}
and
\begin{equation}
\left\Vert \partial_{x}^{\alpha}\mathbb{G_{BB}}^{s_{2}}f_{in}\right\Vert
_{L^{2}\left(  m_{\theta}\right)  }\leq C\left\Vert \partial_{x}^{\alpha
}\mathbb{G_{BB}}^{s_{1}}f_{in}\right\Vert _{L^{2}\left(  m_{\theta}\right)  },
\label{fx-energy-BB1}%
\end{equation}
for any $0<s_{1}<s_{2}.$ By the Duhamel principle,
\[
f=\mathbb{G_{BB}}^{t}f_{in}+\int_{0}^{t}\mathbb{G_{BB}}^{t-s}L_{BA}g\left(
s\right)  ds.
\]
Notice that $L_{BA}$ is an integral operator like $K^{BB}.$ Hence, for $k=0,$
we find
\begin{align}
\left\Vert f\right\Vert _{L^{2}\left(  m_{\theta}\right)  }  &  \lesssim
\left(  \left\Vert \mathbb{G_{BB}}^{t}f_{in}\right\Vert _{L^{2}\left(
m_{\theta}\right)  }^{2}+\int_{0}^{t}\left\Vert \mathbb{G_{BB}}^{t-s}%
L_{BA}g\left(  s\right)  \right\Vert _{L^{2}\left(  m_{\theta}\right)  }%
^{2}ds\right)  ^{1/2}\label{f-energy-BB2}\\
&  \lesssim\left(  \left\Vert f_{in}\right\Vert _{L^{2}\left(  m_{\theta
}\right)  }^{2}+\int_{0}^{t}\left\Vert g\left(  s\right)  \right\Vert
_{L^{2}\left(  m_{\theta}\right)  }^{2}ds\right)  ^{1/2}\nonumber\\
&  \lesssim\left\Vert f_{in}\right\Vert _{L^{2}\left(  m_{\theta}\right)
}+t^{1/2}\left\Vert g_{in}\right\Vert _{L^{2}\left(  m_{\theta}\right)
},\nonumber
\end{align}
from $\left(  \ref{f-weighted-energy}\right)  $ and $\left(
\ref{f-energy-BB1}\right)  $. In addition, since $\partial_{x}^{\alpha}$ is
commutative with equation $\left(  \ref{fBB}\right)  ,$ we as well have
\begin{equation}
\left\Vert \partial_{x}^{\alpha}f\left(  s_{2}\right)  \right\Vert
_{L^{2}\left(  m_{\theta}\right)  }\lesssim\left\Vert \partial_{x}^{\alpha
}f\left(  s_{1}\right)  \right\Vert _{L^{2}\left(  m_{\theta}\right)
}+\left(  s_{2}-s_{1}\right)  ^{1/2}\left\Vert \partial_{x}^{\alpha}g\left(
s_{1}\right)  \right\Vert _{L^{2}\left(  m_{\theta}\right)  },
\label{fx-energy-BB2}%
\end{equation}
for any $0<s_{1}<s_{2}.$

For $k\geq1,$ using a Picard type iteration introduced in Lemma
\ref{x-regularityST}, we rewrite%
\[
f=f^{\left(  0\right)  }+f^{\left(  1\right)  }+\cdots+f^{\left(  2k\right)
}+\mathcal{R}^{\left(  k\right)  },
\]
where $f^{\left(  0\right)  }\ $satisfies
\[
\left\{
\begin{array}
[c]{l}%
\partial_{t}f^{(0)}=\mathcal{L}_{BB}f^{\left(  0\right)  },%
\vspace{3mm}%
\\
f^{(0)}(0,x,p)=f_{in}\left(  x,p\right)  ,
\end{array}
\right.
\]
$f^{\left(  j\right)  }$ , $1\leq j\leq2k,$ satisfies
\[
\left\{
\begin{array}
[c]{l}%
\partial_{t}f^{(j)}=\mathcal{L}_{BB}f^{(j)}+K^{BB}f^{(j-1)}\,,\\[4mm]%
f^{(j)}(0,x,p)=0\,,
\end{array}
\right.
\]
and $\mathcal{R}^{\left(  k\right)  }$ solves the equation%
\[
\left\{
\begin{array}
[c]{l}%
\partial_{t}\mathcal{R}^{\left(  k\right)  }=L_{BB}\mathcal{R}^{\left(
k\right)  }+K^{BB}f^{(2k)}\,+L_{BA}g\\[4mm]%
\mathcal{R}^{\left(  k\right)  }(0,x,p)=0.
\end{array}
\right.
\]
Under this decomposition, following the same procedure as in Lemma
\ref{x-regularityST}, we finish the proof.
\end{proof}

\begin{proposition}
\label{short-time-h} Let $f$ be a solution to equation $\left(  \ref{fBB}%
\right)  $ and let $k,$ $\ell\in\mathbb{N}\cup\{0\}.$ Then for $0<t\leq1$
\[
\left\Vert \nabla_{p}^{\ell}\nabla_{x}^{k}f\left(  t\right)  \right\Vert
_{L^{2}\left(  m_{\theta}\right)  }\leq Ct^{-\frac{\ell}{2}-\frac{3}{2}%
(k+\ell) }\left(  \left\Vert f_{in}\right\Vert _{L^{2}\left(  m_{k+\ell
}\right)  \ }+\left\Vert g_{in}\right\Vert _{L^{2}\left(  m_{k+\ell}\right)
\ }\right)  .
\]

\end{proposition}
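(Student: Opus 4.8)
The plan is to reproduce, for the inhomogeneous equation $(\ref{fBB})$, the two--step strategy used for $g$ in Proposition \ref{Mixed-decay}, the only genuinely new feature being the source term $L_{BA}g$. First one establishes the $p$-- and mixed regularization estimates for the free semigroup $e^{t\mathcal{L}_{BB}}$, and then transfers them to $f$ through Duhamel's principle, which now carries \emph{two} inhomogeneities, $K^{BB}f$ and $L_{BA}g$.

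\emph{Step 1 (the semigroup $e^{t\mathcal{L}_{BB}}$).} Since $\mathcal{L}_{BB}=-\tfrac{1}{m_B}p\cdot\nabla_x-\Lambda^{BB}$ has the same structure as $\mathcal{L}_{AB}$ and $\widetilde\Lambda^{BB}$ obeys Lemma \ref{decom} and Lemma \ref{energy-Guo}, the energy--evolution identities $(\ref{x-evolu})$--$(\ref{mixed-evolu})$ of Lemma \ref{energy-evolution} hold verbatim with $\mathcal{L}_{AB}$ replaced by $\mathcal{L}_{BB}$. Hence the functionals $\mathcal{F}_\ell(t,u)$ from the proof of Proposition \ref{Mixed-decay} again satisfy $\tfrac{d}{dt}\mathcal{F}_\ell(t,u)\le 0$ on $0<t<1$ for every solution $u$ of $\partial_t u=\mathcal{L}_{BB}u$, so that (commuting $\partial_x^\alpha$ through $\mathcal{L}_{BB}$) for $0<t\le1$
\[
\|\nabla_p^\ell\nabla_x^k e^{t\mathcal{L}_{BB}}u_0\|_{L^2(m_\theta)}\;\lesssim\; t^{-1/2}\,\mathcal{F}_\ell\bigl(0,\nabla_x^k u_0\bigr)^{1/2},
\]
where $\mathcal{F}_\ell(0,\nabla_x^k u_0)$ involves only mixed derivatives of $u_0$ of total order $\le k+\ell$ and $p$--order $\le \ell-1$, with weights $m_1$ and $m_\theta$.

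\emph{Step 2 (Duhamel and induction on $\ell$).} Fix $0<t_0\le1$; for $t_0/2<t\le t_0$ write
\[
f(t)=e^{(t-t_0/2)\mathcal{L}_{BB}}f(t_0/2)+\int_{t_0/2}^{t}e^{(t-s)\mathcal{L}_{BB}}\bigl(K^{BB}f(s)+L_{BA}g(s)\bigr)\,ds,
\]
apply $\nabla_p^\ell\nabla_x^k$, and induct on $\ell$, the base case $\ell=0$ being Lemma \ref{hx-regularityST}(i). The homogeneous term and the $K^{BB}f$ term are handled exactly as in Proposition \ref{Mixed-decay}: plug into Step 1, and bound $\mathcal{F}_\ell^{1/2}$ by mixed norms of $f$ of $p$--order $\le\ell-1$ (controlled by the induction hypothesis), pure $x$--derivatives of $f$ (controlled by Lemma \ref{hx-regularityST}(i)), and $\|f\|_{L^2}$ (controlled by $(\ref{f-energy-BB2})$), all at times comparable to $t_0$. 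For the new source, note that in $\nabla_p^\ell\nabla_x^k e^{(t-s)\mathcal{L}_{BB}}L_{BA}g(s)$ the $p$-- and $x$--derivatives fall either on the semigroup or on the smoothing kernel of $L_{BA}$ (Lemma \ref{LBA}), so this term is $\lesssim(t-s)^{-1/2}\sum_{j\le k+\ell}\|\nabla_x^{j}g(s)\|_{L^2}\lesssim(t-s)^{-1/2}s^{-3(k+\ell)/2}\|g_{in}\|_{L^2(m_{k+\ell})}$ by Lemma \ref{x-regularityST}(i); in particular \emph{no $p$--regularity of $g$ is needed}. Since $\int_{t_0/2}^{t}(t-s)^{-1/2}\,ds\lesssim t_0^{1/2}$, setting $t=t_0$ turns the worst lower--order exponent $t_0^{-(\ell-1)/2-3(k+\ell)/2}$ into $t_0^{-\ell/2-3(k+\ell)/2}$, while the source contributes the admissible $t_0^{1/2-3(k+\ell)/2}$; as $t_0\in(0,1]$ is arbitrary, this gives the claim.

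The main obstacle is the careful propagation of weights and time powers through the induction: one must check that after differentiating the Duhamel formula, every resulting term of $p$--order $\ell-1$ and $x$--order $k+1$ (the borderline case) reappears with exactly the weight $m_{k+\ell}$ on the data and the time exponent $-(\ell-1)/2-3(k+\ell)/2$ dictated by the induction hypothesis, and that the genuine source $L_{BA}g$ never forces a $p$--derivative onto $g$ — only the $x$--regularity of $g$, which by Lemma \ref{x-regularityST} is as strong as that of $f$, is used. This is the structural reason the final estimate is symmetric in $f_{in}$ and $g_{in}$, and it is where the smoothing nature of $L_{BA}$ and the matching $x$--decay rates of $f$ and $g$ enter.
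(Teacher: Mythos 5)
Your proposal is correct and follows essentially the same two-step route as the paper's proof: first transfer the $p$-regularization estimate for $e^{t\mathcal{L}_{AB}}$ (the functionals $\mathcal{F}_\ell$) verbatim to $e^{t\mathcal{L}_{BB}}$, then run Duhamel with both inhomogeneities $K^{BB}f$ and $L_{BA}g$ and induct on $\ell$, using Lemma \ref{hx-regularityST}(i) for the $x$-derivatives of $f$ and Lemma \ref{x-regularityST}(i) for those of $g$. Your explicit remark that the smoothing kernel of $L_{BA}$ absorbs all $p$-derivatives and weights — so that only the $x$-regularity of $g$ is needed and the final estimate is symmetric in $f_{in}$ and $g_{in}$ — is precisely the structural point that makes the paper's one-line Duhamel estimate work, and is worth having been made explicit.
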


\begin{proof}
Let $u$ be a solution of equation
\begin{equation}
\left\{
\begin{array}
[c]{l}%
\pa_{t}u=\mathcal{L}_{BB}u\,,\\[4mm]%
u(0,x,p)=u_{0}(x,p).
\end{array}
\right.
\end{equation}
Then for any $\ell\in\mathbb{N}$
\begin{equation}
t\Vert\nabla_{p}^{\ell}u\Vert_{L^{2}\left(  m_{\theta}\right)  }^{2}%
\lesssim\mathcal{F}_{\ell}\left(  0,u\right)  ,\text{ }0<t\leq1
\label{L-estimate}%
\end{equation}
where $\mathcal{F}_{\ell}\left(  t,u\right)  $ is defined as in the proof of
Proposition \ref{Mixed-decay}.

Let $0<t_{0}\leq1.$ By Duhamel's principle,
\[
f\left(  t\right)  =e^{(t-t_{0}/2)\mathcal{L}_{BB}}f(t_{0}/2)+\int_{t_{0}%
/2}^{t}e^{\left(  t-s\right)  \mathcal{L}_{BB}}K^{BB}f\left(  s\right)
ds+\int_{t_{0}/2}^{t}e^{\left(  t-s\right)  \mathcal{L}_{BB}}L_{AB}g\left(
s\right)  ds,
\]
for $0<t_{0}/2\leq t\leq1,$ so that
\begin{align*}
\left\Vert \nabla_{p}f\left(  t\right)  \right\Vert _{L^{2}\left(  m_{\theta
}\right)  }  &  \leq C\left(  t-t_{0}/2\right)  ^{-\frac{1}{2}}\left(
\left\Vert f\left(  t_{0}/2\right)  \right\Vert _{L^{2}\left(  m_{1}\right)
}+\left\Vert \nabla_{x}f\left(  t_{0}/2\right)  \right\Vert _{L^{2}\left(
m_{\theta}\right)  }\right) \\
&  \hspace{1em}+C\int_{t_{0}/2}^{t}(t-s)^{-\frac{1}{2}}\left(  \left\Vert
f\left(  s\right)  \right\Vert _{L^{2}\left(  m_{1}\right)  }+\left\Vert
\nabla_{x}f\left(  s\right)  \right\Vert _{L^{2}\left(  m_{\theta}\right)
}+\left\Vert g\left(  s\right)  \right\Vert _{L^{2}\left(  m_{1}\right)
}+\left\Vert \nabla_{x}g\left(  s\right)  \right\Vert _{L^{2}\left(
m_{\theta}\right)  }\right)  ds\\
&  \leq C\left(  t-t_{0}/2\right)  ^{-\frac{1}{2}}t_{0}^{-\frac{3}{2}}\left(
\left\Vert f_{in}\right\Vert _{L^{2}\left(  m_{1}\right)  }+\left\Vert
g_{in}\right\Vert _{L^{2}\left(  m_{1}\right)  }\right)  ,
\end{align*}
due to Lemma \ref{hx-regularityST}, $\left(  \ref{fx-energy-BB2}\right)  $ and
$\left(  \ref{L-estimate}\right)  $. It follows that
\[
\left\Vert \nabla_{p}f\left(  t\right)  \right\Vert _{L^{2}\left(  m_{\theta
}\right)  }\lesssim t^{-\frac{1}{2}-\frac{3}{2}}\left(  \left\Vert
f_{in}\right\Vert _{L^{2}\left(  m_{1}\right)  }+\left\Vert g_{in}\right\Vert
_{L^{2}\left(  m_{1}\right)  }\right)  ,\ \ \ \ \ 0<t\leq1.
\]
Since $\partial_{x}^{\alpha}$ commutes with the equation $\left(
\ref{fBB}\right)  $, we also have%
\[
\left\Vert \nabla_{p}\nabla_{x}^{k}f\left(  t\right)  \right\Vert
_{L^{2}\left(  m_{\theta}\right)  }\lesssim t^{-\frac{1}{2}-\frac{3}{2}%
(k+1)}\left(  \left\Vert f_{in}\right\Vert _{L^{2}\left(  m_{k+1}\right)
}+\left\Vert g_{in}\right\Vert _{L^{2}\left(  m_{k+1}\right)  }\right)
,\ \ \ \ \ 0<t\leq1.\text{ }%
\]
For general $k,\ell,$ we apply the induction argument combined with $\left(
\ref{L-estimate}\right)  $ to conclude that%
\[
\left\Vert \nabla_{p}^{\ell}\nabla_{x}^{k}f\left(  t\right)  \right\Vert
_{L^{2}\left(  m_{\theta}\right)  }\lesssim t^{-\frac{\ell}{2}-\frac{3}%
{2}(k+\ell)}\left(  \left\Vert f_{in}\right\Vert _{L^{2}\left(  m_{k+\ell
}\right)  \ }+\left\Vert g_{in}\right\Vert _{L^{2}\left(  m_{k+\ell}\right)
\ }\right)
\]
for $0<t\leq1.$
\end{proof}

\begin{proposition}
\label{large-time-h} Let $f$ be a solution to equation $\left(  \ref{fBB}%
\right)  $ and let $k,\ \ell\in\mathbb{N}\cup\{0\}.\ $Then for $t\geq1,$%
\[
\left\Vert \nabla_{p}^{\ell}\nabla_{x}^{k}f(t)\right\Vert _{L^{2}(m_{\theta}%
)}\leq C\left(  \Vert f_{in}\Vert_{L^{2}\left(  m_{k+\ell}\right)  }%
+t^{1/2}\Vert g_{in}\Vert_{L^{2}\left(  m_{k+\ell}\right)  }\right)  ,
\]
the constant $C$ depending only upon $k$ and $\ell.$
\end{proposition}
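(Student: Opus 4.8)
The plan is to reproduce, for the inhomogeneous equation $(\ref{fBB})$, the Gronwall hierarchy that proves Proposition \ref{energy*}, carrying $L_{BA}g$ as a source. First I would note that $e^{t\mathcal{L}_{BB}}$ satisfies the same energy and regularization estimates as $e^{t\mathcal{L}_{AB}}$ (Lemma \ref{regularization}), and that $L_{BA}$ is a bounded integral operator commuting with $\partial_{x}$ and behaving, for coercivity purposes, like the integral part $\widetilde{K}^{BB}$ (Lemma \ref{LBA}); consequently the computations of Lemma \ref{energy-evolution} carry over after adding, to the evolution of $\|\nabla_{p}^{\ell}\nabla_{x}^{k}f\|_{L^{2}(m_{\theta})}^{2}$, the term $2\langle\langle p\rangle^{2\theta}\nabla_{p}^{\ell}\nabla_{x}^{k}L_{BA}g,\nabla_{p}^{\ell}\nabla_{x}^{k}\overline{f}\rangle_{p}$ and absorbing it with Young's inequality against the coercive term. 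Writing $H^{n}[f]$, $H_{x}^{n}[f]$ and $\mathcal{A}_{N}(t,f)$ for the functionals of Proposition \ref{energy*}, one arrives at
\[
\frac{d}{dt}H^{n}[f](t)\le -c\,H^{n}[f](t)+C\Big(H_{x}^{n-1}[f](t)+\|f(t)\|_{L^{2}}^{2}+\mathcal{G}_{n}(t)\Big),
\]
where $\mathcal{G}_{n}(t)$ collects weighted $p$-Sobolev norms of $g(t)$ of order at most $n$; this is the inequality of Proposition \ref{energy*} with the single extra, $f$-free source $\mathcal{G}_{n}$. The analogous inequality holds for $\mathcal{A}_{N}(t,f)$, with extra source a finite sum of weighted $p,x$-Sobolev norms of $g(t)$.

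Next I would assemble the ingredients that close these inequalities on $[1,\infty)$. From $(\ref{f-energy-BB2})$ and the $L^{2}$-contraction of $\mathbb{G_{AB}}^{t}$: $\|f(t)\|_{L^{2}}^{2}\lesssim\|f_{in}\|_{L^{2}}^{2}+t\|g_{in}\|_{L^{2}}^{2}$. From Proposition \ref{energy*} applied to the (homogeneous) $g$-equation: $\mathcal{G}_{n}(t)\lesssim\|g_{in}\|_{L^{2}(m_{n})}^{2}$ uniformly for $t\ge1$, the $p$-weight produced by $L_{BA}$ being absorbed into the definition of $m_{n}$ (for $\gamma\ge0$ there is no weight growth at all). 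From Lemma \ref{hx-regularityST}(ii): $\|\nabla_{x}^{j}f(t)\|_{L^{2}(m_{\theta})}^{2}\lesssim\|f_{in}\|_{L^{2}(m_{j})}^{2}+t\|g_{in}\|_{L^{2}(m_{j})}^{2}$ for $t\ge1$. And from Proposition \ref{short-time-h} at $t=1$: $H^{n}[f](1)$ and $\mathcal{A}_{N}(1,f)$ are $\lesssim(\|f_{in}\|_{L^{2}(m_{N+1})}+\|g_{in}\|_{L^{2}(m_{N+1})})^{2}$. Feeding the $x$-derivative bounds and the $g$-bounds into the $\mathcal{A}_{N}$-inequality and integrating from $1$ — using $\int_{1}^{t}e^{-c(t-s)}s\,ds\lesssim t$ to keep the $f$-contribution linear in $t$ — yields $\mathcal{A}_{N}(t,f)\lesssim\|f_{in}\|_{L^{2}(m_{N+1})}^{2}+t\|g_{in}\|_{L^{2}(m_{N+1})}^{2}$, whence $H_{x}^{n-1}[f](t)\lesssim\|f_{in}\|_{L^{2}(m_{n})}^{2}+t\|g_{in}\|_{L^{2}(m_{n})}^{2}$ for $t\ge1$.

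Finally I would integrate the Gronwall inequality for $H^{n}[f]$ over $[1,t]$: the exponential damping removes $H^{n}[f](1)$, and
\[
\int_{1}^{t}e^{-c(t-s)}\big(H_{x}^{n-1}[f](s)+\|f(s)\|_{L^{2}}^{2}+\mathcal{G}_{n}(s)\big)\,ds\lesssim\|f_{in}\|_{L^{2}(m_{n})}^{2}+t\|g_{in}\|_{L^{2}(m_{n})}^{2},
\]
again by $\int_{1}^{t}e^{-c(t-s)}s\,ds\lesssim t$. Taking square roots gives $\|\nabla_{p}^{\ell}f(t)\|_{L^{2}(m_{\theta})}\lesssim\|f_{in}\|_{L^{2}(m_{\ell})}+t^{1/2}\|g_{in}\|_{L^{2}(m_{\ell})}$; for $k\ge1$ one combines $\|\nabla_{p}^{\ell}\nabla_{x}^{k}f(t)\|_{L^{2}(m_{\theta})}^{2}\lesssim\mathcal{A}_{k+\ell-1}(t,f)$ with Lemma \ref{hx-regularityST}(ii) to obtain the full claim. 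The step I expect to be the main obstacle is the tracking of the $t$-growth: every source term here is at worst linear in $t$ (this linearity is inherited from the $t^{1/2}$ in the $x$-regularity of $f$, itself a consequence of the wave resonance between $g$ and $h$), and one must verify at each level of the hierarchy that the Gronwall convolution $\int_{1}^{t}e^{-c(t-s)}s\,ds\lesssim t$ does not let powers of $t$ accumulate, so that exactly one factor $t^{1/2}$ survives. A secondary point is to check that the $p$-weights contributed by $L_{BA}g$ stay at the sharp level $m_{k+\ell}$, which is exactly what the exponent $|\gamma|n$ in the definition of $m_{n}$ (resp.\ $w_{n}$) is designed to accommodate.
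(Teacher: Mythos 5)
Your proposal reproduces the paper's proof essentially verbatim: the same functionals $H^{n}[f]$, $H_{x}^{n}[f]$, $\mathcal{A}_{N}(t,f)$, the same differential inequalities obtained by carrying $L_{BA}g$ as an extra $f$-free source through Lemmas \ref{LBA} and \ref{energy-Guo}, the same inputs (Proposition \ref{short-time-h} at $t=1$, $(\ref{f-energy-BB2})$, Proposition \ref{energy*} for $g$, Lemma \ref{hx-regularityST}(ii)), and the same closing step $\|\nabla_{p}^{\ell}\nabla_{x}^{k}f\|_{L^{2}(m_{\theta})}^{2}\lesssim\mathcal{A}_{k+\ell-1}(t,f)$. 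The only difference is a harmless reordering (you close the $\mathcal{A}_{N}$ Gronwall before the $H^{n}$ one, the paper states $H^{n}$ first and then supplies the needed $H_{x}^{n-1}$ bound), which changes nothing.
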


\begin{proof}
Define
\[
H^{n}[f]\left(  t\right)  =\sum_{j=0}^{n}\left\Vert \nabla_{p}^{j}%
f(t)\right\Vert _{L^{2}(m_{\theta})}^{2}\ \ \text{and\ \ \ }H_{x}%
^{n}[f]\left(  t\right)  =\sum_{j=0}^{n}\left\Vert \nabla_{p}^{j}\nabla
_{x}f(t)\right\Vert _{L^{2}(m_{\theta})}^{2}.
\]
In view of Lemmas \ref{LBA} and \ref{energy-Guo},
\begin{align*}
\frac{d}{dt}\Vert\nabla_{p}^{\ell}f\Vert_{L^{2}\left(  m_{\theta}\right)
}^{2}  &  \lesssim-\Vert\nabla_{p}^{\ell}f\Vert_{L_{\sigma}^{2}(m_{\theta}%
)}^{2}+C\left(  \sum_{j=0}^{\ell-1}\Vert\nabla_{p}^{j}f\Vert_{L_{\sigma}%
^{2}(m_{\theta})}^{2}+\sum_{j=0}^{\ell-1}\Vert\nabla_{p}^{j}\nabla_{x}%
f\Vert_{L^{2}(m_{\theta})}^{2}\right) \\
&  +C\left(  \Vert f\Vert_{L^{2}}^{2}+\sum_{j=0}^{\ell}\Vert\nabla_{p}%
^{j}g\Vert_{L^{2}(m_{\theta})}^{2}\right)  .
\end{align*}
Hence, following a similar argument in Proposition \ref{energy*}, we deduce
\[
\frac{d}{dt}H^{n}[f]\left(  t\right)  \leq-cH^{n}[f]\left(  t\right)
+C\left(  H_{x}^{n-1}[f]\left(  t\right)  +\Vert f\left(  t\right)
\Vert_{L^{2}}^{2}+\sum_{j=0}^{n}\Vert\nabla_{p}^{j}g\left(  t\right)
\Vert_{L^{2}(m_{\theta})}^{2}\right)
\]
for some constants $c>0$ small and $C>0$ large. It implies that for $t\geq1$
\begin{align*}
H^{n}[f]\left(  t\right)   &  \leq e^{-c(t-1)}H^{n}[f]\left(  1\right)
+Ce^{-ct}\int_{1}^{t}e^{cs}\left(  \Vert f\left(  s\right)  \Vert_{L^{2}}%
^{2}+\sum_{j=0}^{n}\Vert\nabla_{p}^{j}g\left(  s\right)  \Vert_{L^{2}%
(m_{\theta})}^{2}\right)  ds\,\\
&  +Ce^{-ct}\int_{1}^{t}e^{cs}H_{x}^{n-1}[f]\left(  s\right)  ds\,\\
&  \lesssim\left(  \Vert f_{in}\Vert_{L^{2}\left(  m_{n}\right)  }^{2}+t\Vert
g_{in}\Vert_{L^{2}\left(  m_{n}\right)  }^{2}\right)  +e^{-ct}\int_{1}%
^{t}e^{cs}H_{x}^{n-1}[f]\left(  s\right)  ds,
\end{align*}
due to $\left(  \ref{f-energy-BB2}\right)  $ and Proposition
\ref{short-time-h}. Further, applying a similar argument to the same
functional $\mathcal{A}_{N}\left(  t,f\right)  $ defined in Proposition
\ref{energy*} gives
\begin{align*}
\frac{d}{dt}\mathcal{A}_{N}\left(  t,f\right)   &  \leq-c\mathcal{A}%
_{N}\left(  t,f\right) \\
&  +C\left(  \sum_{j=1}^{N+1}\left\Vert \nabla_{x}^{j}f(t)\right\Vert _{L^{2}%
}^{2}+\sum_{j=1}^{N+1}\left\Vert \nabla_{x}^{j}g(t)\right\Vert _{L^{2}%
(m_{\theta})}^{2}+\sum_{j=1}^{N}\sum_{q=1}^{N+1-j}\Vert\nabla_{p}^{j}%
\nabla_{x}^{q}g(t)\Vert_{L^{2}(m_{\theta})}^{2}\right)  ,
\end{align*}
so that%
\begin{align*}
\mathcal{A}_{N}\left(  t,f\right)   &  \leq e^{-c\left(  t-1\right)
}\mathcal{A}_{N}\left(  1,f\right) \\
&  +C\int_{1}^{t}e^{-c\left(  t-s\right)  }\left(  \sum_{j=1}^{N+1}\left\Vert
\nabla_{x}^{j}f(s)\right\Vert _{L^{2}}^{2}+\sum_{j=1}^{N+1}\left\Vert
\nabla_{x}^{j}g(t)\right\Vert _{L^{2}(m_{\theta})}^{2}+\sum_{j=1}^{N}%
\sum_{q=1}^{N+1-j}\Vert\nabla_{p}^{j}\nabla_{x}^{q}g(t)\Vert_{L^{2}(m_{\theta
})}^{2}\right)  ds\\
&  \lesssim\left(  \Vert f_{in}\Vert_{L^{2}\left(  m_{N+1}\right)  }%
^{2}+t\Vert g_{in}\Vert_{L^{2}\left(  m_{N+1}\right)  }^{2}\right)  .
\end{align*}
It implies that for $t\geq1$
\[
H_{x}^{n-1}[f]\left(  t\right)  \lesssim\mathcal{A}_{n-1}\left(  t,f\right)
\lesssim\left(  \Vert f_{in}\Vert_{L^{2}\left(  m_{n}\right)  }^{2}+t\Vert
g_{in}\Vert_{L^{2}\left(  m_{n}\right)  }^{2}\right)  .
\]
As a result, we have
\[
H^{n}[f]\left(  t\right)  \lesssim\left(  \Vert f_{in}\Vert_{L^{2}\left(
m_{n}\right)  }^{2}+t\Vert g_{in}\Vert_{L^{2}\left(  m_{n}\right)  }%
^{2}\right)  ,\
\]
and%
\[
\ \ H_{x}^{n}[f]\left(  t\right)  \lesssim\left(  \Vert f_{in}\Vert
_{L^{2}\left(  m_{n+1}\right)  }^{2}+t\Vert g_{in}\Vert_{L^{2}\left(
m_{n+1}\right)  }^{2}\right)  .
\]
Therefore, for $\ell\in\mathbb{N},$%
\[
\left\Vert \nabla_{p}^{\ell}f(t)\right\Vert _{L^{2}(m_{\theta})}%
\lesssim\left(  \Vert f_{in}\Vert_{L^{2}\left(  m_{\ell}\right)  }%
+t^{1/2}\Vert g_{in}\Vert_{L^{2}\left(  m_{\ell}\right)  }\right)
,\ \ \ \ t\geq1.
\]
On the other hand, for any $k,\ \ell\in\mathbb{N}$ we also have
\[
\left\Vert \nabla_{p}^{\ell}\nabla_{x}^{k}f(t)\right\Vert _{L^{2}(m_{\theta}%
)}\lesssim\sqrt{\mathcal{A}_{k+\ell-1}\left(  t,f\right)  }\lesssim\left(
\Vert f_{in}\Vert_{L^{2}\left(  m_{k+\ell}\right)  }+t^{1/2}\Vert g_{in}%
\Vert_{L^{2}\left(  m_{k+\ell}\right)  }\right)  ,\ \ \ \ t\geq1.
\]
Together with Lemma \ref{hx-regularityST} $\left(  ii\right)  ,$ the proof is completed.
\end{proof}

\subsection{Proof of Theorem \ref{theorem-h}}

Let $k,$ $\ell\in\mathbb{N}\cup\{0\}.$ It follows from Lemma
\ref{hx-regularityST}, Propositions \ref{short-time-h} and \ref{large-time-h} that for $0<t\leq1$%
\[
\left\Vert \nabla_{p}^{\ell}\nabla_{x}^{k}h_{L}\left(  t\right)  \right\Vert
_{L^{2}},\text{ }\left\Vert \nabla_{p}^{\ell}\nabla_{x}^{k}h_{S}\left(
t\right)  \right\Vert _{L^{2}}\lesssim t^{-\frac{\ell}{2}-\frac{3}{2}(k+\ell
)}\left\Vert g_{in}\right\Vert _{L^{2}\left(  w_{k+\ell}\right)  \ },\text{ }%
\]
and for $t\geq1$%
\[
\left\Vert \nabla_{p}^{\ell}\nabla_{x}^{k}h_{L}\left(  t\right)  \right\Vert
_{L^{2}},\text{ }\left\Vert \nabla_{p}^{\ell}\nabla_{x}^{k}h_{S}\left(
t\right)  \right\Vert _{L^{2}}\lesssim\left\Vert g_{in}\right\Vert
_{L^{2}\left(  w_{k+\ell}\right)  \ }.
\]
Owing to the fact that
\[
\left\Vert h_{S}\right\Vert _{L^{2}}\lesssim e^{-Ct}\left\Vert g_{in}%
\right\Vert _{L^{2}\ },
\]
it is easy to derive the regularity estimate on $h_{S}$ via the Sobolev
inequality and the interpolation inequality, i.e., for any $k,\ell
\in\mathbb{N\cup\{}0\}$,%
\begin{align*}
\left\Vert \nabla_{p}^{\ell}\nabla_{x}^{k}h_{S}\right\Vert _{L_{x}^{\infty
}L_{p}^{2}}  &  \leq\left\Vert \nabla_{p}^{\ell}\nabla_{x}^{k}h_{S}\right\Vert
_{L_{p}^{2}L_{x}^{\infty}}\lesssim\sqrt{\left\Vert \nabla_{p}^{\ell}\nabla
_{x}^{k+2}h_{S}\right\Vert _{L^{2}}\left\Vert \nabla_{p}^{\ell}\nabla
_{x}^{k+1}h_{S}\right\Vert _{L^{2}}}\\
&  \lesssim\left\Vert h_{S}\right\Vert _{L^{2}}^{2^{-\left(  k+\ell+2\right)
}}\left(  \left\Vert g_{in}\right\Vert _{L^{2}\left(  w_{k+\ell+2}\right)
\ }\right)  ^{1-2^{-\left(  k+\ell+2\right)  }}\\
&  \lesssim e^{-2^{-\left(  k+\ell+2\right)  }Ct}\left\Vert g_{in}\right\Vert
_{L^{2}\left(  w_{k+\ell+2}\right)  \ },
\end{align*}
for all $t\geq1.$ We complete the estimate for $h_{S}.$

Recall the regularization estimate of the homogeneous part of $h.$ To maintain
the decay rate of $h_{L}$ without imposing higher weights on the initial data,
we further refine the regularization estimate on $h_{\perp0}^{L}$ and
$h_{\perp\perp}^{L}$ before employing the Sobolev inequality and interpolation
trick. Specifically, we prove that for any $k,\ell\in\mathbb{N}\cup\{0\},$
\[
\Vert\nabla_{p}^{\ell}\nabla_{x}^{k}h_{\perp0}^{L}\Vert_{L^{2}}\lesssim
t^{1/2}\Vert g_{in}\Vert_{L^{2}},
\]
and
\[
\left\Vert \nabla_{p}^{\ell}\nabla_{x}^{k}h_{\perp\perp}^{L}(t)\right\Vert
_{L^{2}}\leq Ct^{1/2}\Vert g_{in}\Vert_{L^{2}\left(  w_{\ell}\right)  },
\]
for $t\geq1.$ Now, let $h_{\ast,0}^{L}=h_{00}^{L}+h_{\perp0}^{L}$ and then it
satisfies the equation:%
\begin{equation}\label{eq:improve**}
\left\{
\begin{array}
[c]{l}%
\displaystyle\pa_{t}h_{\ast,0}^{L}+\frac{1}{m_{B}}p\cdot\nabla_{x}h_{\ast
,0}^{L}=L_{BB}h_{\ast,0}^{L}+L_{BA}g_{L,0}\,,\quad(t,x,p)\in({\mathbb{R}}%
^{+},{\mathbb{R}}^{3},{\mathbb{R}}^{3})\,,\\[4mm]%
\displaystyle h_{\ast,0}^{L}(0,x,p)=0.
\end{array}
\right.
\end{equation}
Following the energy estimate in Proposition \ref{large-time-h}, together with
$m_{\theta}=1,$ gives
\begin{align*}
\frac{d}{dt}\mathcal{A}_{N}[h_{\ast,0}^{L}]\left(  t\right)   &
\leq-c\mathcal{A}_{N}[h_{\ast,0}^{L}]\left(  t\right)  \\
&  +C\left(  \sum_{j=1}^{N+1}\left\Vert \nabla_{x}^{j}h_{\ast,0}%
^{L}(t)\right\Vert _{L^{2}}^{2}+\sum_{j=1}^{N+1}\left\Vert \nabla_{x}%
^{j}g_{L,0}(t)\right\Vert _{L^{2}}^{2}+\sum_{j=1}^{N}\sum_{q=1}^{N+1-j}%
\Vert\nabla_{p}^{j}\nabla_{x}^{q}g_{L,0}(t)\Vert_{L^{2}}^{2}\right)  ,
\end{align*}
which implies that
\[
\mathcal{A}_{N}[h_{\ast,0}^{L}]\left(  t\right)  \lesssim t\Vert\left(
g_{L,0}\right)  _{in}\Vert_{L^{2}\left(  w_{N+1}\right)  }^{2}+\int_{1}%
^{t}e^{-c\left(  t-s\right)  }\left(  \sum_{j=1}^{N+1}\left\Vert \partial
_{x}^{j}h_{\ast,0}^{L}(s)\right\Vert _{L^{2}}^{2}+\sum_{j=1}^{N}\sum
_{q=1}^{N+1-j}\Vert\nabla_{p}^{j}\nabla_{x}^{q}g_{L,0}(s)\Vert_{L^{2}}%
^{2}\right)  ds.
\]
Combined%
\[
\Vert\left(  g_{L,0}\right)  _{in}\Vert_{L^{2}\left(  w_{n}\right)  }%
^{2}\lesssim\Vert g_{in}\Vert_{L^{2}}^{2}%
\]
with
\[
\left\Vert \nabla_{x}^{k}h_{\ast,0}^{L}\right\Vert _{L^{2}},\ \ \Vert
\nabla_{p}^{\ell}\nabla_{x}^{k}g_{L,0}\Vert_{L^{2}}\lesssim\Vert g_{in}%
\Vert_{L^{2}},\ \ t\geq1,
\]
it follows that%
\[
\mathcal{A}_{N}[h_{\ast,0}^{L}]\left(  t\right)  \lesssim t\Vert g_{in}%
\Vert_{L^{2}}^{2},\ \ t\geq1,
\]
which implies that
\[
\Vert\nabla_{p}^{\ell}h_{\ast,0}^{L}\Vert_{L^{2}}^{2}\lesssim H^{\ell}\left[
h_{\ast,0}^{L}\right]  \left(  t\right)  \lesssim t\Vert\left(  g_{L,0}%
\right)  _{in}\Vert_{L^{2}\left(  w_{\ell}\right)  }^{2}\lesssim t\Vert
g_{in}\Vert_{L^{2}}^{2},
\]
and%
\[
\Vert\nabla_{p}^{\ell}\nabla_{x}^{k}h_{\ast,0}^{L}\Vert_{L^{2}}\lesssim
\sqrt{\mathcal{A}_{\ell+k-1}[h_{\ast,0}^{L}]\left(  t\right)  }\lesssim
t^{1/2}\Vert g_{in}\Vert_{L^{2}}\,,
\]
for all $t\geq1.$ Hence,%
\[
\Vert\nabla_{p}^{\ell}\nabla_{x}^{k}h_{\perp0}^{L}\Vert_{L^{2}}\leq\Vert
\nabla_{p}^{\ell}\nabla_{x}^{k}h_{\ast,0}^{L}\Vert_{L^{2}}+\Vert\nabla
_{p}^{\ell}\nabla_{x}^{k}h_{00}^{L}\Vert_{L^{2}}\lesssim t^{1/2}\Vert
g_{in}\Vert_{L^{2}}\,.
\]
By the Sobolev inequality, we obtain
\[
\Vert\nabla_{p}^{\ell}\nabla_{x}^{k}h_{\perp0}^{L}\Vert_{L_{x}^{\infty}%
L_{p}^{2}}\lesssim\Vert\nabla_{p}^{\ell}\nabla_{x}^{k+2}h_{\perp0}^{L}%
\Vert_{L^{2}}^{1/2}\Vert\nabla_{p}^{\ell}\nabla_{x}^{k+1}h_{\perp0}^{L}%
\Vert_{L^{2}}^{1/2}\lesssim t^{1/2}\Vert g_{in}\Vert_{L^{2}}.
\]
Consequently, employing the interpolation inequality in $p$
(\textbf{NOT} in $x$) yields%
\begin{align}
\left\Vert \nabla_{p}^{\ell}\nabla_{x}^{k}h_{\perp0}^{L}\right\Vert
_{L_{x}^{\infty}L_{p}^{2}} &  \lesssim\left\Vert \nabla_{x}^{k}h_{\perp0}%
^{L}\right\Vert _{L_{x}^{\infty}L_{p}^{2}}^{1/2}\left\Vert \nabla_{p}^{2\ell
}\nabla_{x}^{k}h_{\perp0}^{L}\right\Vert _{L_{x}^{\infty}L_{p}^{2}}%
^{1/2}\nonumber\\
&  \lesssim\left\Vert \nabla_{x}^{k}h_{\perp0}^{L}\right\Vert _{L_{x}^{\infty
}L_{p}^{2}}^{\left(  1/2+1/4\right)  }\left\Vert \nabla_{p}^{4\ell}\nabla
_{x}^{k}h_{\perp0}^{L}\right\Vert _{L_{x}^{\infty}L_{p}^{2}}^{1/4}\nonumber\\
&  \lesssim\left\Vert \nabla_{x}^{k}h_{\perp0}^{L}\right\Vert _{L_{x}^{\infty
}L_{p}^{2}}^{\left(  1/2+1/4+\cdots+1/2^{q}\right)  }\left\Vert \nabla
_{p}^{2^{q}\ell}\nabla_{x}^{k}h_{\perp0}^{L}\right\Vert _{L_{x}^{\infty}%
L_{p}^{2}}^{2^{-q}}\nonumber\\
&  \lesssim\left(  1+t\right)  ^{-\frac{4+k}{2}\left(  1-2^{-q}\right)  }%
\cdot t^{2^{-q-1}}\left(  \left\Vert g_{in}\right\Vert _{L^{2}\ }+\Vert g_{in}\Vert
_{L_{x}^{1}L_{p}^{2}}\right)  \nonumber\\
&  =\left(  1+t\right)  ^{-\frac{k+3}{2}+2^{-q-1}\left(  5+k-2^{q}\right)
}\cdot\left(  \left\Vert g_{in}\right\Vert _{L^{2}\ }+\Vert g_{in}\Vert
_{L_{x}^{1}L_{p}^{2}}\right)  \nonumber\\
&  \leq\left(  1+t\right)  ^{-\frac{k+3}{2}}\cdot\left(  \left\Vert
g_{in}\right\Vert _{L^{2}\ }+\Vert g_{in}\Vert_{L_{x}^{1}L_{p}^{2}}\right)
,\label{I3-regularity}%
\end{align}
whenever $q\geq\left\lceil \log_{2}\left(  5+k\right)  \right\rceil .$

Finally, let $h_{\ast,\perp}^{L}=h_{0\perp}^{L}+h_{\perp\perp}^{L}$ and then
it satisfies the equation:
\begin{equation}
\left\{
\begin{array}
[c]{l}%
\displaystyle\pa_{t}h_{\ast,\perp}^{L}+\frac{1}{m_{B}}p\cdot\nabla_{x}%
h_{\ast,\perp}=L_{BB}h_{\ast,\perp}^{L}+L_{BA}g_{L,\perp}\,,\quad
(t,x,p)\in({\mathbb{R}}^{+},{\mathbb{R}}^{3},{\mathbb{R}}^{3})\,,\\[4mm]%
\displaystyle h_{\ast,\perp}^{L}(0,x,p)=0.
\end{array}
\right.
\end{equation}
Going through the proofs of Propositions \ref{short-time-h} and
\ref{large-time-h} with the fact that $\left\Vert \nabla_{p}^{\ell}\nabla
_{x}^{k}g_{L,\perp}(t)\right\Vert _{L^{2}}$ $\leq C\Vert g_{in}\Vert
_{L^{2}\left(  w_{\ell}\right)  }$, we find
\[
\left\Vert \nabla_{p}^{\ell}\nabla_{x}^{k}h_{\ast,\perp}^{L}(t)\right\Vert
_{L^{2}}\leq Ct^{1/2}\Vert g_{in}\Vert_{L^{2}\left(  w_{\ell}\right)
},\ \ t\geq1.
\]
Since
\[
\left\Vert \nabla_{p}^{\ell}\nabla_{x}^{k}h_{0\perp}^{L}(t)\right\Vert
_{L^{2}}\leq C\Vert g_{in}\Vert_{L^{2}},
\]
we improve
\[
\left\Vert \nabla_{p}^{\ell}\nabla_{x}^{k}h_{\perp\perp}^{L}(t)\right\Vert
_{L^{2}}\leq Ct^{1/2}\Vert g_{in}\Vert_{L^{2}\left(  w_{\ell}\right)
},\ \ t\geq1.
\]
Therefore,
\begin{align*}
\left\Vert \nabla_{p}^{\ell}\nabla_{x}^{k}h_{\perp\perp}^{L}\right\Vert
_{L_{x}^{\infty}L_{p}^{2}}  &  \lesssim\sqrt{\left\Vert \nabla_{p}^{\ell
}\nabla_{x}^{k+2}h_{\perp\perp}^{L}\right\Vert _{L^{2}}\left\Vert \nabla
_{p}^{\ell}\nabla_{x}^{k+1}h_{\perp\perp}^{L}\right\Vert _{L^{2}}}\\
&  \lesssim\sqrt{\left\Vert \nabla_{x}^{k+1}h_{\perp\perp}^{L}\right\Vert
_{L^{2}}^{2^{-\ell}}\left(
{\displaystyle\prod\limits_{j=2}^{\ell}}
\left\Vert \nabla_{p}^{j}\nabla_{x}^{k+1}h_{\perp\perp}^{L}\right\Vert
_{L^{2}}^{2^{-\left(  \ell-j+2\right)  }}\right)  \left\Vert \nabla_{p}%
^{\ell+1}\nabla_{x}^{k+1}h_{\perp\perp}^{L}\right\Vert _{L^{2}}^{1/2}%
\left\Vert \nabla_{p}^{\ell}\nabla_{x}^{k+2}h_{\perp\perp}^{L}\right\Vert
_{L^{2}}}\\
&  \lesssim e^{-2^{-\left(  \ell+2\right)  }a(\tau)t}\Vert g_{in}\Vert
_{L^{2}\left(  w_{\ell+1}\right)  },
\end{align*}
This completes the proof of Theorem \ref{theorem-h}.

\end{document}